\newcommand{\Ra}[1]{{\color{black} {#1}}}
\newcommand{\Rb}[1]{{\color{black} {#1}}}
\newcommand{\Rc}[1]{{\color{black} {#1}}}
\newcommand{\Ncomb}[4]{\Ra{|\set{Z}_{#1}^{#2}\cap \set{Z}_{#3}^{#4}|}}
\newcommand{\Nsing}[2]{\Ra{|\set{Z}_{#1}^{#2}|}}
\newcommand{\leqnomode}{\tagsleft@true}
\newcommand{\reqnomode}{\tagsleft@false}
\setlist[enumerate]{leftmargin=.5in}
\setlist[itemize]{leftmargin=.5in}
\crefname{hypothesis}{Hypothesis}{Hypotheses}
\title{Covariance Expressions for Multi-Fidelity Sampling 
 with Multi-Output, Multi-Statistic Estimators: Application to Approximate Control Variates \thanks{Submitted to the editors on October 12, 2023.
\funding{This work was funded by the NASA Project: Entry Systems Modeling under grant no. 80NSSC22K1007.}}}
\author{Thomas O. Dixon \thanks{Department of Aerospace Engineering, University of Michigan, Ann Arbor, MI 
  (\email{tdixono@umich.edu}, \email{goroda@umich.edu}).}
\and James E. Warner \thanks{Durability, Damage Tolerance, and Reliability Branch, NASA Langley Research Center, Hampton, VA  
  (\email{james.e.warner@nasa.gov}, \email{geoffrey.f.bomarito@nasa.gov}).}
  \and Geoffrey F. Bomarito \footnotemark[3]
\and Alex A. Gorodetsky \footnotemark[2] }
\newcommand{\reals}{\mathbb{R}}
\newcommand{\mat}[1]{\bm{\mathbf{#1}}}
\renewcommand{\vec}[1]{\bm{\mathbf{#1}}}
\newcommand{\var}[1]{\mathbb{V}ar\left[#1\right]}
\newcommand{\cov}[1]{\mathbb{C}ov\left[#1\right]}
\newcommand{\ee}[1]{\mathbb{E}\left[#1\right]}
\newcommand{\un}[1]{\underline{#1}}
\newcommand{\cv}{\Tilde{\bold Q}}
\newcommand{\set}[1]{\mathcal{#1}}
\newcommand\scalemath[2]{\scalebox{#1}{\mbox{\ensuremath{\displaystyle #2}}}}
\newtheorem{prop}{Proposition}[section]
\begin{document}

\maketitle

\begin{abstract}
We provide a collection of results on covariance expressions between Monte Carlo based multi-output mean, variance, and Sobol main effect variance estimators from an ensemble of models. These covariances can be used within multi-fidelity uncertainty quantification strategies that seek to reduce the estimator variance of high-fidelity Monte Carlo estimators with an ensemble of low-fidelity models. Such covariance expressions are required within approaches like the approximate control variate and multi-level best linear unbiased estimator. While the literature provides these expressions for some single-output cases such as mean and variance, our results are relevant to both multiple function outputs and multiple statistics across any sampling strategy. Following the description of these results, we use them within an approximate control variate scheme to show that leveraging multiple outputs can dramatically reduce estimator variance compared to single-output approaches. Synthetic examples are used to highlight the effects of optimal sample allocation and pilot sample estimation. A flight-trajectory simulation of entry, descent, and landing is used to demonstrate multi-output estimation in practical applications.
\end{abstract}

\begin{keywords}
uncertainty quantification, variance reduction, multifidelity, approximate control variate, Monte Carlo estimation, sensitivity analysis
\end{keywords}

\begin{MSCcodes}
65C05, 62H12
\end{MSCcodes}

\section{Introduction}

Estimating statistics of simulation models is of primary concern in uncertainty quantification. However, sampling strategies for estimation are often plagued by slow convergence. For example, the variance of a Monte Carlo (MC) mean estimator is proportional to the inverse of the number of model evaluations, requiring an order of magnitude more samples per digit of accuracy. As a result, the large number of sample evaluations required for accurate estimation becomes prohibitive when the underlying model is computationally burdensome. In this paper, we consider variance reduction techniques that reduce this cost by leveraging ensembles of correlated multi-output models for multiple statistics at once.

We focus on multi-fidelity sampling strategies that extract information from models of varying fidelities to reduce the variance of a baseline estimator without introducing bias. These lower fidelity models can take a hierarchical form, for example arising from a hierarchy of discretizations of a finite-element PDE approximation \cite{MLMC1, Grid}, or they may be unstructured and include simulations with different physics and/or surrogates \cite{BLMF, Traj}. In the context of multi-fidelity variance reduction, we focus on control-variate (CV) methods \cite{CV1, CV, MVCV}. Examples of CV methods include the multi-level MC (MLMC) estimator \cite{MLMC, MLMC1}; the multi-fidelity MC (MFMC) \cite{MFMC1}, and more generally appproximate control variates (ACV)~\cite{ACV}. While MLMC and MFMC require a distinct sampling structure of the ensemble of models, potentially limiting achievable variance reduction, the ACV method provides a general framework for distributing samples amongst models. More recently, the multi-level best linear unbiased estimator (MLBLUE) provides an alternate method to allocate samples based on estimator and model groupings \cite{MLBLUE1}, but can also be interpreted under the ACV framework \cite{GroupACV}. 

Effectively leveraging multiple fidelities of models requires knowledge of the covariance between all models involved. As such, all of the above approaches require the prior knowledge of the covariance between the ensemble of high and low-fidelity estimators. These estimator covariances are intimately tied to the statistics being estimated. A majority of the literature focuses on mean estimation of scalar-valued functions \cite{CV, CV1, ACV, Param, MFMC1, MLMC}. Some works on other statistics such as the variance~\cite{MFMC,HDMFMC,MLBLUE_E}, Sobol indices \cite{MFMC, CVSobol,HDMFMC,MLMCSobol}, and quantiles \cite{CVQE}, also exist, but focus on single-statistic estimators. Note that MLMC does not require estimator covariances by making the strong assumption of perfect correlation amongst models, and, as a result, can yield sub-optimal choices of CV weights when the models are not perfectly correlated \cite{ACV}. 

In the case of mean estimation, the covariance between MC estimators of each model is easily related to the covariances of the underlying models themselves \cite{ACV, Param}. In practice, these model covariances are generally unknown, but estimated via some pilot sampling procedure. Pilot sample estimation can be performed with a fixed number of samples or through more adaptive or robust schemes \cite{CVR}. For example, an exploration-exploitation approach can be taken to minimize the total cost of model evaluations by determining when to stop estimating the model covariances \cite{BLMF}. Another approach directly estimates the covariance of the estimators by creating an ensemble of ACV estimators, each with a different set of samples~\cite{EACV1}. For other statistics, such as probablility, quantile, or Sobol index estimation, the covariance between estimators is generally unavailable \cite{MFMC,CVQE,CVSobol,HDMFMC,MLMCSobol}. For variance and Sobol index estimation, \cite{MFMC} finds the optimal weights for mean estimation and applies them to high-order statistic estimation. In \cite{CVSobol} and \cite{MLMCSobol}, perfect correlation between estimators is assumed for Sobol index estimation, disregarding the estimator covariance requirement, but resulting in sub-optimal CV estimation. Finally, \cite{HDMFMC} numerically estimates the covariance between estimators directly for variance and Sobol indices to find the optimal CV weights. One of the principal aims of this paper is to introduce the analytic covariances for these additional estimators to improve CV efficiency.

A second issue that we consider is models with multiple outputs --- the majority of the above approaches are applied to models with single outputs. Extending these approaches to vector-valued functions requires additional covariance expressions. Current state-of-the-art estimation techniques that use multiple quantities of interest (QoIs) construct one estimator for each QoI \cite{Traj, HDMFMC, MLBLUE, MLBLUE_E}. While creating individual estimators is a simple technique, the correlations between the QoIs are lost, which leads to limited variance reduction. Indeed, in the context of classical CVs, Rubinstein and Marcus \cite{MVCV} show that the correlation between model outputs can be extracted by including vector-valued functions in a single estimator to further reduce the estimator variance. We extend these results to the ACV context. Additionally, estimating multiple statistics in a single estimator can lead to further sources of correlation, which can be extracted to reduce the variance of both statistics' estimators. We newly introduce approaches to leverage multi-statistic information here.

In the context of multi-output mean estimation, a recent approach using the MLBLUE estimator was introduced to indeed extract model output correlations for vector-valued mean estimation \cite{MLBLUE_E}. A covariance matrix estimation approach was also introduced, but lacked the capability of extracting correlations between model outputs in this case. Similarly to \cite{MLBLUE}, independent MLBLUE estimators were stacked into a matrix for vector-valued estimation. The approaches in this paper are applicable to further extending these MLBLUE results to take advantage of the correlations between model outputs for covariance matrix estimation. Similarly, this work introduces multi-statistic estimators for mean, variance, and Sobol indices which can further be applied to MLBLUE estimation.

We now summarize our contributions.
First, we derive estimator covariances for multiple statistics and vector-valued functions for several important cases of interest that can be utilized in the majority of multi-fidelity sampling strategies. Propositions \ref{meancov} and \ref{var_cov} provide the covariance between mean estimators and variance estimators, respectively, for vector-valued functions. Proposition \ref{MeanCov} provides the covariance between the mean and variance estimators for simultaneous mean and variance estimation.

Second, we derive estimator covariances for all the main effect variances of scalar-valued functions for use in Sobol indices for global sensitivity analysis. The covariance between main effect variance estimators of similar/different indices is seen in Proposition \ref{SobolEstCov2}. Similarly, Proposition \ref{VarSobolCov} provides the covariance between the variance and main effect variance estimators since the total variance of the model is required for Sobol index estimation. These covariances allow multiple Sobol indices to be estimated simultaneously, providing a thorough sensitivity analysis across multiple inputs. 

Finally, while these results can be adapted to several schemes, we utilize them to introduce the multi-output ACV (MOACV) estimator. This estimator can simultaneously estimate multiple statistics for vector-valued functions. We provide a number of empirical results that demonstrate that the MOACV estimator outperforms individual ACV estimators. As part of these results, we demonstrate that the newly derived estimator covariances for mean estimation do not require substantially more pilot samples than traditional ACV estimation. Finally, the MOACV estimator is tested on a realistic application of trajectory estimation for entry, descent, and landing (EDL) . The numerical results demonstrate significant further variance reduction compared to existing results. 

The rest of this paper is structured as follows. Section \ref{sec:background} introduces MC sampling and the multi-output ACV theory. Section \ref{sec:Body} provides the introduced estimator covariances and how to apply them to the ACV techniques. The results in Section \ref{sec:numerical} demonstrate the MOACV capabilities on analytical examples. Finally, Section \ref{sec:application} applies the MOACV estimator to the EDL application.

\section{Background}
\label{sec:background}
In this section, we introduce notation, the core sampling-based estimators, and multi-fidelity variance reduction approaches.

\subsection{Notation}
The following notation is used throughout the manuscript. 
Matrices and vectors are denoted by bold-faced Roman letters. Each element of a matrix $\mat{F} \in \reals^{A \times B}$ is denoted as $F_{ab}$ for $\{a,b\} \in \{0, 1,\ldots, A-1 \}\times \{0, 1,\ldots, B-1\}$. Similarly each element of a vector  $\vec{g}\in \reals^A$ is denoted by $g_a$ for $a\in \{0, 1,\ldots, A-1 \}$. We denote a matrix of ones with size $A\times B$ as $\bold 1_{A\times B}$. Generally, block matrices use an underline to denote that the block structure is important. If $\un{\mat{F}}$ is an $A \times B$ block matrix, then its blocks are denoted by $\mat{F}_{ab}.$ 

The Kronecker product between vectors $\vec{X}, \vec{Y} \in \reals^D$ is treated as a flattened outer product $\vec{X} \otimes \vec{Y} = \textrm{vec}(\vec{X} \vec{Y}^T).$ 
The element-wise product between two vectors or matrices is written as  $\vec{X} \circ \vec{Y}$. The square of these two operations uses the following shorthand for both vectors and matrices $\vec{X}^{\otimes 2} \equiv \vec{X}\otimes\vec{X} $ and $ \bold X^{\circ 2} \equiv \bold X \circ \bold X$, respectively. 
Sets are denoted via upper case calligraphic letters such as $\set{Z}.$ \Ra{We denote the size of $\set{Z}$ to be $\Nsing{}{}$ and the size of the intersection between two sets, $\set{X}$ and $\set{Y}$ as $|\set{X}\cap \set{Y}|$.}

\Ra{Let the covariance between two vectors be defined as $\cov{\vec{X},\vec{Y}} = \mathbb{E}[(\mathbf{X}-\mathbb{E}[\mathbf{X}])(\mathbf{Y}-\mathbb{E}[\mathbf{Y}])^T]$, where the variance is a special case, $\mathbb{V}ar[\mathbf{X}] = \mathbb{C}ov[\mathbf{X},\mathbf{X}]$.}

\subsection{Monte Carlo Estimators}
\label{sec:MCests}

\Ra{Consider a probability space ($\Omega$, $\set{F}$, $\mathbb{P}$), and let the random variable $\vec{z} \in \reals^I$ denote an $\reals^I$-valued random vector having law $\mathbb{P}_{\vec{z}}$ defined on this probability-space. Furthermore, let $\Ra{\vec{f}}: \reals^I \to \reals^D$ be a function with input $\vec{z}$, such that $\Ra{\vec{f}}(\vec{z})$ becomes an $\reals^D$-valued random vector with finite mean $\mu \equiv \ee{\Ra{\vec{f}(\vec{z})}} \in \reals^D$ and finite covariance $V \equiv \var{\Ra{\vec{f}} \Ra{(\vec{z})}} \in \reals^{D \times D}$. The $N$-sample MC estimator of the mean is defined using a set of $N$ independent and identically distributed random variables $\set{Z} = \{\vec{z}^{(s)} ; s=1,\ldots,N \}$, each with law $\mathbb{P}_{\vec{z}}$, according to}
\begin{align}
\vec{Q}_{\mu}(\set{Z}) = \frac{1}{N}\sum_{s=1}^N \Ra{\vec{f}}(\vec{z}^{(s)}),
\label{eq:meanMC}
\end{align}
\Ra{so that $\vec{Q}_{\mu}(\set{Z})$ is an $\reals^D$-valued random vector.}
This estimator is unbiased so that \Ra{$\ee{\vec{Q}_{\mu}(\set{Z})} = \mu$,} and it has variance  
 \Ra{$\var{\mat{Q}_{\mu}\Ra{(\set{Z})}} = V / N$.}
A MC estimator for the covariance is
\begin{align}
    \mat{Q}_{V}(\set{Z}) &= \frac{1}{N-1}\sum^N_{s=1} \left(\Ra{\vec{f}}(\vec{z}^{(s)}) - \mat{Q}_{\mu}(\set{Z}) \right)^{\otimes 2} 
 = \frac{1}{2N(N-1)}\sum^N_{s=1}\sum^N_{t=1} \left( \Ra{\vec{f}}(\vec{z}^{(s)})-\Ra{\vec{f}}(\vec{z}^{(t)}) \right)^{\otimes 2},
    \label{eq:varianceMC}
\end{align}
where $\mat{Q}_{V}\Ra{(\set{Z})}$ is a flattened estimate of the covariance matrix, \Ra{becoming a $\reals^{D^2}$-valued random vector.} Its variance $\var{\mat{Q}_{V}\Ra{(\set{Z})}} \in \reals^{D^2 \times D^2}$ is
\begin{align}
\scalemath{1}{
        \var{\mat{Q}_{V}\Ra{(\set{Z})}}} &= \scalemath{1}{
    \frac{1}{N(N-1)}\left[ \var{\Ra{\vec{f}}\Ra{(\vec{z})}}^{\otimes 2} + \left( \bold 1_D^T \otimes \var{\Ra{\vec{f}}\Ra{(\vec{z})}} \otimes \bold 1_D \right) \circ \left(\bold 1_D \otimes \var{\Ra{\vec{f}}\Ra{(\vec{z})}} \otimes \bold 1_D^T \right)\right]} \nonumber\\
    &\quad \scalemath{1}{+ \frac{1}{N}\mathbb{V}ar\left[\left(\Ra{\vec{f}}\Ra{(\vec{z})}- \ee{\Ra{\vec{f}}\Ra{(\vec{z})}}\right)^{\otimes 2}\right]},
    \label{eq:varMCvar}
\end{align}
\Ra{which} follows from Proposition \ref{var_cov}. 

Finally, we consider MC estimators for main effect Sobol sensitivity indices. To this end, the ANOVA decomposition~\cite{ANOVA} of the variance of \Ra{a scalar-valued} $f\Ra{(\vec{z})}$ is
\begin{align}
    \mathbb{V}ar[f\Ra{(\vec{z})}] =  \sum_{u=1}^I V_u +  \sum_{u,v; u>v}^I V_{uv} + \sum_{u,v,w; u>v>w}^I V_{uvw} + \cdots ,
\end{align}
where $V_u = \mathbb{V}ar_{\Ra{z}_{u}}[\mathbb{E}_{\Ra{\vec{z}}_{\sim u}}[f(\Ra{\vec{z}})|\Ra{z}_{u}]]$ \Ra{such that} $\Ra{z}_{u}$ is the $u$-th input variable \Ra{and $\vec{z}_{\sim u}$ is a vector of all input variables, excluding the $u$-th input.} The ANOVA decomposition separates the variance into terms attributed to the function's inputs. One sensitivity measure is the global sensitivity index, or Sobol index \cite{Sobol}, $s_{u_1\cdots u_I} = \frac{V_{u_1,\ldots,u_I}}{V}$
which is the percentage of variance attributed to the corresponding term of the ANOVA decomposition. 

In this paper, we  focus on the main effect sensitivity indices $s_u=\frac{V_u}{V}.$ The Sobol estimator for the main effect can be obtained using two sets of \Ra{independent and identically distributed random variables}: $\Ra{\set{X}} = \{ \Ra{\vec{x}}^{(s)}; s=1,\ldots,N \}$, and $\set{Y}_u = \{\vec{y}_u^{(s)}; s=1,\ldots,N \}$, where \Ra{$\set{X}$ and }$\set{Y}_u$ \Ra{contain independent random variables} with the exception of the $u$-th input, i.e., $\vec{y}^{(s)}_{u} = (y^{(s)}_1, y^{(s)}_2, \cdots, \Ra{x}^{(s)}_u, \cdots, y^{(s)}_I)^T$ for $s=1,\ldots,N$. Using these sets\Ra{, $\set{Z} = \{ \set{X}, \set{Y}_u \}$,} the estimator for $V_u$ is 
\begin{align}
\scalemath{0.9}{
    \mat{Q}_{V_u}(\set{Z}) = \frac{1}{N}\sum_{s=1}^N f(\Ra{\vec{x}}^{(s)})f(\vec{y}^{(s)}_{u}) - \left( 
\frac{1}{N}\sum_{s=1}^N f(\Ra{\vec{x}}^{(s)}) \right)^2 = \frac{1}{N^2}\sum_{s=1}^N\sum_{t=1}^N \left[f(\Ra{\vec{x}}^{(s)})f(\vec{y}^{(s)}_{u}) - f(\Ra{\vec{x}}^{(s)})f(\Ra{\vec{x}}^{(t)}) \right] }.
\label{eq:Sobolest}
\end{align}
These estimators have a bias of $-\mathbb{V}ar[f\Ra{(\vec{z})}]/N$ \cite{SobolBias}, but are used for their simplicity. The variance of the Sobol estimator is
\begin{align}
        \mathbb{V}ar[\mat{Q}_{V_u}\Ra{(\set{Z})}] &= \frac{1}{N^3}\left[ (N-1)^2\mathbb{V}ar\left[f\Ra{(\vec{x})}f\Ra{(\vec{y}_u)}-2f\Ra{(\vec{x})}\mathbb{E}[f\Ra{(\vec{x})}]\right]  \right. \notag\\
        &\quad \left. + 2(N-1) \mathbb{C}ov[f\Ra{(\vec{x})}f\Ra{(\vec{y}_u)}-2f\Ra{(\vec{x})}\mathbb{E}[f\Ra{(\vec{x})}], f\Ra{(\vec{x})} f\Ra{(\vec{y}_u)}-f\Ra{(\vec{x})}^2] \right. \notag\\
    &\quad + \mathbb{V}ar[ f\Ra{(\vec{x})}f\Ra{(\vec{y}_u)}-f\Ra{(\vec{x})}^2] + 2(N-1)\mathbb{V}ar[f\Ra{(\vec{x})}]^2\left. \right].
    \label{eq:meMCvar}
\end{align}
To the best of our knowledge, Equation \eqref{eq:meMCvar} is introduced in this paper and follows from Proposition \ref{SobolEstCov2}.

\Rb{The form of the main-effect variance estimator in Equation \eqref{eq:Sobolest} was chosen because it was the most straightforward to calculate the covariance between two estimators of this type. There are, however, other forms of the main effect variance estimator, such as the Satelli, Janon, or Owen estimators \cite{MFMC} that may have reduced variance compared to Equation \eqref{eq:Sobolest}. To find the optimal ACV weights, we would need to find the covariance between estimators of these forms. The derivation would follow a similar structure to the proofs in Appendix \ref{appendix:SobolEstCov}. }

\subsection{Multi-Output Control Variates}
\label{sec:cv}
The estimator variances described above all decrease at a rate of $1/ N$, which is prohibitive for expensive function evaluations. \Rb{While the variance of multi-fidelity estimators based on linear control variates that we consider here also decay at this rate, } variance reduction methods reduce this expense \Rb{by changing the constant factor}. CV approaches reduce variance by leveraging additional estimators with known statistics \cite{CV}.

\Ra{Let $\set{Z} = \{\set{Z}_0,\set{Z}^*\}$ denote a set of inputs.} Let \Ra{$\vec{Q}\Ra{(\set{Z}_0)}$ be an $\reals^D$-valued estimator,} and let $\vec{Q}^*\Ra{(\set{Z}^*)}$ denote \Ra{an $\reals^E$-valued} random vector with known mean $\vec{q}^* \equiv \ee{\vec{Q}^*\Ra{(\set{Z}^*)}}$. The CV estimator $\cv\Ra{(\set{Z})}$ is defined by  
\begin{align}
    \cv(\Ra{\set{Z}, }\mat{\alpha}) &= \vec{Q}\Ra{(\set{Z}_0)} + \mat{\alpha} (\vec{Q}^*\Ra{(\set{Z}^*)} - \vec{q}^*) 
    = \vec{Q}\Ra{(\set{Z}_0)} + \mat{\alpha}\un{\vec{\Delta}}\Ra{(\set{Z}^*)},
\end{align}
where $\mat{\alpha} \in \reals^{D\times E}$ is a matrix of weights and $\un{\vec{\Delta}}\Ra{(\set{Z}^*)} \equiv \vec{Q}^*\Ra{(\set{Z}^*)} - \vec{q}^*$. This new estimator $\Tilde{\bold Q}\Ra{(\set{Z})}$, has the same mean as $\bold Q \Ra{(\set{Z}_0)}$. Furthermore, its variance is
 \begin{align}
 \scalemath{0.9}{
     \mathbb{V}ar[\Tilde{\bold Q}\Ra{(\set{Z})}](\mat{\alpha}) = 
     \mathbb{V}ar[\bold Q\Ra{(\set{Z}_0)}] + \mat{\alpha} \mathbb{V}ar[\un{\vec{\Delta}}\Ra{(\set{Z}^*)}]\mat{\alpha}^T + \mathbb{C}ov[\bold Q\Ra{(\set{Z}_0)}, \un{\vec{\Delta}}\Ra{(\set{Z}^*)}]\mat{\alpha}^T + \mat{\alpha}\mathbb{C}ov[\un{\vec{\Delta}}\Ra{(\set{Z}^*)},\bold Q\Ra{(\set{Z}_0)}]. }
     \label{eq:MFempvar}
 \end{align}
 The weights $\mat{\alpha}$ can be chosen to minimize some scalar-valued measure of the uncertainty represented by this variance. Rubinstein and Marcus \cite{MVCV} minimize the determinant, yielding
 \begin{align}
 \scalemath{0.9}{
     \mat{\alpha}^* = -\mathbb{C}ov[\bold Q\Ra{(\set{Z}_0)}, \un{\vec{\Delta}}\Ra{(\set{Z}^*)}]\mathbb{V}ar[\un{\vec{\Delta}}\Ra{(\set{Z}^*)}]^{-1}}
     \label{eq:OptWei}
\end{align}
with variance
\begin{align}
      \scalemath{0.9}{
          \mathbb{V}ar[\Tilde{\bold Q}\Ra{(\set{Z})}] = \mathbb{V}ar[\bold Q\Ra{(\set{Z}_0)}] -\mathbb{C}ov[\bold Q\Ra{(\set{Z}_0)}, \un{\vec{\Delta}}\Ra{(\set{Z}^*)}]\mathbb{V}ar[\un{\vec{\Delta}}\Ra{(\set{Z}^*)}]^{-1}\mathbb{C}ov[\bold Q\Ra{(\set{Z}_0)}, \un{\vec{\Delta}}\Ra{(\set{Z}^*)}]^T}.
     \label{eq:OptVar}
 \end{align}
The determinant of the variance \Ra{is} $|\mathbb{V}ar[\Tilde{\bold Q}\Ra{(\set{Z})}]| = \left|\mathbb{V}ar[\vec{Q}\Ra{(\set{Z}_0)}]\right|~\left[\prod_{d=1}^{\min(D,E)} (1 - \rho_d^2)\right] $
 where $\{\rho_d\}$ are the canonical correlations between $\vec{Q}\Ra{(\set{Z}_0)}$ and $\un{\vec{\Delta}}\Ra{(\set{Z}^*)}$ ~\cite{MVCV}. \Ra{Canonical correlations between two vectors, $\vec{X}$ and $\vec{Y}$, are found by maximizing the correlation $\rho$ between their linear combinations, $\vec{U} = a^T\vec{X}$ and $\vec{V}=b^T\vec{Y}$ where $a$ and $b$ are coefficient vectors \cite{CCA}.} Clearly, greater (anti)-correlations yield greater reductions in variance. 
 
\Ra{The minimization of the determinant of the variance is equivalent to minimizing the confidence region volume of the estimator \cite{MVCV}. While Rubinstein and Marcus provide a proof of Equation \eqref{eq:OptWei} by minimizing the variance's determinant, Equation \eqref{eq:OptWei} is also optimal when minimizing the trace of the estimator variance (proof in Appendix \ref{sec:ACVTrace}).}

If $\Tilde{\bold Q}\Ra{(\set{Z})}$ \Rc{contains random variables that are linear combinations of each other, then $\var{\vec{\un{\Delta}}\Ra{(\set{Z}^*)}}$ becomes singular in Equation \eqref{eq:OptWei}.} 
\Rc{This situation is guaranteed to occur when considering estimation of the covariance matrix} \Rc{in Equation \eqref{eq:varianceMC} because a covariance matrix (and its corresponding estimator) has duplicate entries in the upper and lower triangular portions of the matrix.}
\Rc{For example, a 2$\times$2 variance matrix
\begin{align}
    \scalemath{0.9}{\var{\vec{X}} = \begin{bmatrix}
        \cov{X_1, X_1} & \cov{X_1, X_2} \\
        \cov{X_2, X_1} & \cov{X_2, X_2}
    \end{bmatrix} \quad \Longrightarrow\quad \begin{bmatrix}
        \cov{X_1, X_1} \\
        \cov{X_1, X_2} \\
        \cov{X_2, X_1} \\
        \cov{X_2, X_2}
    \end{bmatrix}}
\label{eq:covexample}
\end{align}
contains duplicate entries where $\cov{X_1, X_2} = \cov{X_2, X_1}$, causing the variance matrix to be singular.} 
The troubles of inverting a singular $\var{\vec{\un{\Delta}}\Ra{(\set{Z}^*)}}$ can be avoided \Rc{entirely} by removing duplicate outputs\Rc{, such as $\cov{X_2, X_1}$ in Equation \eqref{eq:covexample}.}

In the context of estimating statistics of computational models, the random variables $\mat{Q}\Ra{(\set{Z}_0)}$ and $\un{\vec{\Delta}}\Ra{(\set{Z}^*)}$ are the estimators using high- and low-fidelity models, respectively. In the uncertainty quantification problem, $\un{\vec{\Delta}}\Ra{(\set{Z}^*)}$ typically arises from an ensemble of $K$ lower-fidelity estimators $\left(\vec{Q}_{k}\Ra{(\set{Z}_k^*)}\right)_{k=1}^K$, \Ra{where $\set{Z}^* = \{\set{Z}^*_1,\ldots,\set{Z}^*_K\}$,} according to\footnote{In this work, we assume for simplicity that all models share the same number of outputs. This assumption, however, can easily be disregarded by changing the shapes of the defined covariance matrices. The theory and results that follow can be easily modified to allow for varying quantities of model outputs.}
\begin{align}
    \scalemath{0.9}{\un{\vec{\Delta}}\Ra{(\set{Z}^*)} = \begin{bmatrix}
        \bold \Delta_1\Ra{(\set{Z}_1^*)} \\
        \vdots \\
        \bold \Delta_K\Ra{(\set{Z}_K^*)}
    \end{bmatrix} = \begin{bmatrix}
        \vec{Q}_1\Ra{(\set{Z}_1^*)} - \ee{\vec{Q}_1\Ra{(\set{Z}_1^*)}} \\
        \vdots \\
        \vec{Q}_K\Ra{(\set{Z}_K^*)} - \ee{\vec{Q}_K\Ra{(\set{Z}_K^*)}}
    \end{bmatrix} }
    \label{eq:delta_multiple}
\end{align}
\Ra{where $\un{\vec{\Delta}}\Ra{(\set{Z}^*)}$ is a $\reals^{DK}$-valued random vector, such that $E=DK$.}

\subsection{Multi-Output Approximate Control Variates} \label{sec:moacv}

In the UQ setting, $(\ee{\vec{Q}_k\Ra{(\set{Z}_k^*)}})_{k=1}^K$ \Ra{in Equation~\eqref{eq:delta_multiple}} are unknown. One approach to overcome this issue is to introduce new estimators for these terms and form an approximate control variate (ACV)~\cite{ACV}. The ACV estimators have only been defined in the scalar-function context, but we extend them here to vector-valued estimators by following the same ideas as in Section \ref{sec:cv}:
\begin{align}
    \Tilde{\bold Q}(\mat{\alpha}, \set{Z}) &= \bold Q(\set{Z}_0) + \mat{\alpha} \begin{bmatrix}
        \bold Q_1(\set{Z}_1^*) - \bold Q_1(\set{Z}_1) \\
        \vdots \\
        \bold Q_K(\set{Z}_K^*) - \bold Q_K(\set{Z}_K)
    \end{bmatrix}=
    \bold Q(\set{Z}_0)
    + \mat{\alpha} \begin{bmatrix}
        \bold \Delta_1\Ra{(\un{\set{Z}}_1)} \\
        \vdots \\
        \bold \Delta_K\Ra{(\un{\set{Z}}_K)}
    \end{bmatrix}\\
    &= \bold Q(\set{Z}_0) + \mat{\alpha} \un{\vec{\Delta}}(\set{Z}_1^*,\set{Z}_1,\ldots,\set{Z}_K^*, \set{Z}_K) \Ra{= \bold Q(\set{Z}_0) + \mat{\alpha} \un{\vec{\Delta}}(\un{\set{Z}})}
\end{align}
\Ra{where $\vec{\Delta}_k\Ra{(\un{\set{Z}}_k)} \equiv \bold Q_k(\set{Z}_k^*) - \bold Q_k(\set{Z}_k)$} and we now have potentially $2K+1$ sample sets $\set{Z} = \{\set{Z}_0,\set{Z}_1^*,\set{Z}_1,\ldots\}$ \Ra{and $\un{\set{Z}}_k = \{\set{Z}_k^*, \set{Z}_k\}$}.
We have redefined $\un{\vec{\Delta}}\Ra{(\un{\set{Z}})} = [\vec{\Delta}_1\Ra{(\un{\set{Z}}_1)},\ldots, \vec{\Delta}_K\Ra{(\un{\set{Z}}_K)}]^T$ \Ra{as an $\reals^{DK}$-valued random vector where $\un{\set{Z}} = \{ \un{\set{Z}}_1,\ldots,\un{\set{Z}}_K \}$}.
If $\bold Q_i(\set{Z}_i^*)$ and $ \bold Q_i(\set{Z}_i)$ have the same expectation for all $i$, the resulting estimator has the same bias as $\mat{Q}(\set{Z}_0).$
 
 The expressions for the optimal weights $\mat{\alpha}^*$ and the variance $\mathbb{V}ar[\cv\Ra{(\set{Z})}]$ in Equations \eqref{eq:OptWei} and \eqref{eq:OptVar} still apply to the ACV estimator using the new definition of $\un{\vec{\Delta}}\Ra{(\un{\set{Z}})}$.

\section{Estimator Covariance Expressions}
\label{sec:Body}

In this section, we provide a collection of results for the covariance between several estimators that are needed for many multi-fidelity sampling strategies. These estimator covariances can then be used within multifidelity UQ sampling approaches for considering multiple outputs and/or for systems needing multiple statistics. Specifically for ACVs, the covariance expressions are needed for evaluating $\mathbb{C}ov[\bold Q\Ra{(\set{Z}_0)}, \un{\vec{\Delta}}\Ra{(\un{\set{Z}})}]$ and $\mathbb{V}ar[\un{\vec{\Delta}}\Ra{(\un{\set{Z}})}].$  Section \ref{sec:CVsetup} summarizes how to find $\var{\un{\vec{\Delta}}\Ra{(\un{\set{Z}})}}$ and $\cov{\vec{Q}\Ra{(\set{Z}_0)},\un{\vec{\Delta}}\Ra{(\un{\set{Z}})}}$ for any estimator and sets up the following sections. Section \ref{sec:meanandvar} introduces estimators for the mean and covariance of multi-fidelity vector-valued functions. Section \ref{sec:sensana} introduces estimators for the simultaneous estimation of variance and main effects in the context of scalar-valued functions.

\subsection{Setup and Summary}
\label{sec:CVsetup}
In this section, we describe the structure of the results that follow.
Since $\un{\vec{\Delta}}\Ra{(\un{\set{Z}})}$ is a vector of stacked estimators, the variance, $\mathbb{V}ar[\un{\vec{\Delta}}\Ra{(\un{\set{Z}})}]$, can be separated into a set of block covariance matrices:
\begin{align}
    \scalemath{0.9}{
    \mathbb{V}ar[\un{\vec{\Delta}}\Ra{(\un{\set{Z}})}] = \begin{bmatrix}
     \mathbb{V}ar[\bold \Delta_1\Ra{(\un{\set{Z}}_1)}] & \mathbb{C}ov[\bold \Delta_1\Ra{(\un{\set{Z}}_1)}, \bold \Delta_2\Ra{(\un{\set{Z}}_2)}] & \cdots & \mathbb{C}ov[\bold \Delta_1\Ra{(\un{\set{Z}}_1)},\bold \Delta_K\Ra{(\un{\set{Z}}_K)}] \\
    \mathbb{C}ov[\bold \Delta_2\Ra{(\un{\set{Z}}_2)}, \bold \Delta_1\Ra{(\un{\set{Z}}_1)}] & \mathbb{V}ar[\bold \Delta_2\Ra{(\un{\set{Z}}_2)}] & & \vdots \\
    \vdots &&\ddots\\
    \mathbb{C}ov[\bold \Delta_K\Ra{(\un{\set{Z}}_K)}, \bold \Delta_1\Ra{(\un{\set{Z}}_1)}] & \cdots & & \mathbb{V}ar[\bold \Delta_K\Ra{(\un{\set{Z}}_K)}]
    \end{bmatrix}.}
    \label{eq:vardelta}
\end{align}
Since $\mathbb{V}ar[\bold \Delta_i\Ra{(\un{\set{Z}}_i)}] = \mathbb{C}ov[\bold \Delta_i\Ra{(\un{\set{Z}}_i)}, \bold \Delta_i\Ra{(\un{\set{Z}}_i)}]$, we further decompose each covariance block into 
\begin{align}
\scalemath{0.8}{
    \mathbb{C}ov[\bold \Delta_i\Ra{(\un{\set{Z}}_i)}, \bold \Delta_j\Ra{(\un{\set{Z}}_j)}] 
    = \mathbb{C}ov[\bold Q_i(\set{Z}_i^*) ,\bold Q_j(\set{Z}_j^*)] -\mathbb{C}ov[\bold Q_i(\set{Z}_i^*) ,\bold Q_j(\set{Z}_j)] 
    - \mathbb{C}ov[\bold Q_i(\set{Z}_i) ,\bold Q_j(\set{Z}_j^*)] + \mathbb{C}ov[\bold Q_i(\set{Z}_i) ,\bold Q_j(\set{Z}_j)]},
    \label{eq:covdeltdelt}
\end{align}
Lastly, the covariance with the high fidelity estimator $\mathbb{C}ov[\bold Q\Ra{(\set{Z}_0)}, \un{\vec{\Delta}}\Ra{(\un{\set{Z}})}]$ is separated into
\begin{align}
    \mathbb{C}ov[\bold Q\Ra{(\set{Z}_0)}, \un{\vec{\Delta}}\Ra{(\un{\set{Z}})}] = \begin{bmatrix}
        \mathbb{C}ov[\bold Q\Ra{(\set{Z}_0)}, \bold \Delta_1\Ra{(\un{\set{Z}}_1)}] & \cdots & \mathbb{C}ov[\bold Q\Ra{(\set{Z}_0)}, \bold \Delta_K\Ra{(\un{\set{Z}}_K)}]
    \end{bmatrix},
\end{align}
where
\begin{align}    
    \mathbb{C}ov[\bold Q\Ra{(\set{Z}_0)}, \bold \Delta_i\Ra{(\un{\set{Z}}_i)}] = \mathbb{C}ov[\bold Q\Ra{(\set{Z}_0)}, \bold Q_i(\set{Z}_i^*)] - \mathbb{C}ov[\bold Q\Ra{(\set{Z}_0)}, \bold Q_i(\set{Z}_i)].
    \label{eq:covqdelt}
\end{align}

The subsequent sections derive expressions for the block components of these estimators, which can then be assembled into the final form.  A summary of the estimator settings we consider and references to the results is provided in Table \ref{table:1}. \Rb{We also show that these expressions simplify to previous results for a scalar function in MFMC estimation in Appendix \ref{sec:prevworks}.} For each case, the covariance $\cov{\vec{Q}_i(\set{N}),\vec{Q}_j(\set{M})}$ between the required estimators of two fidelities $\vec{Q}_i$ and $\vec{Q}_j$ is first computed for arbitrary input sets $\set{N}$ and $\set{M}$,  where $\set{N} = \{\vec{n}^{(s)} ; s=1,\ldots,N \}$ and $\set{M} = \{\vec{m}^{(s)} ; s=1,\ldots,M \}$ \Ra{such that $\vec{n}^{(s)}$ and $\vec{m}^{(s)}$ are $\reals^{I}$-valued random vectors}. Here, let $\set{P} = \set{N}\cap\set{M}$ be the intersection between two sets such that $P = |\set{N}\cap \set{M}|$ denotes the size of $\set{P}$. 

The computation of these components require certain statistics of the underlying multi-fidelity functions. To this end, subsequent sections begin with a highlighted box that describes what exactly is needed. In practice, these statistics can be available either analytically for some problems or must be obtained from pilot samples\Ra{, a set of independent samples used to estimate these statistics.} Later, we  numerically show that pilot samples are effective in Section~\ref{PilotSampleTradeoff}.

\addtolength{\tabcolsep}{-0.3em}

\begin{table}[h]
\scriptsize
\begin{center}
\caption{Proposition references to each of the introduced estimators.}
\begin{tabular}{ | c | c | c c | c c c |}
\hline
\multicolumn{4}{|c|}{MOACV Estimators} & \multicolumn{3}{c|}{Propositions} \\
\hline
\hline
  Estimators & Abbr. & Statistic & Model Output & 
  $\mathbb{C}ov[\bold Q_{i}\Ra{(\set{N})},\bold Q_{j}\Ra{(\set{M})}]$ & $\cov{\vec{\Delta}_i\Ra{(\un{\set{Z}}_i)}, \vec{\Delta}_j\Ra{(\un{\set{Z}}_j)}}$ & $\mathbb{C}ov[\bold Q\Ra{(\set{Z}_0)}, \vec{\Delta}_i\Ra{(\un{\set{Z}}_i)}]$ \\ 
  \hline
 Mean & M & Single & Multiple & \ref{meancov} & \ref{meanvar} & \ref{meanhighlow}\\  
 Variance & V & Single & Multiple  & \ref{var_cov} & \ref{Var_delt} & \ref{Var_high}\\
 Mean \& Variance & MV & Multiple & Multiple & \ref{MeanCov} & \ref{MeanCovVar} & \ref{MeanVarHighLow} \\
 Main Effect & ME & Multiple & Single& \ref{SobolEstCov2} & \ref{MultiSobolVar} & \ref{MultiSobolHighLow} \\
 ME \& Variance & MEV & Multiple & Single& \ref{VarSobolCov} & \ref{VarMEVarLow} & \ref{VarMEVarHighLow} \\
 \hline
\end{tabular}
\label{table:1}
\end{center}
\end{table}

\subsection{Mean and Variance Estimation} 
\label{sec:meanandvar}
In this section, we estimate the mean and variance of a vector-valued function \Ra{with a random input}. In Section \ref{sec:meanest} and Section \ref{VarEstSect} we separately estimate the means and covariance, respectively. Finally, in Section \ref{sec:jointMV} we simultaneously estimate the mean and covariance for vector-valued functions.

We further define notation for this section. Let $\Ra{\un{\vec{f}}}: \reals^I \to \reals^{D(K+1)}$, and $\Ra{\un{\vec{g}}}: \reals^I \to \reals^{D^2 (K+1)}$ be vector-valued functions collecting the outputs of a high-fidelity model and $K$ low-fidelity models according to  
\begin{align}
\label{eq:vecdefs}
    \Ra{\un{\vec{f}}}= \begin{bmatrix}
    \Ra{\vec{f}}_0 \\ \Ra{\vec{f}}_1 \\  
    \vdots \\ \Ra{\vec{f}}_K
    \end{bmatrix}
   \textrm{\quad and \quad } 
    \Ra{\un{\vec{g}}}= \begin{bmatrix}
    (\Ra{\vec{f}}_0- \ee{\Ra{\vec{f}}_0\Ra{(\vec{z})}})^{\otimes 2} \\
    (\Ra{\vec{f}}_1- \ee{\Ra{\vec{f}}_1\Ra{(\vec{z})}})^{\otimes 2} \\
    \vdots \\
    (\Ra{\vec{f}}_K- \ee{\Ra{\vec{f}}_K\Ra{(\vec{z})}})^{\otimes 2} 
    \end{bmatrix}.
\end{align} 

\subsubsection{Mean Estimator}
\label{sec:meanest}
We now consider mean estimation of a vector-valued function \Ra{with a random input}.

\begin{tcolorbox}[colback=blue!5,colframe=blue!55!black,title=Required Covariances for Mean Estimation]
The estimators in this section require these covariances
\begin{align}
    \un{\mat{A}} \equiv \var{\Ra{\un{\vec{f}}}\Ra{(\vec{z})}} \textrm{\quad where \quad} \mat{A}_{ij} = \cov{\Ra{\vec{f}}_i\Ra{(\vec{z})},\Ra{\vec{f}}_j\Ra{(\vec{z})}},
\end{align}
for $i,j\in \{ 0,1,\ldots,K \}.$
\end{tcolorbox}

\begin{prop}[Covariance between Mean Estimators]
\label{meancov}
    The covariance of two MC mean estimators~\eqref{eq:meanMC}, $\vec{Q}_i\Ra{(\set{N})}$ and $\vec{Q}_j\Ra{(\set{M})}$, corresponding to fidelities $i,j$ computed via input sets $\set{N}, \set{M}$, respectively, is $\mathbb{C}ov[\bold Q_i(\set{N}), \bold Q_j(\set{M})] = \frac{P}{NM}{\mat{A}}_{ij}.$ 
\end{prop}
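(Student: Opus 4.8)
The plan is to compute $\mathbb{C}ov[\bold Q_i(\set{N}), \bold Q_j(\set{M})]$ directly from the definition~\eqref{eq:meanMC} of the MC mean estimator, exploiting bilinearity of covariance and the sampling structure. First I would write
\begin{align}
    \mathbb{C}ov[\bold Q_i(\set{N}), \bold Q_j(\set{M})]
    = \frac{1}{NM}\sum_{s=1}^{N}\sum_{t=1}^{M} \mathbb{C}ov\!\left[f_i(\vec{n}^{(s)}),\, f_j(\vec{m}^{(t)})\right],
\end{align}
so the whole computation reduces to understanding the $NM$ pairwise terms. The key structural observation is that $\mathbb{C}ov[f_i(\vec{n}^{(s)}), f_j(\vec{m}^{(t)})]$ depends only on whether the two input samples $\vec{n}^{(s)}$ and $\vec{m}^{(t)}$ are the same draw or independent draws. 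If $\vec{n}^{(s)} = \vec{m}^{(t)}$ (i.e. the sample lies in $\set{P} = \set{N}\cap\set{M}$), then both functions are evaluated at the same random input, and the covariance of $f_i$ and $f_j$ at a common input point is exactly $\cov{f_i,f_j} = \mat{A}_{ij}$. If instead the two samples are independent, the covariance is the zero matrix because the underlying random inputs are independent.

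Next I would count the number of nonzero terms. Since each element of $\set{P}$ appears once as some $\vec{n}^{(s)}$ and once as some $\vec{m}^{(t)}$, there are exactly $P = |\set{P}|$ pairs $(s,t)$ for which $\vec{n}^{(s)} = \vec{m}^{(t)}$, and each contributes $\mat{A}_{ij}$; all remaining $NM - P$ pairs contribute $\mat{0}$. Summing gives $\sum_s\sum_t \mathbb{C}ov[f_i(\vec{n}^{(s)}), f_j(\vec{m}^{(t)})] = P\,\mat{A}_{ij}$, and dividing by $NM$ yields the claimed expression $\mathbb{C}ov[\bold Q_i(\set{N}), \bold Q_j(\set{M})] = \frac{P}{NM}\mat{A}_{ij}$. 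The case $i = j$ with $\set{N} = \set{M}$ recovers the familiar $\var{f_i}/N$.

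The main obstacle — really more a matter of care than difficulty — is being precise about the sampling model: one must state that samples within a set are i.i.d., that samples across $\set{N}$ and $\set{M}$ are either identical (shared) or mutually independent, and that the enumeration of $\set{N}$ and $\set{M}$ can be taken so that shared samples are matched to shared samples. Under these standing assumptions the counting argument is clean; the only subtlety is making sure a shared input $\vec{z}$ is not double-counted and that "independent" genuinely forces the cross-covariance to vanish, which follows since $f_i(\vec{n}^{(s)})$ and $f_j(\vec{m}^{(t)})$ are then functions of independent random vectors.
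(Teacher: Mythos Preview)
Your proposal is correct and follows essentially the same approach as the paper: expand the covariance by bilinearity into a double sum, observe that each term vanishes unless the two input samples coincide, and count the $P$ surviving terms to obtain $\frac{P}{NM}\mat{A}_{ij}$. Your additional remarks about the sampling assumptions and the $i=j$, $\set{N}=\set{M}$ sanity check are fine elaborations but not a different route.
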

\begin{proof}
   Using the definition of covariance, we obtain
\begin{align}
\scalemath{0.9}{
    \mathbb{C}ov[\bold Q_i(\set{N}), \bold Q_j(\set{M})] = \mathbb{C}ov\left[\frac{1}{N}\sum^N_{t=1} \Ra{\vec{f}}_i(\vec{n}^{(t)}), \frac{1}{M}\sum^M_{s=1} \Ra{\vec{f}}_j(\vec{m}^{(s)})\right] = \frac{1}{NM}\sum_{t=1}^N\sum_{s=1}^M\mathbb{C}ov\left[\Ra{\vec{f}}_i(\vec{n}^{(t)}),\Ra{\vec{f}}_j(\vec{m}^{(s)})\right]}. \nonumber
\end{align}
The function outputs are only correlated if the \Ra{input random variables} are the same. Thus, each covariance term is only nonzero if $\vec{n}^{(t)} = \vec{m}^{(s)}$.  The only nonzero covariance terms are due to samples in $\set{P}$. Thus, there are $P$ nonzero covariance terms, and the stated result follows
\begin{align}
    \mathbb{C}ov[\bold Q_i(\set{N}), \bold Q_j(\set{M})] &= \frac{|\set{N}\cap\set{M}|}{NM}\mathbb{C}ov\left[\Ra{\vec{f}}_i\Ra{(\vec{z})},\Ra{\vec{f}}_j\Ra{(\vec{z})}\right]
    = \frac{P}{NM}\mathbb{C}ov\left[\Ra{\vec{f}}_i\Ra{(\vec{z})},\Ra{\vec{f}}_j\Ra{(\vec{z})}\right].
\end{align}
\end{proof}
Using this result, we obtain the covariance between the discrepancies as follows\footnote{In this result, and those that follow, when $i=j$, the equation simplifies greatly. Here it becomes $F_{ii} = \frac{1}{\Nsing{i}{*}} + \frac{1}{\Nsing{i}{}} - 2\frac{\Ncomb{i}{}{i}{*}}{\Nsing{i}{}\Nsing{i}{*}}$. All other matrices defined similarly have a reduced form along the diagonals.}.
\begin{prop}[Variance of discrepancies for M]
\label{meanvar}
The covariance between discrepancies is
$   \cov{\vec{\Delta}_i\Ra{(\un{\set{Z}}_i)}, \vec{\Delta}_j\Ra{(\un{\set{Z}}_j)}} = F_{ij} {\mat{A}}_{ij}$
where
\begin{align}
\label{MeanF}
    F_{ij} = \frac{\Ncomb{i}{*}{j}{*}}{\Nsing{i}{*}\Nsing{j}{*}}-\frac{\Ncomb{i}{*}{j}{}}{\Nsing{i}{*}\Nsing{j}{}}-\frac{\Ncomb{i}{}{j}{*}}{\Nsing{i}{}\Nsing{j}{*}}+\frac{\Ncomb{i}{}{j}{}}{\Nsing{i}{}\Nsing{j}{}},
\end{align} 
for $i,j = 1,\ldots,K,$. 
\end{prop}
\begin{proof}
The result follows a straightforward calculation
\begin{align}
    \mathbb{C}ov[\vec{\Delta}_i\Ra{(\un{\set{Z}}_i)}, \vec{\Delta}_j\Ra{(\un{\set{Z}}_j)}] & = \mathbb{C}ov[\bold Q_i(\set{Z}_i^*) ,\bold Q_j(\set{Z}_j^*)] -\mathbb{C}ov[\bold Q_i(\set{Z}_i^*) ,\bold Q_j(\set{Z}_j)] \nonumber\\
    &\quad -\mathbb{C}ov[\bold Q_i(\set{Z}_i) ,\bold Q_j(\set{Z}_j^*)] + \mathbb{C}ov[\bold Q_i(\set{Z}_i) ,\bold Q_j(\set{Z}_j)] \nonumber \\
    &= \left[\frac{\Ncomb{i}{*}{j}{*}}{\Nsing{i}{*}\Nsing{j}{*}}-\frac{\Ncomb{i}{*}{j}{}}{\Nsing{i}{*}\Nsing{j}{}}-\frac{\Ncomb{i}{}{j}{*}}{\Nsing{i}{}\Nsing{j}{*}}+\frac{\Ncomb{i}{}{j}{}}{\Nsing{i}{}\Nsing{j}{}}\right]\mathbb{C}ov[\Ra{\vec{f}}_i\Ra{(\vec{z})},\Ra{\vec{f}}_j\Ra{(\vec{z})}].
\end{align}
\end{proof}
Note that when $D=1$, Proposition \ref{meanvar} is equivalent to~\cite[Eq. 13]{Param}. Finally, the covariance between the high-fidelity and discrepancy estimators is provided. A similar argument yields the following result.
\begin{prop}[Variance between high-fidelity and discrepancies for M]
\label{meanhighlow}
    The covariance between the high-fidelity and discrepancy estimator is $
    \mathbb{C}ov[\bold Q\Ra{(\set{Z}_0)}, \vec{\Delta}_i\Ra{(\un{\set{Z}}_i)}] = G_i {\mat{A}}_{0i}
    $
where
\begin{align}
\label{MeanG}
    G_i = \frac{\Ncomb{0}{}{i}{*}}{\Nsing{0}{}\Nsing{i}{*}} - \frac{\Ncomb{0}{}{i}{}}{\Nsing{0}{}\Nsing{i}{}}.
\end{align}
\end{prop}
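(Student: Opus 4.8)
The plan is to mirror the proof of Proposition~\ref{meanvar}, but now with only the high-fidelity estimator $\vec{Q} = \vec{Q}_0(\set{Z}_0)$ on the left and the discrepancy $\vec{\Delta}_i = \vec{Q}_i(\set{Z}_i^*) - \vec{Q}_i(\set{Z}_i)$ on the right. First I would invoke the decomposition~\eqref{eq:covqdelt}, which writes
\begin{align}
\mathbb{C}ov[\vec{Q}, \vec{\Delta}_i] = \mathbb{C}ov[\vec{Q}_0(\set{Z}_0), \vec{Q}_i(\set{Z}_i^*)] - \mathbb{C}ov[\vec{Q}_0(\set{Z}_0), \vec{Q}_i(\set{Z}_i)]. \nonumber
\end{align}
Then I would apply Proposition~\ref{meancov} to each of the two terms: the first is $\frac{N_{0\cap i*}}{N_0 N_{i*}}\mat{A}_{0i}$ and the second is $\frac{N_{0\cap i}}{N_0 N_{i}}\mat{A}_{0i}$, where $N_{0\cap i*} = |\set{Z}_0 \cap \set{Z}_i^*|$ and $N_{0\cap i} = |\set{Z}_0 \cap \set{Z}_i|$ are the respective sample-set intersection sizes. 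Subtracting and factoring out the common matrix $\mat{A}_{0i} = \mathbb{C}ov[f_0, f_i]$ gives
\begin{align}
\mathbb{C}ov[\vec{Q}, \vec{\Delta}_i] = \left[\frac{N_{0\cap i*}}{N_0 N_{i*}} - \frac{N_{0\cap i}}{N_0 N_{i}}\right]\mat{A}_{0i} = G_i \mat{A}_{0i}, \nonumber
\end{align}
which is exactly the claimed identity with $G_i$ as in~\eqref{MeanG}.

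There is really no hard step here — the entire argument is a direct specialization of the two preceding propositions, and the only thing to be careful about is bookkeeping of the intersection subscripts (matching $N_{0\cap i*}$ to the pair $(\set{Z}_0,\set{Z}_i^*)$ and $N_{0\cap i}$ to $(\set{Z}_0,\set{Z}_i)$, consistent with the convention introduced in Section~\ref{sec:moacv}). If one wanted a self-contained proof rather than citing Proposition~\ref{meancov}, the only genuine content is the observation already used there: the outputs $f_0(\vec{z}^{(t)})$ and $f_i(\vec{z}^{(s)})$ are uncorrelated unless the input samples coincide, so the double sum defining each estimator covariance collapses to a count of shared samples divided by the product of sample sizes. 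That observation is the lemma-level fact underpinning this whole family of results, so the "main obstacle" is purely notational rather than mathematical; the proof can be stated in two lines.
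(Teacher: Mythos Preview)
Your proposal is correct and matches the paper's own approach exactly: the paper simply states ``A similar argument yields the following result'' after Proposition~\ref{meanvar}, and your two-line argument---decompose via~\eqref{eq:covqdelt} and apply Proposition~\ref{meancov} to each term---is precisely that similar argument.
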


\subsubsection{Variance Estimator}
\label{VarEstSect}
We now estimate the variance of a vector-valued function \Ra{with a random input}. The proofs are provided in the Appendix for brevity.

\begin{tcolorbox}[colback=blue!5,colframe=blue!55!black,title=Required Covariances for Variance Estimation]
The estimators in this section require these covariances
\begin{align}
    \un{\mat{V}} &= \begin{bmatrix}
        \mat{V}_{00} & \mat{V}_{01} & \cdots & \mat{V}_{0K} \\
        \mat{V}_{10} & \mat{V}_{11} & & \vdots \\
        \vdots & & \ddots & \\
        \mat{V}_{K0} & \cdots & & \mat{V}_{KK}
    \end{bmatrix} \in \reals^{(K+1)D^2 \times (K+1)D^2} \\
    \un{\mat{W}} &= \var{\Ra{\un{\vec{g}}}\Ra{(\vec{z})}} \in \reals^{(K+1)D^2 \times (K+1)D^2},
    \label{eq:W}
    \end{align} 
where $\Ra{\un{\vec{g}}}$ can be seen in Equation \eqref{eq:vecdefs}. The elements of $\un{\mat{V}}$ are
\begin{align}
\scalemath{0.9}{
    {\mat{V}}_{ij} = \mathbb{C}ov[\Ra{\vec{f}}_i\Ra{(\vec{z})}, \Ra{\vec{f}}_j\Ra{(\vec{z})}]^{\otimes 2} + \left( \bold 1_D^T \otimes \mathbb{C}ov[\Ra{\vec{f}}_i\Ra{(\vec{z})},\Ra{\vec{f}}_j\Ra{(\vec{z})}] \otimes \bold 1_D \right) \circ \left(\bold 1_D \otimes \mathbb{C}ov[\Ra{\vec{f}}_i\Ra{(\vec{z})},\Ra{\vec{f}}_j\Ra{(\vec{z})}] \otimes \bold 1_D^T \right),}
\end{align}
where $\mat{V}_{ij} \in \reals^{D^2 \times D^2}$. Elements of $\un{\mat{W}}$ are $\scalemath{0.81}{{\mat{W}}_{ij} = \cov{\left(\Ra{\vec{f}}_i\Ra{(\vec{z})}-\mathbb{E}[\Ra{\vec{f}}_i\Ra{(\vec{z})}]\right)^{\otimes 2},\left(\Ra{\vec{f}}_j\Ra{(\vec{z})}-\mathbb{E}[\Ra{\vec{f}}_j\Ra{(\vec{z})}]\right)^{\otimes 2}}.}$
\end{tcolorbox}

\begin{prop}[Covariance between Variance Estimators]
\label{var_cov}
    The covariance between two MC variance estimators \eqref{eq:varianceMC}, $\vec{Q}_i\Ra{(\set{N})}$ and $\vec{Q}_j\Ra{(\set{M})}$, corresponding to fidelities $i,j$ computed via input sets $\set{N}, \set{M}$, respectively, is
    \begin{align}
    \label{variancecovariance}
    \mathbb{C}ov[\bold Q_i(\set{N}), \bold Q_j(\set{M})] = \frac{P(P-1)}{N(N-1)M(M-1)} \mat{V}_{ij} + \frac{P}{NM}\mat{W}_{ij}.
\end{align}
\end{prop}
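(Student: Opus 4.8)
The plan is to reduce the estimator covariance to a combinatorial count over shared input samples, exactly as in the proof of Proposition~\ref{meancov}, but now using the symmetric double-sum form of the variance estimator in Equation~\eqref{eq:varianceMC}. First, observe that $\mat{Q}_{V}(\set{Z})$ is invariant under translations $f\mapsto f+c$, so without loss of generality $f_i$ and $f_j$ may be taken centered, i.e. $\ee{f_i}=\ee{f_j}=\vec{0}$; this collapses all of the fourth-moment expressions below. Writing $\xi_s = f_i(\vec{n}^{(s)})$ and $\eta_u = f_j(\vec{m}^{(u)})$, each estimator is a degree-two U-statistic, $\mat{Q}_i(\set{N}) = \frac{1}{2N(N-1)}\sum_{s\neq t}(\xi_s-\xi_t)^{\otimes 2}$ and likewise for $\mat{Q}_j(\set{M})$, so bilinearity of covariance gives
\begin{align}
\mathbb{C}ov[\mat{Q}_i(\set{N}),\mat{Q}_j(\set{M})] \;=\; \frac{1}{4N(N-1)M(M-1)}\sum_{s\neq t}\;\sum_{u\neq v}\mathbb{C}ov\!\left[(\xi_s-\xi_t)^{\otimes 2},\,(\eta_u-\eta_v)^{\otimes 2}\right].
\end{align}

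The next step is to classify the quadruple-index terms by the overlap pattern of their input points. Because evaluations at distinct input points are independent and centered, a term vanishes unless $\{\vec{n}^{(s)},\vec{n}^{(t)}\}$ and $\{\vec{m}^{(u)},\vec{m}^{(v)}\}$ share a point, and any shared point must lie in $\set{P}=\set{N}\cap\set{M}$; this leaves exactly two nonzero regimes. For a configuration with a single shared point (say $\vec{n}^{(s)}=\vec{m}^{(u)}=\vec{p}\in\set{P}$, the other two points distinct from $\vec{p}$ and from each other), expanding each $(\cdot)^{\otimes 2}$ into its four monomials and discarding every monomial pair that contains a ``lonely'' centered independent factor leaves only $\mathbb{C}ov[\xi_s^{\otimes 2},\eta_u^{\otimes 2}]=\mat{W}_{ij}$, so such a configuration contributes $\mat{W}_{ij}$ to the double sum. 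For a configuration with two shared points ($\{\vec{n}^{(s)},\vec{n}^{(t)}\}=\{\vec{m}^{(u)},\vec{m}^{(v)}\}=\{\vec{p},\vec{p}'\}\subseteq\set{P}$), the same expansion produces two nonvanishing ``squared--squared'' pairings, each equal to $\mat{W}_{ij}$, and four nonvanishing ``mixed--mixed'' pairings of the form $\mathbb{C}ov[f_i(\vec{p})\otimes f_i(\vec{p}'),\,f_j(\vec{p})\otimes f_j(\vec{p}')]$; matching index patterns against the definition of $\mat{V}_{ij}$ in the highlighted box shows these four are two copies of each of the two Isserlis-type contractions of $\mathbb{C}ov[f_i,f_j]$, hence sum to $2\mat{V}_{ij}$. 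Thus each two-shared-points configuration contributes $2\mat{W}_{ij}+2\mat{V}_{ij}$.

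Finally, count the configurations. The two-shared-points case is parametrized by an unordered pair in $\set{P}$ together with two independent orderings, giving $4\binom{P}{2}=2P(P-1)$ terms. The one-shared-point case is parametrized by which of the $2\times2$ index slots carries the shared point, which element $\vec{p}\in\set{P}$ it is, and the two ``free'' indices $(t,v)$; the latter can be chosen in $(N-1)(M-1)$ ways, minus the $P-1$ choices that would create a second coincidence and fall into the other regime, so this case has $4P\big[(N-1)(M-1)-(P-1)\big]$ terms with the four slot families disjoint. Collecting,
\begin{align}
\sum_{s\neq t}\sum_{u\neq v}\mathbb{C}ov\!\left[(\xi_s-\xi_t)^{\otimes 2},(\eta_u-\eta_v)^{\otimes 2}\right] = 4P(P-1)\,\mat{V}_{ij} + 4P(N-1)(M-1)\,\mat{W}_{ij},
\end{align}
and dividing by $4N(N-1)M(M-1)$ yields Equation~\eqref{variancecovariance}. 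I expect the main obstacle to be the bookkeeping of the one-shared-point count---in particular subtracting the $P-1$ accidental second coincidences so that the two regimes genuinely partition the nonzero terms---together with the index-pattern matching that identifies the mixed--mixed contractions with $\mat{V}_{ij}$; once the means are removed, everything else is mechanical. Specializing to $i=j$ and $\set{N}=\set{M}$ (so $P=N=M$) recovers Equation~\eqref{eq:varMCvar}, a useful consistency check.
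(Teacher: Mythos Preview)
Your proof is correct and follows the same case decomposition as the paper---partitioning the quadruple sum by how many input points are shared between $\{\vec{n}^{(s)},\vec{n}^{(t)}\}$ and $\{\vec{m}^{(u)},\vec{m}^{(v)}\}$, then counting each regime. The one substantive difference is your opening move: you observe that the estimator~\eqref{eq:varianceMC} is translation invariant and center $f_i,f_j$ at the outset, so that every term containing a lone independent factor vanishes immediately. The paper does \emph{not} do this; it carries $\ee{f_i},\ee{f_j}$ through the entire computation, which forces the two-shared-point case (their (E) and (F)) through a page of Kronecker/Hadamard manipulations (terms (G)--(O) and the identities of Appendix~\ref{SubApp}) before the same answer $2\mat{W}_{ij}+2\mat{V}_{ij}$ emerges. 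Your version collapses all of that to a direct index check against the definition of $\mat{V}_{ij}$. The counts, the one-shared-point contribution $\mat{W}_{ij}$, and the final assembly are otherwise identical to the paper's Appendix~\ref{appendix:var_cov}; your subtraction of the $P-1$ accidental second coincidences exactly matches the paper's cardinality bookkeeping below~\eqref{ABCD}.
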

Using this result, we obtain the covariance between the discrepancies as follows.

\begin{prop}[Variance of discrepancies for V]
\label{Var_delt}
Let $F_{ij}$ be the same as in Equation~\eqref{MeanF}. The covariance between discrepancies is $\cov{\vec{\Delta}_i\Ra{(\un{\set{Z}}_i)},\vec{\Delta}_j\Ra{(\un{\set{Z}}_j)}} = F_{ij} {\mat{W}}_{ij} + H_{ij} {\mat{V}}_{ij}$
where  
\begin{align}
    H_{ij} &= 
         \frac{\Ncomb{i}{*}{j}{*}(\Ncomb{i}{*}{j}{*}-1)}{\Nsing{i}{*}(\Nsing{i}{*}-1)\Nsing{j}{*}(\Nsing{j}{*}-1)} - \frac{\Ncomb{i}{*}{j}{}(\Ncomb{i}{*}{j}{}-1)}{\Nsing{i}{*}(\Nsing{i}{*}-1)\Nsing{j}{}(\Nsing{j}{}-1)} \nonumber\\
         &\quad- \frac{\Ncomb{i}{}{j}{*}(\Ncomb{i}{}{j}{*}-1)}{\Nsing{i}{}(\Nsing{i}{}-1)\Nsing{j}{*}(\Nsing{j}{*}-1)}  + \frac{\Ncomb{i}{}{j}{}(\Ncomb{i}{}{j}{}-1)}{\Nsing{i}{}(\Nsing{i}{}-1)\Nsing{j}{}(\Nsing{j}{}-1)}.
\end{align}
\end{prop}
Finally, the covariance between the high-fidelity and discrepancy estimators is provided.

\begin{prop}[Variance between high-fidelity and discrepancies for V]
\label{Var_high}
Let $G_{i}$ be the same as in Equation~\eqref{MeanG}.
    The covariance between the high-fidelity and discrepancy estimator is $\mathbb{C}ov[\bold Q\Ra{(\set{Z}_0)}, \vec{ \Delta}_i\Ra{(\un{\set{Z}}_i)}] = J_{i} {\mat{V}}_{0i} + G_{i} {\mat{W}} _{0i}$
    where
    \begin{align}
        J_i = \frac{\Ncomb{0}{}{i}{*}(\Ncomb{0}{}{i}{*}-1)}{\Nsing{0}{}(\Nsing{0}{}-1)\Nsing{i}{*}(\Nsing{i}{*}-1)}- \frac{\Ncomb{0}{}{i}{}(\Ncomb{0}{}{i}{}-1)}{\Nsing{0}{}(\Nsing{0}{}-1)\Nsing{i}{}(\Nsing{i}{}-1)}.
    \end{align}
\end{prop}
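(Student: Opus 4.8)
The plan is to mirror the structure used for Propositions \ref{meanvar} and \ref{meanhighlow}, but now starting from the variance-estimator covariance in Proposition \ref{var_cov} rather than the mean-estimator covariance in Proposition \ref{meancov}. First I would write out $\mathbb{C}ov[\bold Q, \vec{\Delta}_i]$ via the high-low decomposition~\eqref{eq:covqdelt}, namely $\mathbb{C}ov[\bold Q, \vec{\Delta}_i] = \mathbb{C}ov[\bold Q_0(\set{Z}_0), \bold Q_i(\set{Z}_i^*)] - \mathbb{C}ov[\bold Q_0(\set{Z}_0), \bold Q_i(\set{Z}_i)]$, where $\bold Q = \bold Q_0(\set{Z}_0)$ is the high-fidelity variance estimator on the base sample set. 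Then I would substitute Equation~\eqref{variancecovariance} into each of the two terms, taking the pair $(\set{N},\set{M}) = (\set{Z}_0,\set{Z}_i^*)$ for the first and $(\set{Z}_0,\set{Z}_i)$ for the second, and reading off $P$ as the appropriate intersection size ($N_{0\cap i*}$ and $N_{0\cap i}$ respectively), $N = N_0$, and $M = N_{i*}$ or $N_i$.

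The key step is then just bookkeeping: the $\mat{W}_{0i}$ coefficient is $\frac{N_{0\cap i*}}{N_0 N_{i*}} - \frac{N_{0\cap i}}{N_0 N_i}$, which is exactly $G_i$ from Equation~\eqref{MeanG}, and the $\mat{V}_{0i}$ coefficient is $\frac{N_{0\cap i*}(N_{0\cap i*}-1)}{N_0(N_0-1)N_{i*}(N_{i*}-1)} - \frac{N_{0\cap i}(N_{0\cap i}-1)}{N_0(N_0-1)N_i(N_i-1)}$, which is the stated $J_i$. Collecting these gives $\mathbb{C}ov[\bold Q, \vec{\Delta}_i] = J_i \mat{V}_{0i} + G_i \mat{W}_{0i}$, as claimed. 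Since the matrices $\mat{V}_{0i}$ and $\mat{W}_{0i}$ are the same in both covariance terms (they depend only on the fidelity indices $0$ and $i$, not on the sample sets), they factor out cleanly and no further simplification is needed.

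The only genuine subtlety — and the place I would be most careful — is the implicit assumption that $\bold Q = \bold Q_0(\set{Z}_0)$, i.e., that the high-fidelity estimator in $\mathbb{C}ov[\bold Q,\vec{\Delta}_i]$ is the variance estimator evaluated on the designated base set $\set{Z}_0$ with $N_0$ samples; this is what licenses reading $N = N_0$ and the intersection sizes $N_{0\cap i*}, N_{0\cap i}$ directly from the sample-allocation notation of Section~\ref{sec:moacv}. Given that identification, the proof is a one-line consequence of Proposition~\ref{var_cov} plus the decomposition~\eqref{eq:covqdelt}, exactly parallel to how Proposition~\ref{meanhighlow} follows from Proposition~\ref{meancov}. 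I expect the main obstacle to be purely notational — keeping the four-fold product of sample sizes in the denominators and the $(\cdot)(\cdot-1)$ factors aligned correctly across the two subtracted terms — rather than anything conceptual. For brevity one could simply state that the argument is identical to that of Proposition~\ref{Var_delt} with the roles of $(\set{Z}_i^*,\set{Z}_i)$ and $(\set{Z}_j^*,\set{Z}_j)$ replaced by $\set{Z}_0$ and $(\set{Z}_i^*,\set{Z}_i)$.
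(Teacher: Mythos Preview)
Your proposal is correct and matches the paper's own argument essentially verbatim: the paper simply states that Equation~\eqref{variancecovariance} (Proposition~\ref{var_cov}) is applied with the appropriate input sample sets to obtain the covariance between the high-fidelity estimator and the discrepancy, exactly as you describe. Your bookkeeping of the $G_i$ and $J_i$ coefficients is accurate, and your remark that this is parallel to how Proposition~\ref{meanhighlow} follows from Proposition~\ref{meancov} is precisely the paper's logic.
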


\subsubsection{Mean and Variance Estimators}
\label{sec:jointMV}
We now consider a combined estimator, simultaneously providing a mean and variance (MV) estimate.

\begin{tcolorbox}[colback=blue!5,colframe=blue!55!black,title=Required Covariances for Mean and Variance Estimation]
In addition to the covariances from Sections~\ref{sec:meanest} and~\ref{VarEstSect}, the covariance \\
$\un{\mat{B}} = \mathbb{C}ov[\Ra{\un{\vec{f}}(\vec{z})},\Ra{\un{\vec{g}}(\vec{z})}] \in \reals^{D(K+1)\times D^2(K+1)}$ is required such that \\
${\mat{B}}_{ij} = \mathbb{C}ov[\Ra{\vec{f}}_i\Ra{(\vec{z})}, (\Ra{\vec{f}}_j\Ra{(\vec{z})} - \mathbb{E}[\Ra{\vec{f}}_j\Ra{(\vec{z})}])^{\otimes 2}]$.
\end{tcolorbox}

The stacked MC mean and variance estimator is
\begin{align}
    \bold Q_{i}(\set{N}) = \begin{bmatrix}
    \vec{Q}_{\mu, i}(\set{N})  \\
        \vec{Q}_{V, i}(\set{N}) 
    \end{bmatrix} =\begin{bmatrix}
    \frac{1}{N}\sum^N_{t=1} \Ra{\vec{f}}_i(\vec{n}^{(t)}) \\
        \frac{1}{2N(N-1)}\sum^N_{s=1}\sum^N_{t=1} \left( \Ra{\vec{f}}_i(\vec{n}^{(s)})-\Ra{\vec{f}}_i(\vec{n}^{(t)}) \right)^{\otimes 2}
    \end{bmatrix}
    \label{eq:stacked}
\end{align}
\Ra{where $\bold Q_{i}(\set{N})$ is a $\reals^{D + D^2}$-valued random vector.}
\begin{prop}[Covariance between Mean and Variance Estimators]
\label{MeanCov}
The covariance between two stacked MC estimators \eqref{eq:stacked}, $\vec{Q}_i\Ra{(\set{N})}$ and $\vec{Q}_j\Ra{(\set{M})}$, corresponding to fidelities $i,j$ computed via input sets $\set{N}, \set{M}$, respectively, is
\begin{align}
    \mathbb{C}ov[\bold Q_{i}(\set{N}), \bold Q_{j}(\set{M})] = \begin{bmatrix}
    \mathbb{C}ov[\vec{Q}_{\mu, i}(\set{N}), \vec{Q}_{\mu, j}(\set{M})] & \mathbb{C}ov[\vec{Q}_{\mu, i}(\set{N}), \vec{Q}_{V, j}(\set{M})]\\
    \mathbb{C}ov[\vec{Q}_{V, i}(\set{N}), \vec{Q}_{\mu, j}(\set{M})]  & \mathbb{C}ov[\vec{Q}_{V, i}(\set{N}), \vec{Q}_{V, j}(\set{M})],
    \end{bmatrix}
\end{align}
where the diagonal terms were found in Propositions \ref{meancov} and \ref{var_cov}. The covariance between the mean and variance estimator is
$\mathbb{C}ov[\vec{Q}_{\mu, i}(\set{N}), \vec{Q}_{V, j}(\set{M})] = \frac{P}{NM} {\mat{B}}_{ij}.$
\end{prop}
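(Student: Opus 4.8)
The plan is to establish the only new piece of Proposition~\ref{MeanCov}, namely the cross term $\mathbb{C}ov[\vec{Q}_{\mu,i}(\set{N}),\vec{Q}_{V,j}(\set{M})] = \frac{P}{NM}\mat{B}_{ij}$, since the two diagonal blocks already follow from Propositions~\ref{meancov} and~\ref{var_cov}. I would begin by writing $\vec{Q}_{\mu,i}(\set{N}) = \frac{1}{N}\sum_{t} f_i(\vec{n}^{(t)})$ and substituting the double-sum form of the variance estimator, $\vec{Q}_{V,j}(\set{M}) = \frac{1}{2M(M-1)}\sum_{s}\sum_{r}\big(f_j(\vec{m}^{(s)}) - f_j(\vec{m}^{(r)})\big)^{\otimes 2}$, then pulling the linearity of covariance through both sums to reduce everything to a triple sum of terms of the form $\mathbb{C}ov\big[f_i(\vec{n}^{(t)}), (f_j(\vec{m}^{(s)}) - f_j(\vec{m}^{(r)}))^{\otimes 2}\big]$.

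The key step is the same independence observation used throughout the section: a summand is nonzero only when the input point $\vec{n}^{(t)}$ coincides with one of $\vec{m}^{(s)}, \vec{m}^{(r)}$, which forces that point to lie in $\set{P} = \set{N}\cap\set{M}$. I would expand $(f_j(\vec{m}^{(s)}) - f_j(\vec{m}^{(r)}))^{\otimes 2}$ using bilinearity of the Kronecker product, track which of $\vec{m}^{(s)}$, $\vec{m}^{(r)}$ equals $\vec{n}^{(t)}$, and show that the cross terms $f_j(\vec{m}^{(s)})\otimes f_j(\vec{m}^{(r)})$ with $s\neq r$ contribute nothing when only one index is tied to $\vec{n}^{(t)}$ (since the other factor is then independent of $f_i(\vec{n}^{(t)})$ and can be taken out of the covariance as a mean). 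What survives is, for each of the $P$ shared points, a term of the form $\mathbb{C}ov[f_i, (f_j - \mathbb{E}[f_j])^{\otimes 2}]$ after recentering; this recentering — replacing the raw $f_j(\vec{m}^{(s)})^{\otimes 2}$-type terms by their centered counterpart $(f_j - \mathbb{E}[f_j])^{\otimes 2}$ — is exactly where the combinatorial bookkeeping happens, and I expect it to be the main obstacle. Careful counting should show the $2M(M-1)$ normalization cancels against the number of ordered pairs $(s,r)$ that produce each centered contribution, leaving the clean factor $\frac{P}{NM}$.

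Concretely, I would argue that for a fixed shared point $\vec{p}\in\set{P}$, summing over the pairs $(s,r)$ in which $\vec{m}^{(s)}=\vec{p}$ or $\vec{m}^{(r)}=\vec{p}$ (while the other ranges over all $M$ indices) and using $\mathbb{E}[f_j]$ to absorb the free factor reproduces $2M(M-1)\,\mathbb{C}ov[f_i,(f_j-\mathbb{E}[f_j])^{\otimes 2}]$ up to the lower-order diagonal terms $s=r$ which vanish identically. Dividing by $\frac{1}{N}\cdot\frac{1}{2M(M-1)}$ and multiplying by the number $P$ of shared points then gives the claimed $\frac{P}{NM}\mat{B}_{ij}$. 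Since the argument closely mirrors the proof of Proposition~\ref{meancov} combined with the double-sum manipulation behind Proposition~\ref{var_cov}, and the single-statistic special cases are documented in the literature cited for variance and Sobol estimation, I would present the reduction explicitly and relegate the remaining routine index-counting to a line or two, or to the Appendix as done for the companion propositions in this subsection.
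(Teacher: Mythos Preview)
Your approach is essentially identical to the paper's: expand the covariance as a triple sum, observe that only the summands with $\vec{n}^{(t)}$ coinciding with $\vec{m}^{(s)}$ or $\vec{m}^{(r)}$ survive, reduce each such term to $\mathbb{C}ov[f_i,(f_j-\mathbb{E}[f_j])^{\otimes 2}]$ via the independence substitution, and count. The only slip is in your combinatorics: for a fixed shared point $\vec{p}\in\set{P}$ there are $2(M-1)$ ordered pairs $(s,r)$ with $s\neq r$ and one of them equal to $\vec{p}$, not $2M(M-1)$; summing over all $P$ shared points gives $2P(M-1)$ nonzero terms, and then $\frac{1}{2NM(M-1)}\cdot 2P(M-1)=\frac{P}{NM}$ as claimed.
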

Using this result, we obtain the covariance between the discrepancies as follows.
\begin{prop}[Variance of discrepancies for MV]
\label{MeanCovVar}
    The variance of the discrepancies is
    \begin{align}
        \mathbb{C}ov[\bold \Delta_i\Ra{(\un{\set{Z}}_i)}, \bold \Delta_j\Ra{(\un{\set{Z}}_j)}] = \begin{bmatrix}
            \mathbb{C}ov[\bold \Delta_{\mu,i}\Ra{(\un{\set{Z}}_i)}, \bold \Delta_{\mu,j}\Ra{(\un{\set{Z}}_j)}]& \mathbb{C}ov[\bold \Delta_{\mu,i}\Ra{(\un{\set{Z}}_i)}, \bold \Delta_{V,j}\Ra{(\un{\set{Z}}_j)}] \\
            \mathbb{C}ov[\bold \Delta_{V,i}\Ra{(\un{\set{Z}}_i)}, \bold \Delta_{\mu,j}\Ra{(\un{\set{Z}}_j)}] & \mathbb{C}ov[\bold \Delta_{V,i}\Ra{(\un{\set{Z}}_i)}, \bold \Delta_{V,j}\Ra{(\un{\set{Z}}_j)}]
        \end{bmatrix},
    \end{align}
    such that $\mathbb{C}ov[\bold \Delta_{\mu,i}\Ra{(\un{\set{Z}}_i)}, \bold \Delta_{V,j}\Ra{(\un{\set{Z}}_j)}] = F_{ij}  {\mat{B}}_{ij}$
    where $F_{ij}$ is from Equation \eqref{MeanF} and \\
    $\mathbb{C}ov[\bold \Delta_{\mu,i}\Ra{(\un{\set{Z}}_i)}, \bold \Delta_{\mu,j}\Ra{(\un{\set{Z}}_j)}]$ and $\mathbb{C}ov[\bold \Delta_{V,i}\Ra{(\un{\set{Z}}_i)}, \bold \Delta_{V,j}\Ra{(\un{\set{Z}}_j)}]$ can be found in Propositions \ref{meanvar} and \ref{Var_delt} respectively.
\end{prop}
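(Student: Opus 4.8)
The plan is to reduce the whole statement to two ingredients that are already available: the single-sample-set cross-covariance formula of Proposition \ref{MeanCov}, namely $\cov{\vec{Q}_{\mu,i}(\set{N}),\vec{Q}_{V,j}(\set{M})} = \frac{P}{NM}\mat{B}_{ij}$ with $P=|\set{N}\cap\set{M}|$, and the already-proven discrepancy results of Propositions \ref{meanvar} and \ref{Var_delt}. Because the stacked estimator in~\eqref{eq:stacked} simply concatenates a mean block and a variance block, the discrepancy $\vec{\Delta}_i$ concatenates $\vec{\Delta}_{\mu,i}$ and $\vec{\Delta}_{V,i}$, so the claimed $2\times2$ block decomposition of $\cov{\vec{\Delta}_i,\vec{\Delta}_j}$ is immediate. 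The $(\mu,\mu)$ block is exactly $\cov{\vec{\Delta}_{\mu,i},\vec{\Delta}_{\mu,j}}=F_{ij}\mat{A}_{ij}$ from Proposition \ref{meanvar}, and the $(V,V)$ block is $\cov{\vec{\Delta}_{V,i},\vec{\Delta}_{V,j}}=F_{ij}\mat{W}_{ij}+H_{ij}\mat{V}_{ij}$ from Proposition \ref{Var_delt}; neither needs new work, so the only genuine task is the off-diagonal cross block.

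For the cross block I would expand the discrepancies exactly as in~\eqref{eq:covdeltdelt}, specialized to the mean/variance pair:
\[
\cov{\vec{\Delta}_{\mu,i},\vec{\Delta}_{V,j}} = \cov{\vec{Q}_{\mu,i}(\set{Z}_i^*),\vec{Q}_{V,j}(\set{Z}_j^*)} - \cov{\vec{Q}_{\mu,i}(\set{Z}_i^*),\vec{Q}_{V,j}(\set{Z}_j)} - \cov{\vec{Q}_{\mu,i}(\set{Z}_i),\vec{Q}_{V,j}(\set{Z}_j^*)} + \cov{\vec{Q}_{\mu,i}(\set{Z}_i),\vec{Q}_{V,j}(\set{Z}_j)}.
\]
Applying Proposition \ref{MeanCov} to each of the four terms replaces them by $\frac{N_{i*\cap j*}}{N_{i*}N_{j*}}\mat{B}_{ij}$, $\frac{N_{i*\cap j}}{N_{i*}N_{j}}\mat{B}_{ij}$, $\frac{N_{i\cap j*}}{N_{i}N_{j*}}\mat{B}_{ij}$, and $\frac{N_{i\cap j}}{N_{i}N_{j}}\mat{B}_{ij}$, respectively. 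Every term carries the same matrix factor $\mat{B}_{ij}$, so pulling it out leaves precisely the scalar combination that defines $F_{ij}$ in~\eqref{MeanF}, yielding $\cov{\vec{\Delta}_{\mu,i},\vec{\Delta}_{V,j}}=F_{ij}\mat{B}_{ij}$. The remaining $(V,\mu)$ block $\cov{\vec{\Delta}_{V,i},\vec{\Delta}_{\mu,j}}$ is handled by the same four-term expansion with the roles of mean and variance interchanged — equivalently, by transposing the previous identity and using the symmetry $F_{ij}=F_{ji}$ — so it equals $F_{ij}\,\cov{(f_i-\ee{f_i})^{\otimes 2},f_j}=F_{ij}\mat{B}_{ji}^T$.

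This argument is essentially bookkeeping, so the main (and only modest) obstacle is making sure that Proposition \ref{MeanCov}'s cross term is legitimately being used at the right level of generality: it must hold for arbitrary, possibly distinct and only partially overlapping input sets, since here it is invoked for the four pairs drawn from $\set{Z}_i^*,\set{Z}_i,\set{Z}_j^*,\set{Z}_j$. That proposition is in fact stated for generic sets $\set{N},\set{M}$, and the reason it holds is the same one used for Propositions \ref{meancov} and \ref{var_cov}: a $\vec{Q}_\mu$ summand and a $\vec{Q}_V$ summand are correlated only through input samples common to both sets, producing exactly the $\frac{P}{NM}$ scaling. If a self-contained derivation were preferred instead, one could expand the single sum defining $\vec{Q}_{\mu,i}$ and the double sum defining $\vec{Q}_{V,j}$ directly and count the index combinations whose inputs coincide; this reproduces the same result but is strictly more tedious than reusing Proposition \ref{MeanCov}.
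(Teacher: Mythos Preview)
Your proposal is correct and follows essentially the same approach as the paper: the paper's appendix simply states that Propositions \ref{MeanCovVar} and \ref{MeanVarHighLow} ``follow by using the above equation with different sets of input samples,'' i.e., it applies the cross-covariance formula $\cov{\vec{Q}_{\mu,i}(\set{N}),\vec{Q}_{V,j}(\set{M})}=\tfrac{P}{NM}\mat{B}_{ij}$ of Proposition \ref{MeanCov} to the four terms of the discrepancy expansion~\eqref{eq:covdeltdelt} and collects the resulting scalar into $F_{ij}$, exactly as you do.
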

Finally, the covariance between the high-fidelity and discrepancy estimators is provided.
\begin{prop}[Variance between high-fidelity and discrepancies for MV]
\label{MeanVarHighLow}
    The covariance between the high-fidelity and discrepancy estimators is
    \begin{align}
        \mathbb{C}ov[\bold Q\Ra{(\set{Z}_0)}, \bold \Delta_i\Ra{(\un{\set{Z}}_i)}] = \begin{bmatrix}
            \mathbb{C}ov[\bold Q_\mu\Ra{(\set{Z}_0)}, \bold \Delta_{\mu,i}\Ra{(\un{\set{Z}}_i)}] &  \mathbb{C}ov[\bold Q_\mu\Ra{(\set{Z}_0)}, \bold \Delta_{V,i}\Ra{(\un{\set{Z}}_i)}] \\
             \mathbb{C}ov[\bold Q_V\Ra{(\set{Z}_0)}, \bold \Delta_{\mu,i}\Ra{(\un{\set{Z}}_i)}] &  \mathbb{C}ov[\bold Q_V\Ra{(\set{Z}_0)}, \bold \Delta_{V,i}\Ra{(\un{\set{Z}}_i)}]
        \end{bmatrix},
    \end{align}
    such that $\mathbb{C}ov[\bold Q_\mu\Ra{(\set{Z}_0)}, \bold \Delta_{V,i}\Ra{(\un{\set{Z}}_i)}] = G_i {\mat{B}}_{0i}$ and $\mathbb{C}ov[\bold Q_V\Ra{(\set{Z}_0)}, \bold \Delta_{\mu,i}\Ra{(\un{\set{Z}}_i)}] = G_i \{{\un{\mat{B}}}^T\}_{0i} $
    where $G_i$ is from Equation \eqref{MeanG} and $\mathbb{C}ov[\bold Q_\mu\Ra{(\set{Z}_0)}, \bold \Delta_{\mu,i}\Ra{(\un{\set{Z}}_i)}]$ and $\mathbb{C}ov[\bold Q_V\Ra{(\set{Z}_0)}, \bold \Delta_{V,i}\Ra{(\un{\set{Z}}_i)}]$ can be found in Propositions \ref{meanhighlow} and \ref{Var_high} respectively.
\end{prop}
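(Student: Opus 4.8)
The plan is to assemble the four blocks of $\cov{\vec{Q},\vec{\Delta}_i}$ from results already in hand, so that essentially no new computation is required beyond careful bookkeeping of sample-set intersections and transposes. Recall $\vec{Q} = \vec{Q}_0(\set{Z}_0)$ and $\vec{\Delta}_i = \vec{Q}_i(\set{Z}_i^*) - \vec{Q}_i(\set{Z}_i)$, each stacked into a mean part and a variance part as in Equation~\eqref{eq:stacked}. Splitting $\vec{Q}_0(\set{Z}_0)$ into $\vec{Q}_{\mu,0}(\set{Z}_0)$ and $\vec{Q}_{V,0}(\set{Z}_0)$, and likewise for the discrepancies, bilinearity of covariance immediately produces the $2\times 2$ block form displayed in the statement, and, exactly as in Equation~\eqref{eq:covqdelt}, each block has the shape $\cov{\,\cdot_0(\set{Z}_0),\cdot_i(\set{Z}_i^*)} - \cov{\,\cdot_0(\set{Z}_0),\cdot_i(\set{Z}_i)}$.

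The two diagonal blocks need no work: $\cov{\vec{Q}_\mu, \vec{\Delta}_{\mu,i}}$ is precisely the quantity evaluated in Proposition~\ref{meanhighlow}, equal to $G_i \mat{A}_{0i}$, and $\cov{\vec{Q}_V, \vec{\Delta}_{V,i}}$ is the quantity evaluated in Proposition~\ref{Var_high}, equal to $J_i \mat{V}_{0i} + G_i \mat{W}_{0i}$.

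For the upper off-diagonal block I would use Equation~\eqref{eq:covqdelt} to write $\cov{\vec{Q}_\mu, \vec{\Delta}_{V,i}} = \cov{\vec{Q}_{\mu,0}(\set{Z}_0), \vec{Q}_{V,i}(\set{Z}_i^*)} - \cov{\vec{Q}_{\mu,0}(\set{Z}_0), \vec{Q}_{V,i}(\set{Z}_i)}$. Proposition~\ref{MeanCov} supplies $\cov{\vec{Q}_{\mu,0}(\set{N}), \vec{Q}_{V,i}(\set{M})} = \frac{P}{NM}\mat{B}_{0i}$ with $P = |\set{N}\cap\set{M}|$; substituting $(\set{N},\set{M}) = (\set{Z}_0,\set{Z}_i^*)$ and then $(\set{Z}_0,\set{Z}_i)$ and subtracting gives
\begin{align}
    \cov{\vec{Q}_\mu, \vec{\Delta}_{V,i}} = \left(\frac{N_{0\cap i*}}{N_0 N_{i*}} - \frac{N_{0\cap i}}{N_0 N_i}\right)\mat{B}_{0i} = G_i\,\mat{B}_{0i}, \nonumber
\end{align}
with $G_i$ as in Equation~\eqref{MeanG}. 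For the lower off-diagonal block I would first invoke the identity $\cov{\vec{X},\vec{Y}} = \cov{\vec{Y},\vec{X}}^T$ for vector-valued quantities to reduce to the case just handled: $\cov{\vec{Q}_{V,0}(\set{N}), \vec{Q}_{\mu,i}(\set{M})} = \cov{\vec{Q}_{\mu,i}(\set{M}), \vec{Q}_{V,0}(\set{N})}^T = \tfrac{P}{NM}\mat{B}_{i0}^T = \tfrac{P}{NM}\{\un{\mat{B}}^T\}_{0i}$, and then the same intersection-subtraction as above yields $G_i\{\un{\mat{B}}^T\}_{0i}$.

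The step I expect to be the main obstacle is not analytical but notational: keeping the transpose straight, i.e. recognizing that Proposition~\ref{MeanCov} is stated with the first fidelity feeding the mean slot and the second the variance slot, so that when the high-fidelity estimator $\vec{Q}$ occupies the \emph{variance} slot one must transpose $\un{\mat{B}}$ rather than merely swap subscripts — which is exactly why $\{\un{\mat{B}}^T\}_{0i} = \mat{B}_{i0}^T$ appears. A secondary (trivial) point to verify is that the mean/variance cross-covariance of two MC estimators carries the same $P/(NM)$ prefactor as the mean/mean covariance of Proposition~\ref{meancov}, so that the intersection weights collapse to the identical factor $G_i$ in both off-diagonal blocks.
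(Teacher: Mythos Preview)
Your proposal is correct and mirrors the paper's own argument: the appendix proves Proposition~\ref{MeanCov} and then simply states that Propositions~\ref{MeanCovVar} and~\ref{MeanVarHighLow} ``follow by using the above equation with different sets of input samples,'' which is precisely your substitution-and-subtraction step to produce the factor $G_i$. Your explicit handling of the transpose for the $\cov{\vec{Q}_V,\vec{\Delta}_{\mu,i}}$ block is, if anything, more careful than the paper's terse treatment.
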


\subsection{Sensitivity Analysis}
\label{sec:sensana}
In this section, we estimate the covariances required for main effect (ME) Sobol indices of a scalar function. \Ra{Unlike previous sections, the following results are only available for a single function output, $f: \reals^I \to \reals.$} In Section \ref{MultipleME}, multiple ME variances are estimated simultaneously. In Section \ref{VarMEVarSection}, the variance and multiple ME variances are estimated simultaneously.

For notation in this section, let $\Ra{\un{\vec{h}}(\vec{x})} = \Ra{\un{\vec{f}}}\Ra{(\vec{x})} \otimes \bold 1_I$ \Ra{be a $\mathbb{R}^{I(K+1)}$-valued random vector} and 
\begin{align}
   \Ra{\bar{\vec{h}}(\vec{y})} = \begin{bmatrix}
        f_{0}\Ra{(\vec{y}_1)} & f_{0}\Ra{(\vec{y}_2)} & \cdots & f_{0}\Ra{(\vec{y}_I)} & f_{1}\Ra{(\vec{y}_1)} & \cdots f_{K}\Ra{(\vec{y}_I)}
    \end{bmatrix}^T,
\end{align}
\Ra{where $\bar{\vec{h}}\Ra{(\vec{y})}$ is an $\reals^{I(K+1)}$-valued random vector. Note that $\vec{x}$ and $\vec{y}$ are independent, but $\vec{y}_u$ shares the $u$-th input element of $\vec{x}$ as described in Section \ref{sec:MCests}.}

\begin{tcolorbox}[colback=blue!5,colframe=blue!55!black,title=Required Covariances for Variance and Main Effect Variances Estimation]
The estimators in this section require these covariances
\begin{align}
    \mathbb{V}ar\begin{bmatrix}
        \Ra{\un{\vec{h}}(\vec{x})}\circ\Ra{\bar{\vec{h}}(\vec{y})} - \Ra{\un{\vec{h}}(\vec{x})}^{\circ 2} \\
        \Ra{\un{\vec{h}}(\vec{x})} \circ \Ra{\bar{\vec{h}}(\vec{y})} - 2 \Ra{\un{\vec{h}}(\vec{x})} \circ \mathbb{E}[\Ra{\un{\vec{h}}(\vec{x})}] \\
        \Ra{\un{\vec{g}}}
    \end{bmatrix} &= 
    \begin{bmatrix}
        \un{\mat{O}} & \un{\mat{S}} & \un{\mat{C}} \\
        & \un{\mat{R}} & \un{\mat{E}} \\
        \textrm{Sym.} & & \un{\mat{W}}
    \end{bmatrix} \in \reals^{(2I+1)(K+1)\times (2I+1)(K+1)} \\
    \un{\mat{U}} &= \var{\Ra{\un{{\vec{h}}}(\vec{x})}}^{\circ 2} \in \reals^{I(K+1)\times I(K+1)},
\end{align}
where $\un{\mat{W}}$ is from Eq. \eqref{eq:W}, $\un{\mat{O}}, \un{\mat{S}}, \un{\mat{R}} \in \reals^{I(K+1)\times I(K+1)}$, and $\un{\mat{C}}, \un{\mat{E}} \in \reals^{I(K+1)\times (K+1)}$. \Rb{More specifically,
\begin{align}
    \un{\mat{O}} &= \var{\un{\vec{h}}(\vec{x})\circ\bar{\vec{h}}(\vec{y}) - \un{\vec{h}}(\vec{x})^{\circ 2}} \\
    \un{\mat{R}} &= \var{\un{\vec{h}}(\vec{x}) \circ \bar{\vec{h}}(\vec{y}) - 2 \un{\vec{h}}(\vec{x}) \circ \mathbb{E}[\un{\vec{h}}(\vec{x})]} \\
    \un{\mat{S}} &= \cov{\un{\vec{h}}(\vec{x})\circ\bar{\vec{h}}(\vec{y}) - \un{\vec{h}}(\vec{x})^{\circ 2},\un{\vec{h}}(\vec{x}) \circ \bar{\vec{h}}(\vec{y}) - 2 \un{\vec{h}}(\vec{x}) \circ \mathbb{E}[\un{\vec{h}}(\vec{x})]}\\
    \un{\mat{C}} &= \cov{\un{\vec{h}}(\vec{x})\circ\bar{\vec{h}}(\vec{y}) - \un{\vec{h}}(\vec{x})^{\circ 2},\un{\vec{g}}}\\
    \un{\mat{E}} &= \cov{\un{\vec{h}}(\vec{x}) \circ \bar{\vec{h}}(\vec{y}) - 2 \un{\vec{h}}(\vec{x}) \circ \mathbb{E}[\un{\vec{h}}(\vec{x})],\un{\vec{g}}}.
\end{align}
}
\end{tcolorbox}

\subsubsection{Main Effect Variance Estimators}
\label{MultipleME}
We now estimate multiple ME variances of a scalar function. The combined ME variance estimator is
\begin{align}
    \bold Q_{i}(\set{N}) &= 
    \begin{bmatrix}
        \bold Q_{ i,1}(\set{N}) \\
        \bold Q_{ i,2}(\set{N}) \\
        \vdots \\
        \bold Q_{ i, I}(\set{N})
    \end{bmatrix} = \begin{bmatrix}
        \frac{1}{N^2}\sum_{j=1}^N\sum_{k=1}^N \left[f_i(\Ra{\vec{x}}^{(j)})f_i(\vec{y}^{(j)}_{1}) - f_i(\Ra{\vec{x}}^{(j)})f_i(\Ra{\vec{x}}^{(k)}) \right] \\
        \frac{1}{N^2}\sum_{j=1}^N\sum_{k=1}^N \left[f_i(\Ra{\vec{x}}^{(j)})f_i(\vec{y}^{(j)}_{2}) - f_i(\Ra{\vec{x}}^{(j)})f_i(\Ra{\vec{x}}^{(k)}) \right] \\
        \vdots \\
        \frac{1}{N^2}\sum_{j=1}^N\sum_{k=1}^N \left[f_i(\Ra{\vec{x}}^{(j)})f_i(\vec{y}^{(j)}_{I}) - f_i(\Ra{\vec{x}}^{(j)})f_i(\Ra{\vec{x}}^{(k)}) \right]
    \end{bmatrix}
    \label{eq:stackedSobol}
\end{align}
\Ra{where $\bold Q_{i}(\set{N})$ is a $\reals^{I}$-valued random vector and $\vec{Q}_{i,u}(\set{N})$ is the main effect variance estimator of the $i$-th fidelity on the $u$-th input element.}

\begin{prop}[Covariance between Main Effect Variance Estimators]
\label{SobolEstCov2}
    The covariance between two stacked MC estimators \eqref{eq:stackedSobol}, $\vec{Q}_i\Ra{(\set{N})}$ and $\vec{Q}_j\Ra{(\set{M})}$, corresponding to fidelities $i,j$ computed via input sets $\set{N}, \set{M}$, respectively, is
    \begin{align}
    \label{MultipleMEcovariance}
    \mathbb{C}ov[\bold Q_{i}(\set{N}),\bold Q_{j}(\set{M})] &= \frac{P}{M^2N^2}\left[ (N-1)(M-1) {\mat{R}}_{ij} \right. + (N-1)  \{{\un{\mat{S}}}^T\}_{ij} \notag\\
    &\quad + (M-1) {\mat{S}}_{ij}  + {\mat{O}}_{ij} + 2(P-1) {\mat{U}}_{ij} \left. \right].
    \end{align}
\end{prop}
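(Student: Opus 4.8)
The strategy is to compute $\mathbb{C}ov[\bold Q_i(\set N),\bold Q_j(\set M)]$ directly from the double-sum representation in Equation~\eqref{eq:stackedSobol}, reducing everything to covariances of products of function evaluations and then counting how many of the index-pairs contribute, depending on whether the underlying input samples coincide. First I would write each component $\bold Q_{i,u}(\set N) = \tfrac{1}{N^2}\sum_{j,k}\big[f_i(\vec z^{(j)})f_i(\vec y_u^{(j)}) - f_i(\vec z^{(j)})f_i(\vec z^{(k)})\big]$, abbreviate the summand as $\phi_{i,u}^{(j,k)}$, and expand $\mathbb{C}ov[\bold Q_{i,u}(\set N),\bold Q_{j,v}(\set M)] = \tfrac{1}{N^2M^2}\sum_{j,k}\sum_{s,t}\mathbb{C}ov[\phi_{i,u}^{(j,k)},\phi_{j,v}^{(s,t)}]$. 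As in the proof of Proposition~\ref{meancov}, a term is nonzero only when the two summands share at least one common input sample; since the $\set Z$-part is common to all ME estimators while the $\set Y_u$-parts are independent across $u$ (except in their $u$-th coordinate, which equals the corresponding coordinate of $\set Z$), I would organize the count by which of $\vec z^{(j)},\vec z^{(k)}$ of the first estimator coincide with which of $\vec z^{(s)},\vec z^{(t)}$ of the second — and note the $\set Y$-samples contribute only through that shared $u$-th coordinate, which is already encoded in the definitions of $\underline{\vec f_{\vec x}}$ and the matrices $\un{\mat R},\un{\mat S},\un{\mat O},\un{\mat U}$.

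The bookkeeping splits naturally into cases by the multiplicity pattern of coinciding indices among $\{j,k\}$ and $\{s,t\}$ (restricted to the $P$ shared samples in $\set P = \set N\cap\set M$). The diagonal-type contributions where $j=k$ and/or $s=t$ versus the generic $j\neq k$, $s\neq t$ case are what produce the split into the four leading matrices: the generic $(N-1)(M-1)$-weighted term pairs the "centered-product minus square" pieces giving $\un{\mat R}$; the mixed terms where exactly one sum collapses give the $(N-1)$- and $(M-1)$-weighted $\un{\mat S}$, $\un{\mat S}^T$ pieces; the fully collapsed term gives $\un{\mat O}$; and the extra $2(P-1)$-weighted $\un{\mat U}$ term arises from the second-order interaction among three distinct shared samples (the analogue of the $V$-type correction seen in Proposition~\ref{var_cov}). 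Pulling out the common factor $P/(M^2N^2)$ and matching each count to $P$, $P(P-1)$, etc., then collecting should reproduce Equation~\eqref{MultipleMEcovariance}. I would present this at the level of "there are $P(P-1)$ pairs of the following type, each contributing $\ldots$," deferring the full per-case combinatorics to the appendix in the spirit of the $V$-estimator proofs.

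The main obstacle is the combinatorial case analysis: unlike the mean estimator, each summand of a Sobol estimator is itself a product of two (or more) function evaluations at partially-coupled inputs, so a single pair $(j,k)$ vs.\ $(s,t)$ of the outer double sums already involves up to four distinct input points, and I must carefully enumerate all coincidence patterns among them — including the subtle ones where, say, $\vec y_u^{(j)}$ of the first estimator and $\vec z^{(s)}$ of the second agree only in coordinate $u$ — and verify none is double-counted. Keeping the independence structure straight (the $\set Y_u$ sets are mutually independent across $u$ and independent of $\set Z$ apart from the single shared coordinate) is where errors are easiest to make; the cleanest route is to first treat $u=v$ and $u\neq v$ as a single computation using the vector functions $\underline{\vec f_{\vec x}}$, so that the index $u$ never appears explicitly and the answer comes out directly in the promised matrix form.
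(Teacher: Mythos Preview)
Your plan is essentially the same as the paper's: expand the quadruple sum, enumerate the coincidence patterns among the four $\set Z$-indices, compute the covariance in each pattern, count occurrences, and recombine. The paper carries this out via an explicit 11-case table (four single-coincidence cases, four triple-coincidence cases, two double-pair cases, and the all-equal case) and then simplifies; your split by whether $j=k$ and/or $s=t$ is just a slightly different cut of the same partition.

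One small correction: the $2(P-1)\,\un{\mat U}$ contribution does not come from ``three distinct shared samples.'' It comes from the two double-pair patterns $a=c,\ b=d$ and $a=d,\ b=c$ (exactly \emph{two} shared samples), each occurring $P(P-1)$ times --- this is precisely the analogue of the (E)/(F) cases in the variance-estimator proof, as you guessed. Also, your concern about $\vec y_u^{(j)}$ coinciding partially with some $\vec z^{(s)}$ is a red herring: that coupling is entirely absorbed into the definitions of $f_{i,u}$ and hence of $\un{\mat R},\un{\mat S},\un{\mat O}$, so the case enumeration runs only over the $\set Z$-indices $a,b,c,d$ and never needs to track the $\set Y$-samples separately.
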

Using this result, we obtain the covariance between the discrepancies as follows.

\begin{prop}[Var. of discrepancies for ME]
\label{MultiSobolVar}
    The covariance between discrepancies is 
    $\cov{\vec{\Delta}_{i}\Ra{(\un{\set{Z}}_i)},\vec{\Delta}_{j}\Ra{(\un{\set{Z}}_j)}} = F_{ij} {\mat{O}}_{ij} + G_{ij} {\mat{R}}_{ij} + H_{ij} \{{\un{\mat{S}}}^T\}_{ij} + H_{ji} {\mat{S}}_{ij} + J_{ij} {\mat{U}}_{ij}$
    where  
    \begin{align}
    \label{MEF}
        F_{ij} &= 
             \frac{\Ncomb{i}{*}{j}{*}}{\Nsing{i}{*}^2\Nsing{j}{*}^2} - \frac{\Ncomb{i}{*}{j}{}}{\Nsing{i}{*}^2\Nsing{j}{}^2} - \frac{\Ncomb{i}{}{j}{*}}{\Nsing{i}{}^2\Nsing{j}{*}^2} + \frac{\Ncomb{i}{}{j}{}}{\Nsing{i}{}^2\Nsing{j}{}^2} \\
        G_{ij} &= 
            \frac{\Ncomb{i}{*}{j}{*}(\Nsing{i}{*}-1)(\Nsing{j}{*}-1)}{\Nsing{i}{*}^2 \Nsing{j}{*}^2} -
            \frac{\Ncomb{i}{*}{j}{}(\Nsing{i}{*}-1)(\Nsing{j}{}-1)}{\Nsing{i}{*}^2 \Nsing{j}{}^2} \nonumber\\
            &\quad -
            \frac{\Ncomb{i}{}{j}{*}(\Nsing{i}{}-1)(\Nsing{j}{*}-1)}{\Nsing{i}{}^2 \Nsing{j}{*}^2} + 
            \frac{\Ncomb{i}{}{j}{}(\Nsing{i}{}-1)(\Nsing{j}{}-1)}{\Nsing{i}{}^2 \Nsing{j}{}^2}\\
        H_{ij} &= 
            \frac{\Ncomb{i}{*}{j}{*}(\Nsing{i}{*}-1)}{\Nsing{i}{*}^2 \Nsing{j}{*}^2} - \frac{\Ncomb{i}{*}{j}{}(\Nsing{i}{*}-1)}{\Nsing{i}{*}^2 \Nsing{j}{}^2} \nonumber\\
            &\quad - \frac{\Ncomb{i}{}{j}{*}(\Nsing{i}{}-1)}{\Nsing{i}{}^2 \Nsing{j}{*}^2} + \frac{\Ncomb{i}{}{j}{}(\Nsing{i}{}-1)}{\Nsing{i}{}^2 \Nsing{j}{}^2} \\
        \label{MEJ}
        J_{ij} &=
            2\frac{\Ncomb{i}{*}{j}{*}(\Ncomb{i}{*}{j}{*}-1)}{\Nsing{i}{*}^2 \Nsing{j}{*}^2} - 2\frac{\Ncomb{i}{*}{j}{}(\Ncomb{i}{*}{j}{}-1)}{\Nsing{i}{*}^2 \Nsing{j}{}^2} \nonumber\\
            &\quad - 2\frac{\Ncomb{i}{}{j}{*}(\Ncomb{i}{}{j}{*}-1)}{\Nsing{i}{}^2 \Nsing{j}{*}^2} + 2\frac{\Ncomb{i}{}{j}{}(\Ncomb{i}{}{j}{}-1)}{\Nsing{i}{}^2 \Nsing{j}{}^2}.
        \end{align}
\end{prop}
Finally, the covariance between the high-fidelity and discrepancy estimators is provided.

\begin{prop}[Variance between high-fidelity and discrepancies for ME]
\label{MultiSobolHighLow}
    The covariance between the high-fidelity and discrepancy estimator is\\
    $\mathbb{C}ov[\bold Q\Ra{({\set{Z}}_0)}, \vec{\Delta}_{i}\Ra{(\un{\set{Z}}_i)}] = V_i {\mat{O}}_{0i} + W_i {\mat{R}}_{0i} + X_{i0} \{{\un{\mat{S}}}^T\}_{0i} + X_{0i} {\mat{S}}_{0i} + Z_i {\mat{U}}_{0i}$ where \Ra{$\set{Z}_0^* \equiv \set{Z}_0$} and 
    \begin{align}
    \label{MEV}
        V_i &= \frac{\Ncomb{0}{}{i}{*}}{\Nsing{0}{}^2 \Nsing{i}{*}^2} - \frac{\Ncomb{0}{}{i}{}}{\Nsing{0}{}^2 \Nsing{i}{}^2}\\ 
        W_i &= \frac{\Ncomb{0}{}{i}{*}(\Nsing{i}{*}-1)(\Nsing{0}{} - 1)}{\Nsing{0}{}^2 \Nsing{i}{*}^2} - \frac{\Ncomb{0}{}{i}{}(\Nsing{i}{}-1)(\Nsing{0}{} - 1)}{\Nsing{0}{}^2 \Nsing{i}{}^2} \\
        X_{ij} &= \frac{\Ncomb{i}{*}{j}{*}(\Nsing{j}{*}-1)}{\Nsing{i}{*}^2 \Nsing{j}{*}^2} - \frac{\Ncomb{i}{}{j}{}(\Nsing{j}{}-1)}{\Nsing{i}{}^2 \Nsing{j}{}^2} \\
        \label{MEZ}
        Z_i &= 2\frac{\Ncomb{0}{}{i}{*}(\Ncomb{0}{}{i}{*}-1)}{\Nsing{0}{}^2 \Nsing{i}{*}^2} - 2\frac{\Ncomb{0}{}{i}{}(\Ncomb{0}{}{i}{}-1)}{\Nsing{0}{}^2 \Nsing{i}{}^2}.
    \end{align}
\end{prop}

\subsubsection{Variance and Main Effect Variance Estimator}
\label{VarMEVarSection}
We now estimate the variance and multiple ME variances (MEV estimator) of a scalar function. The stacked variance and ME variance estimator is
\begin{align}
    \bold Q_i(\set{N}) &=\begin{bmatrix}
        \bold Q_{ i,1}(\set{N}) \\
        \bold Q_{ i,2}(\set{N}) \\
        \vdots \\
        \bold Q_{ i,I}(\set{N}) \\
        \bold Q_{V, i}(\set{N})
    \end{bmatrix} = \begin{bmatrix}
        \frac{1}{N^2}\sum_{j=1}^N\sum_{k=1}^N \left[f_i(\Ra{\vec{x}}^{(j)})f_i(\vec{y}^{(j)}_{1}) - f_i(\Ra{\vec{x}}^{(j)})f_i(\Ra{\vec{x}}^{(k)}) \right] \\
        \frac{1}{N^2}\sum_{j=1}^N\sum_{k=1}^N \left[f_i(\Ra{\vec{x}}^{(j)})f_i(\vec{y}^{(j)}_{2}) - f_i(\Ra{\vec{x}}^{(j)})f_i(\Ra{\vec{x}}^{(k)}) \right] \\
        \vdots \\
        \frac{1}{N^2}\sum_{j=1}^N\sum_{k=1}^N \left[f_i(\Ra{\vec{x}}^{(j)})f_i(\vec{y}^{(j)}_{I}) - f_i(\Ra{\vec{x}}^{(j)})f_i(\Ra{\vec{x}}^{(k)}) \right]\\
        \frac{1}{2N(N-1)}\sum_{j=1}^N\sum_{k=1}^N \left( f_i(\Ra{\vec{x}}^{(j)})-f_i(\Ra{\vec{x}}^{(k)}) \right)^{2}
        \end{bmatrix}
        \label{eq:stackedMEV}
\end{align}
\Ra{where $\bold Q_{i}(\set{N})$ is a $\reals^{I+1}$-valued random vector.} 

\begin{prop}[Covariance between Variance and Main Effect Variance Estimators]
\label{VarSobolCov}
The covariance between two stacked MC estimators \eqref{eq:stackedMEV}, $\vec{Q}_i\Ra{(\set{N})}$ and $\vec{Q}_j\Ra{(\set{M})}$, corresponding to fidelities $i,j$ computed via input sets $\set{N}, \set{M}$, respectively, is
\begin{align}
    \mathbb{C}ov[\bold Q_i(\set{N}), \bold Q_j(\set{M})] = \begin{bmatrix}
        \mathbb{C}ov[\bold Q_{i,\vec{x}}(\set{N}), \bold Q_{j,\vec{x}}(\set{M})] & \mathbb{C}ov[\bold Q_{ i,\vec{x}}(\set{N}), \vec{Q}_{V, j}(\set{M})] \\
        \mathbb{C}ov[\vec{Q}_{V, i}(\set{N}), \vec{Q}_{j,\vec{x}}(\set{M})] & \mathbb{C}ov[\vec{Q}_{V, i}(\set{N}), \vec{Q}_{V, j}(\set{M}) \\
    \end{bmatrix},
\end{align}
where $\bold Q_{i,\vec{x}}(\set{N}) = [\mat{Q}_{i,1}(\set{N}) \ldots \mat{Q}_{i,I}(\set{N})]^T.$ The diagonal terms of this block-matrix can be found in Propositions \ref{var_cov} and \ref{SobolEstCov2}. Now, the covariance between the ME variance estimator and the variance estimator is
    \begin{align}
        \mathbb{C}ov[\bold Q_{i,\vec{x}}(\set{N}), \bold Q_{V,j}(\set{M})] &=
        \frac{P(N-1)}{MN^2} {\vec{E}}_{ij} + \frac{P}{MN^2} {\vec{C}}_{ij} + \frac{2P(P-1)}{M(M-1)N^2} {\vec{U}}_{ij,0},
    \end{align}
    where ${\vec{U}}_{ij,0} \in \reals^I$ is the first column of $\vec{U}_{ij}$.
\end{prop}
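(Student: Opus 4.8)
Since the block-diagonal entries of $\mathbb{C}ov[\bold Q_i(\set{N}),\bold Q_j(\set{M})]$ are already supplied by Propositions~\ref{var_cov} and~\ref{SobolEstCov2}, the plan is to establish only the new off-diagonal block $\mathbb{C}ov[\bold Q_{i,\vec{x}}(\set{N}),\bold Q_{V,j}(\set{M})]\in\reals^{I}$, component by component; its $u$-th entry is $\mathbb{C}ov[\bold Q_{i,u}(\set{N}),\bold Q_{V,j}(\set{M})]$. First I would expand both factors as index sums: writing the main-effect estimator as $\bold Q_{i,u}(\set{N})=\frac{1}{N}\sum_p f_i(\vec{z}^{(p)})f_i(\vec{y}^{(p)}_u)-\big(\frac1N\sum_p f_i(\vec{z}^{(p)})\big)^2$ and using the symmetric form of the variance estimator from~\eqref{eq:varianceMC}, $\bold Q_{V,j}(\set{M})=\frac{1}{2M(M-1)}\sum_{r,t}\big(f_j(\vec{m}^{(r)})-f_j(\vec{m}^{(t)})\big)^2$, bilinearity of covariance reduces $\mathbb{C}ov[\bold Q_{i,u},\bold Q_{V,j}]$ to a weighted sum of scalar covariances of the form $\mathbb{C}ov[f_i(\vec{a})f_i(\vec{a}'),\,(f_j(\vec{m}^{(r)})-f_j(\vec{m}^{(t)}))^2]$.

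The core of the argument is then the independence bookkeeping, following the pattern of the proofs of Propositions~\ref{meancov},~\ref{var_cov} and~\ref{SobolEstCov2}. Two function evaluations are correlated only when their input samples agree in every coordinate; since the ``mixed'' points $\vec{y}^{(p)}_u$ redraw all but the $u$-th coordinate independently, they are almost surely distinct from every $\vec{m}^{(r)}$, so a term survives only through a base point $\vec{z}^{(p)}$ that lies in $\set{P}=\set{N}\cap\set{M}$. Each surviving term therefore pins one $\set{N}$-base-index to a shared point, and I would then classify the terms by how the two $\set{M}$-indices $r,t$ attach to it: (i) a \emph{single} coincidence, in which exactly one of $\vec{m}^{(r)},\vec{m}^{(t)}$ is the shared point and the other is a sample that is (for the purpose of that term) independent of the $\set{N}$-side; and (ii) a \emph{double} coincidence, in which the two $\set{M}$-indices hit two distinct shared points. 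Inspection shows the double-coincidence pattern can only arise from the $p\neq q$ part of $\big(\frac1N\sum_p f_i(\vec{z}^{(p)})\big)^2$ with both $\vec{z}^{(p)},\vec{z}^{(q)}\in\set{P}$.

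It remains to identify which statistic-covariance each class reproduces. For a single coincidence, expanding $(f_j(\vec{z}^{(p)})-f_j(\vec{m}^{(s)}))^2$ and using that $f_j(\vec{m}^{(s)})$ is independent of the $\set{N}$-side collapses its cross term to a factor $-2\,\mathbb{E}[f_j]\,f_j(\vec{z}^{(p)})$, and, after discarding the additive constants that covariance annihilates, the second factor is exactly $\vec{g}=(\un{\bold f}-\mathbb{E}[\un{\bold f}])^{\otimes 2}$ from~\eqref{eq:vecdefs}. Combining the $f_if_{i,u}$ term of $\bold Q_{i,u}$ with the off-diagonal ($p\neq q$) part of the squared mean produces the first factor $\un{\bold f}\circ\un{\vec{f}_{\vec{x}}}-2\,\un{\bold f}\circ\mathbb{E}[\un{\bold f}]$, i.e.\ the block $\un{\mat{E}}$, while the diagonal ($p=q$) part of the squared mean supplies the residual first factor $\un{\bold f}\circ\un{\vec{f}_{\vec{x}}}-\un{\bold f}^{\circ 2}$, i.e.\ the block $\un{\mat{C}}$. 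The double-coincidence class contributes the squared covariance $\mathbb{C}ov[f_i,f_j]^2$, which is precisely $\vec{U}_{ij,0}$ (every column of $\mat{U}_{ij}=\var{\un{\vec{f}}}^{\circ 2}$ is the constant vector $\mathbb{C}ov[f_i,f_j]^2\bold 1_I$, so only the first column is needed); the accompanying factor of $2$ lives in the prefactor, mirroring the $\mat{V}_{ij}=2\,\mathbb{C}ov[f_i,f_j]^2$ structure behind Proposition~\ref{var_cov}. Counting the surviving tuples in each class as functions of $N$, $M$ and $P$ and dividing by the $\frac1N$, $\frac1{N^2}$ and $\frac1{2M(M-1)}$ normalizations then yields the prefactors $\frac{P(N-1)}{MN^2}$, $\frac{P}{MN^2}$ and $\frac{2P(P-1)}{M(M-1)N^2}$, and stacking over $u=1,\dots,I$ gives the stated expression.

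The main obstacle is this combinatorial identification together with an inclusion--exclusion subtlety in the counting: a given index tuple can simultaneously be ``single coincidence in $\set{M}$'' and ``both $\set{N}$-base-indices in $\set{P}$'', so one must carefully separate the genuine double-coincidence contribution from the single-coincidence tally to avoid double counting, and one must correctly recognize that the recentering forced by the $-2f_j\mathbb{E}[f_j]$ cross term of the variance estimator, acting on the two-piece Sobol estimator, is exactly what converts the raw product covariances into the $\un{\mat{E}}$ block (with ``$-2\mathbb{E}[\cdot]$'') as opposed to the $\un{\mat{C}}$ block (with ``$-(\cdot)^{\circ 2}$''). Once this case analysis is pinned down, each piece is the routine linearity-and-counting computation used for Propositions~\ref{var_cov} and~\ref{SobolEstCov2}, and the full argument would be deferred to the appendix.
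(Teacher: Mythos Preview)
Your proposal is correct and follows essentially the same route as the paper: expand both estimators as index sums, reduce by bilinearity to a sum of elementary covariances, and classify the nonzero terms by which $\set{N}$- and $\set{M}$-indices coincide on shared samples in $\set{P}$. The paper tabulates eight explicit coincidence patterns (e.g.\ $a=c\neq b\neq d$, \ldots, $a=d\neq b=c$) and then combines them, whereas you organize the same casework more conceptually as ``single coincidence'' split by diagonal/off-diagonal in the squared-mean part (yielding the $\un{\mat{C}}$ and $\un{\mat{E}}$ blocks) versus ``double coincidence'' (yielding the $\mathbb{C}ov[f_i,f_j]^2$ term packaged as $\vec{U}_{ij,0}$); the inclusion--exclusion subtlety you flag is exactly the cancellation the paper performs when the $-P(P-1)$ piece of the single-coincidence count is absorbed by the non-$\mathbb{C}ov[f_i,f_j]^2$ part of the double-coincidence contribution.
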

Using this result, we obtain the covariance between the discrepancies as follows.

\begin{prop}[Variance of discrepancies for MEV]
\label{VarMEVarLow}
    The variance of the discrepancies is
    \begin{align}
        \mathbb{C}ov[\bold \Delta_i\Ra{(\un{\set{Z}}_i)}, \bold \Delta_j\Ra{(\un{\set{Z}}_j)}] = \begin{bmatrix}
            \mathbb{C}ov[\bold \Delta_{i,\vec{x}}\Ra{(\un{\set{Z}}_i)}, \bold \Delta_{j,\vec{x}}\Ra{(\un{\set{Z}}_j)}]& \mathbb{C}ov[\bold \Delta_{i,\vec{x}}\Ra{(\un{\set{Z}}_i)}, \bold \Delta_{V,j}\Ra{(\un{\set{Z}}_j)}] \\
            \mathbb{C}ov[\bold \Delta_{ V,i}\Ra{(\un{\set{Z}}_i)}, \bold \Delta_{j,\vec{x}}\Ra{(\un{\set{Z}}_j)}] & \mathbb{C}ov[\bold \Delta_{ V,i}\Ra{(\un{\set{Z}}_i)}, \bold \Delta_{ V,j}\Ra{(\un{\set{Z}}_j)}]
        \end{bmatrix},
    \end{align}
    where the diagonal terms can be seen in Propositions \ref{MultiSobolVar} and \ref{Var_delt}. Now, \\
    $\mathbb{C}ov[\bold \Delta_{i,\vec{x}}\Ra{(\un{\set{Z}}_i)}, \bold \Delta_{V,j}\Ra{(\un{\set{Z}}_j)}] = F_{ij} {\vec{E}}_{ij} + G_{ij} {\vec{C}}_{ij} + H_{ij} {\vec{U}}_{ij,0}$ such that
    \begin{align}
        F_{ij} &= \scalemath{0.9}{
            \frac{\Ncomb{i}{*}{j}{*}(\Nsing{i}{*}-1)}{\Nsing{i}{*}^2 \Nsing{j}{*}} - \frac{\Ncomb{i}{*}{j}{}(\Nsing{i}{*}-1)}{\Nsing{i}{*}^2 \Nsing{j}{}}  - \frac{\Ncomb{i}{}{j}{*}(\Nsing{i}{}-1)}{\Nsing{i}{}^2 \Nsing{j}{*}} + \frac{\Ncomb{i}{}{j}{}(\Nsing{i}{}-1) }{\Nsing{i}{}^2 \Nsing{j}{}}} \\
        G_{ij} &=
            \frac{\Ncomb{i}{*}{j}{*}}{\Nsing{i}{*}^2\Nsing{j}{*}} - \frac{\Ncomb{i}{*}{j}{}}{\Nsing{i}{*}^2\Nsing{j}{}} - \frac{\Ncomb{i}{}{j}{*}}{\Nsing{i}{}^2\Nsing{j}{*}} + \frac{\Ncomb{i}{}{j}{}}{\Nsing{i}{}^2\Nsing{j}{}}  \\
        H_{ij} &= 
            \frac{2\Ncomb{i}{*}{j}{*}(\Ncomb{i}{*}{j}{*} -1)}{\Nsing{i}{*}^2\Nsing{j}{*}(\Nsing{j}{*}-1)} - \frac{2\Ncomb{i}{*}{j}{}(\Ncomb{i}{*}{j}{} -1)}{\Nsing{i}{*}^2\Nsing{j}{}(\Nsing{j}{}-1)} \nonumber\\
            &\quad - \frac{2\Ncomb{i}{}{j}{*}(\Ncomb{i}{}{j}{*} -1)}{\Nsing{i}{}^2\Nsing{j}{*}(\Nsing{j}{*}-1)} + \frac{2\Ncomb{i}{}{j}{}(\Ncomb{i}{}{j}{} -1)}{\Nsing{i}{}^2\Nsing{j}{}(\Nsing{j}{}-1)}.
    \end{align}
\end{prop}
Finally, the covariance between the high-fidelity and discrepancy estimators is provided.

\begin{prop}[Variance between high-fidelity and discrepancies for MEV]
\label{VarMEVarHighLow}
    The covariance between the high-fidelity and discrepancy estimators is
     \begin{align}
        \mathbb{C}ov[\bold Q\Ra{({\set{Z}}_0)}, \bold \Delta_i\Ra{(\un{\set{Z}}_i)}] = \begin{bmatrix}
            \mathbb{C}ov[\bold Q_{0, \bold x}\Ra{({\set{Z}}_0)}, \bold \Delta_{i,\vec{x}}\Ra{(\un{\set{Z}}_i)}] &  \mathbb{C}ov[\bold Q_{0, \bold x}\Ra{({\set{Z}}_0)}, \bold \Delta_{V,i}\Ra{(\un{\set{Z}}_i)}] \\
             \mathbb{C}ov[\bold Q_{ V, 0}\Ra{({\set{Z}}_0)}, \bold \Delta_{i,\vec{x}}\Ra{(\un{\set{Z}}_i)}] &  \mathbb{C}ov[\bold Q_{ V,0}\Ra{({\set{Z}}_0)}, \bold \Delta_{V,i}\Ra{(\un{\set{Z}}_i)}]
        \end{bmatrix},
    \end{align}
    where the diagonal terms can be seen in Propositions \ref{MultiSobolHighLow} and \ref{Var_high}. The covariance between the high and low fidelities of the variance and ME variance estimators is
    \begin{align}
        \mathbb{C}ov[\bold Q_{0,\vec{x}}\Ra{({\set{Z}}_0)}, \bold \Delta_{V,i}\Ra{(\un{\set{Z}}_i)}] &= L_{0i} {\vec{E}}_{0i} + I_{0i} {\vec{C}}_{0i} + J_{0i} {\vec{U}}_{0i,0}\\
        \mathbb{C}ov[\bold Q_{V,0}\Ra{({\set{Z}}_0)}, \bold \Delta_{i,\vec{x}}\Ra{(\un{\set{Z}}_i)}] &= L_{i0} \{{\un{\mat{E}}}^T\}_{0i} + I_{i0} \{{\un{\mat{C}}}^T\}_{0i} + J_{i0} \{{\vec{U}}_{i0,0}\}^T,
    \end{align}
    such that
    \begin{align}
        L_{ij} &=  \frac{\Ncomb{i}{*}{j}{*}(\Nsing{i}{*} -1)}{\Nsing{i}{*}^2 \Nsing{j}{*}} - \frac{\Ncomb{i}{}{j}{}(\Nsing{i}{} -1)}{\Nsing{i}{}^2 \Nsing{j}{}} \\
        I_{ij} &= \frac{\Ncomb{i}{*}{j}{*}}{\Nsing{i}{*}^2 \Nsing{j}{*}} - \frac{\Ncomb{i}{}{j}{}}{\Nsing{i}{}^2 \Nsing{j}{}} \\
        J_{ij} &= 2\frac{\Ncomb{i}{*}{j}{*}(\Ncomb{i}{*}{j}{*}-1)}{\Nsing{i}{*}^2 \Nsing{j}{*}(\Nsing{j}{*}-1)} - 2\frac{\Ncomb{i}{}{j}{}(\Ncomb{i}{}{j}{}-1)}{\Nsing{i}{}^2 \Nsing{j}{}(\Nsing{j}{}-1)}.
    \end{align}
\end{prop}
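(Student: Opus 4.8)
The plan is to follow the same two-step recipe used for every other ``Variance between high-fidelity and discrepancies'' result in this section. Write the high-fidelity estimator as $\bold Q = \bold Q_0(\set{Z}_0)$ with blocks $\bold Q_{0,\vec{x}}(\set{Z}_0)$ and $\bold Q_{V,0}(\set{Z}_0)$ as in \eqref{eq:stackedMEV}, and the discrepancy as $\bold \Delta_i = [\bold \Delta_{i,\vec{x}}^T, \bold \Delta_{V,i}^T]^T$ with $\bold \Delta_{i,\vec{x}} = \bold Q_{i,\vec{x}}(\set{Z}_i^*) - \bold Q_{i,\vec{x}}(\set{Z}_i)$ and $\bold \Delta_{V,i} = \bold Q_{V,i}(\set{Z}_i^*) - \bold Q_{V,i}(\set{Z}_i)$. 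Then $\cov{\bold Q, \bold \Delta_i}$ is a $2 \times 2$ block matrix whose entries are treated separately. The $(1,1)$ block $\cov{\bold Q_{0,\vec{x}}(\set{Z}_0), \bold \Delta_{i,\vec{x}}}$ is precisely the statement of Proposition~\ref{MultiSobolHighLow} (with the convention $N_{0*} \equiv N_0$, since $\bold Q$ uses only $\set{Z}_0$), and the $(2,2)$ block $\cov{\bold Q_{V,0}(\set{Z}_0), \bold \Delta_{V,i}}$ is precisely the statement of Proposition~\ref{Var_high}; these are already proved, so only the two off-diagonal blocks need work.

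For the $(1,2)$ block I would start from the cross term of Proposition~\ref{VarSobolCov}, $\cov{\bold Q_{i,\vec{x}}(\set{N}), \bold Q_{V,j}(\set{M})} = \frac{P(N-1)}{MN^2}\vec{E}_{ij} + \frac{P}{MN^2}\vec{C}_{ij} + \frac{2P(P-1)}{M(M-1)N^2}\vec{U}_{ij,0}$, specialized to fidelities $(i,j) = (0,i)$. By the discrepancy decomposition \eqref{eq:covqdelt} applied to this block, $\cov{\bold Q_{0,\vec{x}}, \bold \Delta_{V,i}} = \cov{\bold Q_{0,\vec{x}}(\set{Z}_0), \bold Q_{V,i}(\set{Z}_i^*)} - \cov{\bold Q_{0,\vec{x}}(\set{Z}_0), \bold Q_{V,i}(\set{Z}_i)}$. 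Into the first covariance I substitute $(N,M,P) = (N_0, N_{i*}, N_{0\cap i*})$ and into the second $(N,M,P) = (N_0, N_i, N_{0\cap i})$; subtracting and grouping the coefficients of $\vec{E}_{0i}$, $\vec{C}_{0i}$, and $\vec{U}_{0i,0}$ yields exactly $L_{0i}$, $I_{0i}$, and $J_{0i}$ as defined in the statement.

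For the $(2,1)$ block I would first use $\cov{\bold X, \bold Y} = \cov{\bold Y, \bold X}^T$ to write $\cov{\bold Q_{V,0}(\set{Z}_0), \bold Q_{i,\vec{x}}(\set{M})} = \cov{\bold Q_{i,\vec{x}}(\set{M}), \bold Q_{V,0}(\set{Z}_0)}^T$, then apply Proposition~\ref{VarSobolCov} with fidelities $(i,j) = (i,0)$, ME sample set $\set{M} \in \{\set{Z}_i^*, \set{Z}_i\}$ and variance sample set $\set{Z}_0$. The transposes turn $\vec{E}_{i0}$, $\vec{C}_{i0}$, $\vec{U}_{i0,0}$ into the blocks $\{\un{\mat{E}}^T\}_{0i}$, $\{\un{\mat{C}}^T\}_{0i}$, $\{\vec{U}_{i0,0}\}^T$ appearing in the statement. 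Taking the $\set{Z}_i^*$ instantiation, $(N,M,P) = (N_{i*}, N_0, N_{0\cap i*})$, minus the $\set{Z}_i$ instantiation, $(N,M,P) = (N_i, N_0, N_{0\cap i})$, via \eqref{eq:covqdelt} and collecting coefficients gives $L_{i0}$, $I_{i0}$, and $J_{i0}$, completing the block and hence the proposition.

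I expect the only real difficulty to be \emph{bookkeeping} rather than anything conceptual. Proposition~\ref{VarSobolCov} is not symmetric in its two arguments --- the ME estimator is tied to the first sample set (size $N$) while the variance estimator is tied to the second (size $M$) --- so one must be careful which of $N_0$, $N_{i*}$, $N_i$ plays the role of $N$ versus $M$ in each of the four instantiations, and which intersection is $P$. The transpose step for the $(2,1)$ block is the one place an index swap ($i0$ versus $0i$) can creep in, so I would verify that $\{\un{\mat{E}}^T\}_{0i}$, $\{\un{\mat{C}}^T\}_{0i}$, $\{\vec{U}_{i0,0}\}^T$ are read off the correct blocks of $\un{\mat{E}}$, $\un{\mat{C}}$, and $\un{\mat{U}}$, and I would keep the $N_{0*} \equiv N_0$ convention consistent throughout, as already done in Propositions~\ref{Var_high} and~\ref{MultiSobolHighLow}.
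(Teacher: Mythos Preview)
Your proposal is correct and follows exactly the paper's approach: the appendix proof for this proposition is the single sentence ``Propositions~\ref{VarMEVarLow} and~\ref{VarMEVarHighLow} follow by using the above equations with different input sample sets,'' and you have simply unpacked that sentence by writing out the block structure, invoking Propositions~\ref{MultiSobolHighLow} and~\ref{Var_high} for the diagonals, and substituting the appropriate $(N,M,P)$ triples into Proposition~\ref{VarSobolCov} (and its transpose) for the off-diagonals. The substitutions you list give precisely the $L$, $I$, $J$ coefficients in the statement under the $N_{0*}\equiv N_0$ convention.
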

\begin{remark}
     The Sobol estimator can also apply to other effects, not just the ME. We can let multiple indices of interest in $\set{Y}$ be dependent on \Ra{$\set{X}$}, and estimate a combined effect variance. For example, we can estimate \Ra{$V_{uw} = \mathbb{V}ar_{\vec{z}_{u,w}}[\mathbb{E}_{\Ra{\vec{z}}_{\sim u,w}}[f(\Ra{\vec{z}})|\vec{z}_{u,w}]]$} in the ANOVA decomposition using the ME variance estimator, and thus, the same MOACV estimator as introduced above can be used. Therefore, the MOACV Sobol estimator is not restricted to only ME variance, and other variance terms in the ANOVA decomposition can be calculated.
\end{remark}

\section{Synthetic Numerical Examples}
\label{sec:numerical}
In this section, the performance of the multi-output estimator is investigated on synthetic vector-valued functions. In Section \ref{ComparisonResults}, the variance of the introduced multi-output estimator is compared to individual ACV estimation, and superior performance is shown. Section \ref{PilotSampleTradeoff} explores the estimator performance when the required pilot covariances (given in the blue boxes of Section~\ref{sec:Body}) are estimated. Generally, we find that as the number of estimated outputs are increased,  more pilot samples are required if the pilot covariances are unknown. Moreover, higher order statistics, such as the ME variance, also can require significantly more samples.

The optimal sample allocations for the examples below are found by minimizing the determinant of the estimator variance \Ra{\eqref{eq:OptVar}} subject to cost constraints
\begin{align}
    \min_{\Ra{\set{Z}}} \left|\var{\cv\Ra{(\set{Z})}}\right|
     \textrm{\quad where \quad } C_0 \Ra{|\set{Z}_0|} + \sum_{i=1}^K C_i \left|\set{Z}_i \cup \set{Z}_i^* \right| \leq \textrm{Budgeted Cost}.
     \label{eq:varOpt}
\end{align}
This formulation is consistent with both determinant minimization used for optimal weight determination in~\eqref{eq:OptWei}, and variance minimization in the single ACV case. \Ra{The trace of the variance matrix may also be used as the objective function since Equation~\eqref{eq:OptWei} is also optimal in this case. Minimizing the trace may result in a different sample allocation. In this work, we choose to minimize the determinant, as this minimizes the confidence region and is consistent with previous works \cite{MVCV}. We leave the investigation of alternative sample allocation strategies for future work.} Since the covariances $\var{\cv\Ra{(\set{Z})}}$ and $\cov{\vec{Q}\Ra{(\set{Z}_0)}, \un{\vec{\Delta}}\Ra{(\un{\set{Z}})}}$ are functions of the number of estimator samples ($\Nsing{0}{}$, $\Nsing{i}{}$, $\Nsing{i}{*}$, etc.), the required pilot covariances are used to find the optimal sample allocation. 

For demonstration purposes, all estimators in this section follow the ACV-IS sampling scheme~\cite{ACV}, we find that changing this scheme does not change the qualitative conclusions. \Ra{ACV-IS greatly simplifies the input set, $\set{Z}$, which follows $\set{Z}_i^*\subset \set{Z}_i$ and $\set{Z}_i^*= \set{Z}_{i+1}^*$ for all $i=1,\ldots,K$. Other common sampling strategies include MLMC where $\set{Z}_i^*\cap \set{Z}_i = \emptyset$ and $\set{Z}_i = \set{Z}_{i+1}^*$ and MFMC where $\set{Z}_i^*\subset \set{Z}_i$ and $\set{Z}_i= \set{Z}_{i+1}^*$. Refer to the visualization in \cite[Fig. 2]{ACV} for other simple allocation strategies.} 

Optimal allocations are found using the MXMCPy library\footnote{\url{https://github.com/nasa/MXMCPy}} \cite{MXMCPy} for mean ACV and single-statistic MOACV estimators. Since MXMCPy does not offer optimization for variance estimation, the Scipy optimization library\footnote{\url{https://docs.scipy.org/doc/scipy/reference/optimize.html}} is used for variance estimators with optimizers SLSQP and SHGO.

The procedure in Section~\ref{sec:numerical} and Section~\ref{sec:application} estimates variance matrices. Traditional CV estimation of covariance matrices, however, may suffer from losing positive-definiteness \cite{PDCov}, which may lead to negative variance estimates. \Rb{Since our results directly compare variances, the diagonals of the covariance matrix, we set any estimated negative variances to zero.}
Next, the results in the following sections consider variance reduction, defined as the variance of a MC estimator divided by the variance of a multi-fidelity estimator\Ra{, $\var{\vec{Q}_{MC}(\set{Z})}/\var{\vec{Q}_{MF}(\set{Z})}$, so that it indicates how many times smaller the variance of the control variate estimator is compared to MC. Higher variance reduction corresponds to better estimator performance.} Finally, we use the following acronyms in the following sections. An MOACV estimator that estimates a single statistic, such as the mean or variance, is denoted as single MOACV (S-MOACV). An MOACV estimator that estimates multiple statistics is denoted as combined MOACV (C-MOACV).

\subsection{Comparison to ACV Estimators}
\label{ComparisonResults}
The variance reduction of MOACV estimators and individual ACV estimators is compared in this section. Consider a system with three models of decreasing fidelity, each with one input $x \sim \mathcal{U}(0,1)$ and three outputs
\begin{align}
    \Ra{\vec{f}}_0(x) &= \begin{bmatrix}
        \sqrt{11}x^5 & x^4 & \sin(2\pi x)
    \end{bmatrix}^T \\
    \Ra{\vec{f}}_1(x) &= \begin{bmatrix}
        \sqrt{7}x^3 & \sqrt{7}x^2 & \cos(2\pi x + \frac{\pi}{2})
    \end{bmatrix}^T \\
    \Ra{\vec{f}}_2(x) &= \begin{bmatrix}
        \frac{\sqrt{3}}{2}x^2 & \frac{\sqrt{3}}{2}x & \cos(2\pi x + \frac{\pi}{4})
    \end{bmatrix}^T.
\end{align}
The endowed costs of each model are shown in Table \ref{table:2}. We assume perfect knowledge of the covariance between the models and their outputs. These correlations are shown in Figure \ref{Corr}, and are computed using 100,000 pilot samples.

\noindent
\begin{minipage}{\textwidth}
  \begin{minipage}[b]{0.5\textwidth}
    \centering
    \includegraphics[trim={50pt 20pt 50pt 0},width=1\textwidth]{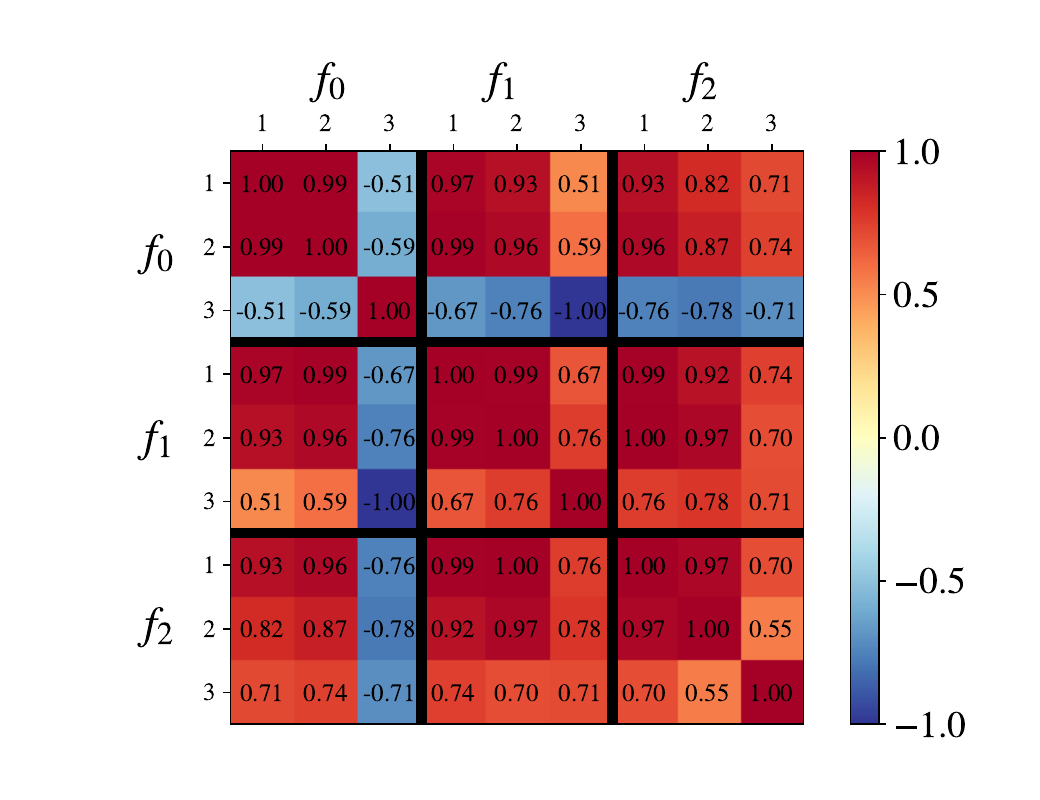}
    \vspace{-20pt}
    \captionof{figure}{Correlations between model outputs across model fidelities in Section \ref{ComparisonResults}.}
    \label{Corr}
  \end{minipage}
  \hfill
  \begin{minipage}[b]{0.45\textwidth}
    \small

    \centering
    \begin{tabular}{ | c | c | c c c |}
    \hline
    \multicolumn{5}{|c|}{Sample Allocations across Optimizations} \\
    \hline
    \hline
      Model & Cost & ACV & S-MOACV & C-MOACV   \\ 
      \hline
      $\Ra{\vec{f}}_0(x)$ & $1$ & $4$ & $2$ & $2$   \\
      $\Ra{\vec{f}}_1(x)$ & $0.01$ & $508$ & $499$ & $75$ \\
      $\Ra{\vec{f}}_2(x)$ & $0.001$ & $631$ & $2955$ & $7187$ \\
      \hline
      \multicolumn{2}{|c|}{Total Cost} & $9.711$ & $9.945$ & $9.937$ \\
      \hline
    \end{tabular}
    \captionof{table}{Sample allocations for each of the compared optimizations.}
    \label{table:2}
    
    \end{minipage}
\end{minipage}

\subsubsection{Mean and Variance Estimation}
\label{sec:MVcomb}
This section compares the variance reduction of individual ACV estimators with MOACV estimators for mean and variance estimation. \Rb{While we may estimate the full $3\times 3$ covariance matrix, we choose to compare only the diagonal elements of the matrix to simplify the comparison.} To estimate the mean and variance of each model output, six individual ACV estimators are constructed to estimate the three means and three variances \Rb{of the model outputs}. Two S-MOACV estimators are constructed, one for the three means and one for the $3\times 3$ covariance matrix. 
Finally, a C-MOACV is created to estimate the three means and the $3\times 3$ covariance matrix simultaneously. The same sample allocation is used for all estimators in this section, which is found by minimizing the variance of the ACV mean-estimator for the first output of $\Ra{\vec{f}}_0$ given a budget of 10 and is shown in the ACV column in Table \ref{table:2}. Note, the sample allocations shown in the other columns will be used in the subsequent sections.

 Figures \ref{MeanComp} and \ref{VarComp} display the variance reduction of the mean and variance estimators respectively. As seen in Figure \ref{MeanComp}, both the S-MOACV and C-MOACV estimators achieve greater variance reduction than the ACV estimator of each output by over an order of magnitude in some cases. C-MOACV also achieves improved variance reduction over the mean-specific S-MOACV. Figure \ref{VarComp} \Rb{only shows the diagonal elements of the 3 $\times$ 3 covariance matrix and }demonstrates similar results for the estimator variance, but indicates less benefit of C-MOACV over S-MOACV. The significantly improved variance reduction from S-MOACV estimation demonstrates that multi-output estimation excels in systems with the high correlations between the model outputs, as seen in Figure \ref{Corr}.  

\begin{figure}[h]
\centering
\begin{subfigure}{0.45\textwidth}
    \includegraphics[width=1\linewidth]{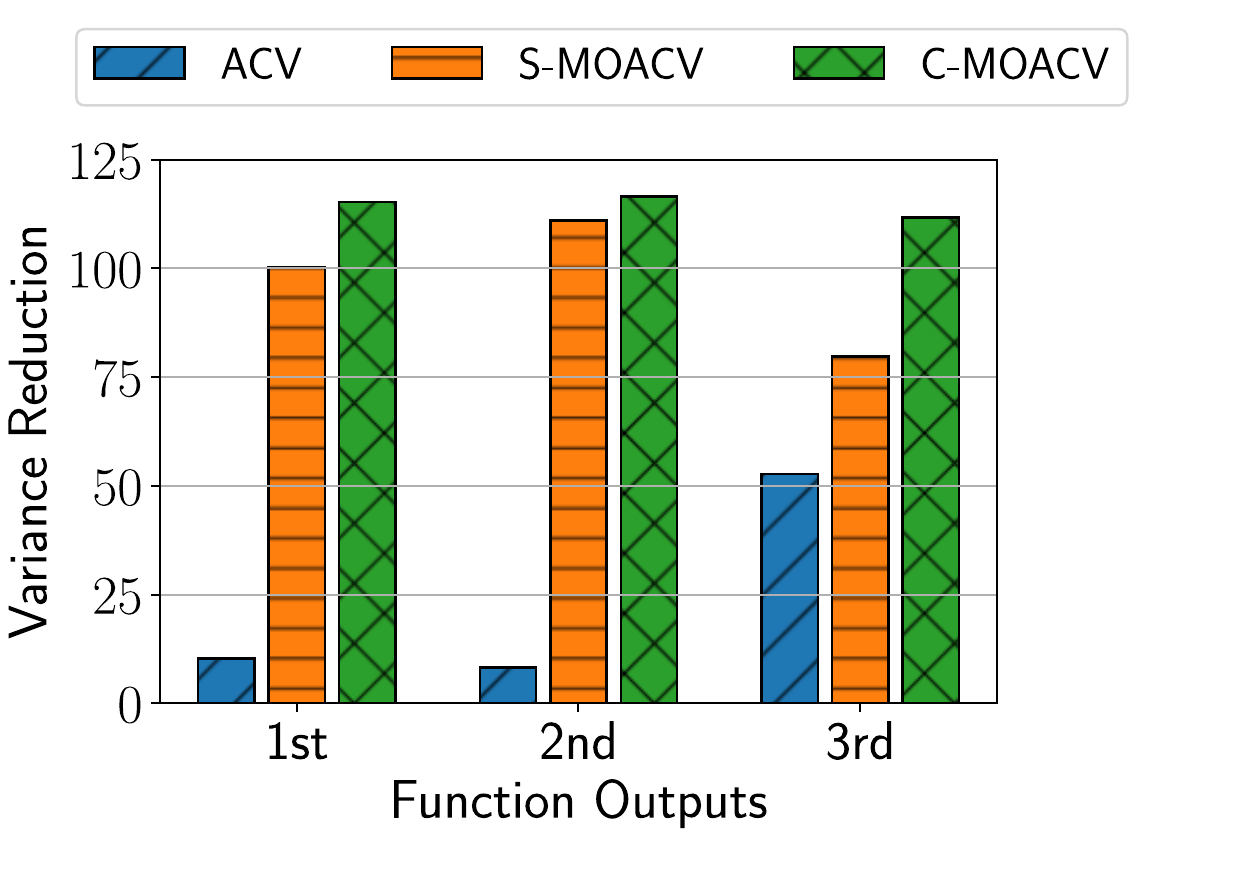}
    \caption{Mean Estimation}
    \label{MeanComp}
\end{subfigure}
\begin{subfigure}{0.45\textwidth}
    \includegraphics[width=1\linewidth]{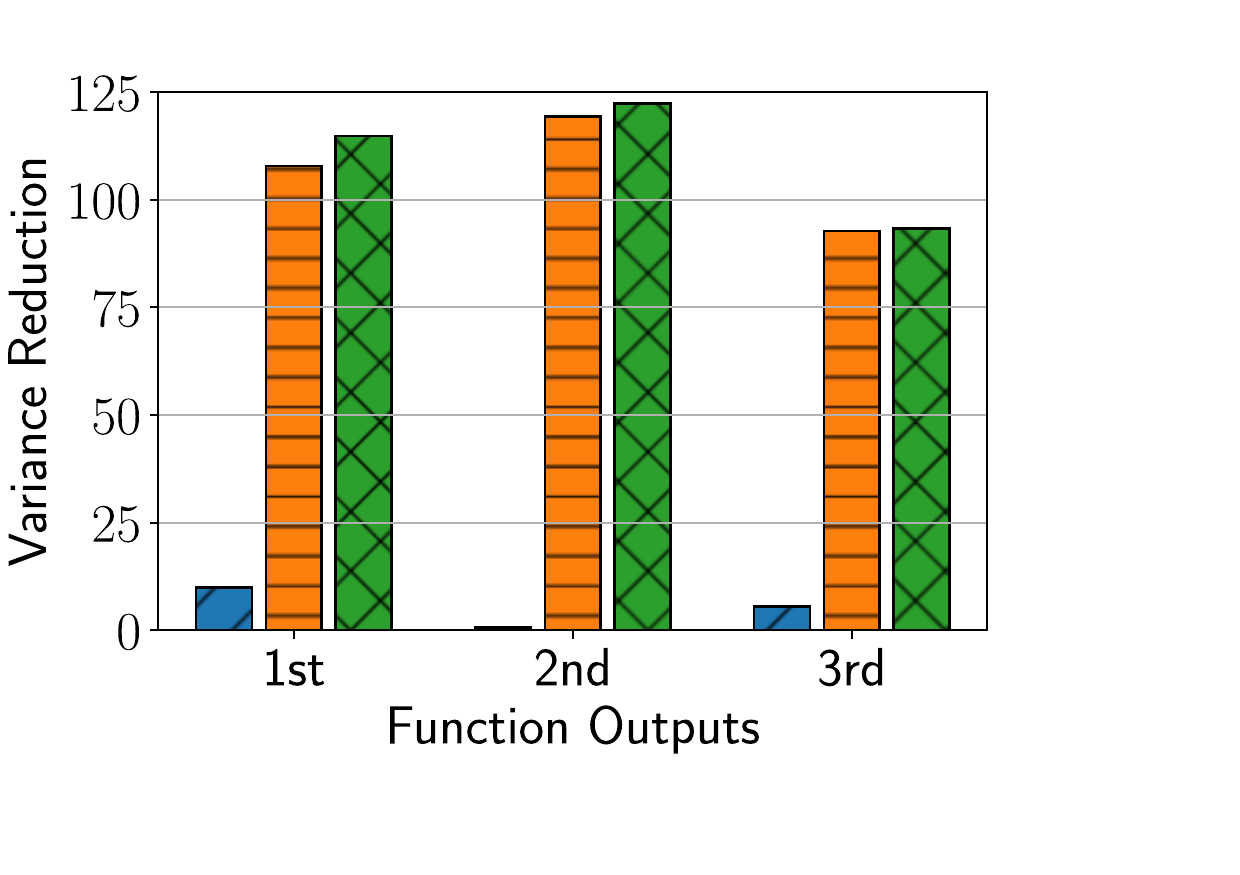}
    \vspace{-25pt}
    \caption{Variance Estimation}
    \label{VarComp}
\end{subfigure}
\caption{Variance reduction compared to MC of each function output for each estimator type (higher is better) in Section \ref{sec:MVcomb}. The C-MOACV and S-MOACV estimator provides significantly increased variance reduction compared to ACV methods.}
\label{MeanVarComp}
\end{figure}

\subsubsection{Sample Allocation Optimizations}
\label{sec:SAopts}

In the previous section, all the estimators used the same sample allocation obtained from  minimizing the variance of the mean ACV estimator for the first model output. In this section, performance of sample allocations that target variance reduction of the full multi-output estimator are demonstrated. The results in this section show variance reduction achieved only for the mean of the first model output, because that is all that the ACV can provide. We reinforce that the MOACV estimators also provide significant variance reduction for all other outputs as well.

Table \ref{table:2} shows the optimal sample allocations for various objective functions. The first column, ACV, is the ACV-specific allocation for the first model output found in the previous section. The second (S-MOACV) and third (C-MOACV) columns arise from minimizing the determinant of the estimator variance obtained by the two MOACV estimators, respectively. While the optimization method resulted in sample allocations of slightly different costs\footnote{The different allocation costs is a consequence of simplifying the discrete optimization problem into a continuous domain. The rounding of the results of the continuous optimization into the discrete solution causes the solution to not lie on the computational budget boundary. However, the S-MOACV and ACV costs only have a 2\% difference.}, the variance reduction metric is cost-independent since it divides the MC estimator variance by an equivalent-cost multi-fidelity estimator variance.

\begin{figure}
    \centering
    \includegraphics[trim={0 50pt 0 0},width=0.55\linewidth]{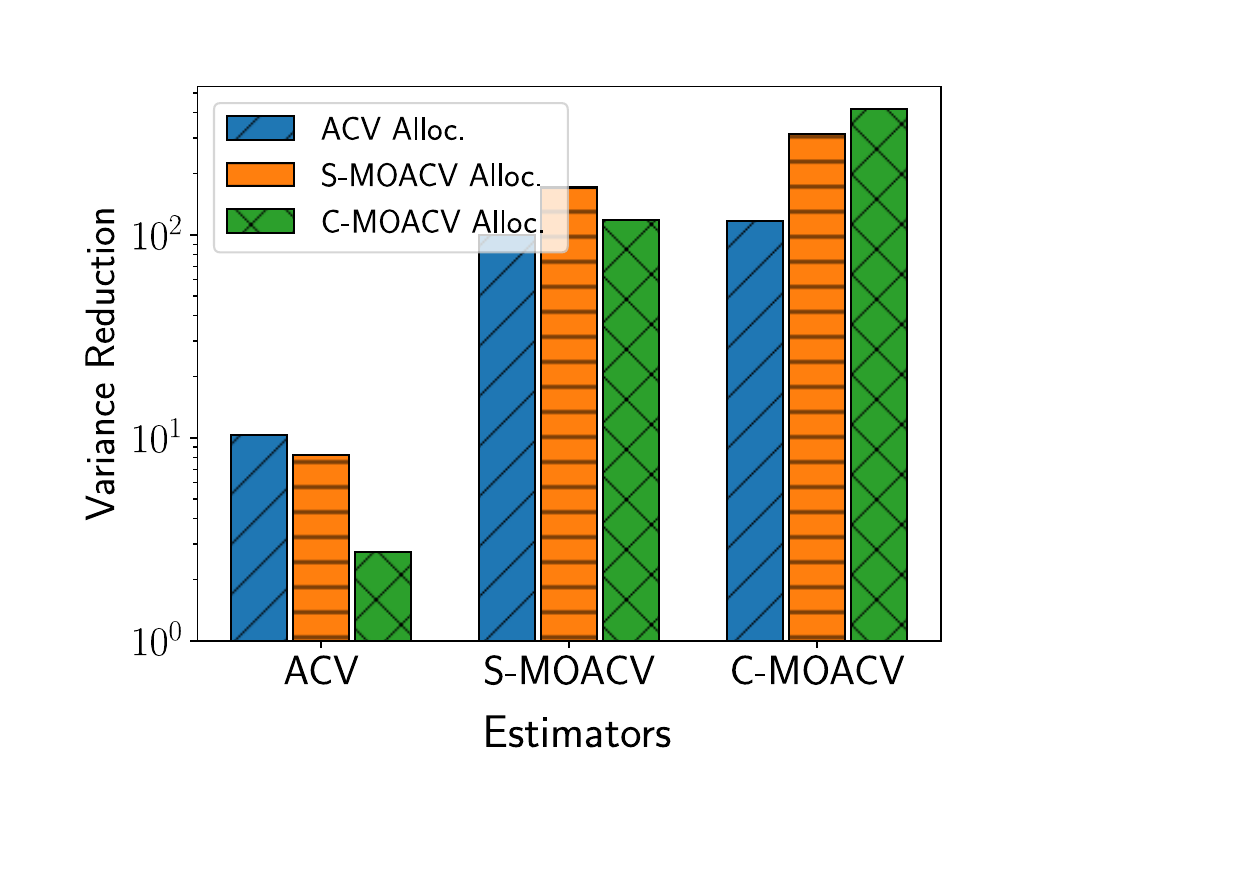}
    \captionof{figure}{Variance reduction of mean estimation of the first model output compared to MC across optimizations in Section \ref{sec:SAopts}. The C-MOACV and S-MOACV estimators outperform the ACV estimator across every sample allocation optimization.}
    \label{OptComp}
\end{figure}

The variance reduction results are shown in Figure \ref{OptComp}.
Each of the optimizations give the best variance reduction for their respective estimators. For example, the C-MOACV estimator that uses a C-MOACV optimal sample allocation achieves greater variance reduction than a C-MOACV estimator that uses the ACV optimal sample allocation. The C-MOACV estimator outperforms the ACV estimator by at least an order of magnitude under all allocation strategies. We reinforce that while variance reduction significantly improves for the mean of the first output, the combined MOACV estimator also returns the means, variances, and covariances of all other model outputs. 

\subsection{Pilot Sample Trade-off}
\label{PilotSampleTradeoff}

\Ra{Multi-fidelity estimation requires additional statistics to perform optimal estimation. Typically, an independent exploratory pilot study is performed before estimation using an independent set of samples to estimate these statistics. In this section we explore the effects of such an approach on the MOACV, while in Appendix \ref{sec:PSstrat}, we demonstrate other strategies for pilot studies that breaks the independence assumption.}
The multi-output estimators introduced in this work require exploiting more information than simple single-output estimators. Specifically, the boxes of Section \ref{sec:Body} show a large number of statistics that must be known to compute the optimal CV weights. A natural question arises as to whether there are too many unknowns to allow a small set of pilot samples to yield an effective estimate. In this section, it is shown that the required number of pilot samples depends on the number of model outputs and statistics that are estimated. The type of statistic estimated is the largest contributor to the number of pilot samples that is required. Adding more model outputs also increases the number of required samples.

\Rb{Section \ref{sec:PSlow} introduces a system with low correlations between model outputs and displays the variance reduction of mean and ME variance estimation across the number of pilot samples and the number of model outputs available. Section \ref{sec:percent} introduces a system with high correlations between model outputs that investigates the total cost of constructing estimators, including the cost of the pilot study. The MOACV estimator is able to outperform MC and ACV estimation in both cases, even when considering the cost of the pilot study in Section \ref{sec:percent}. Figure \ref{fig:PScorrs} shows the correlations between the model outputs and fidelities for each system.}

\begin{figure}[h]
    \centering
    \begin{subfigure}{0.45\textwidth}
        \includegraphics[trim={0 40pt 0 0},width=1\linewidth]{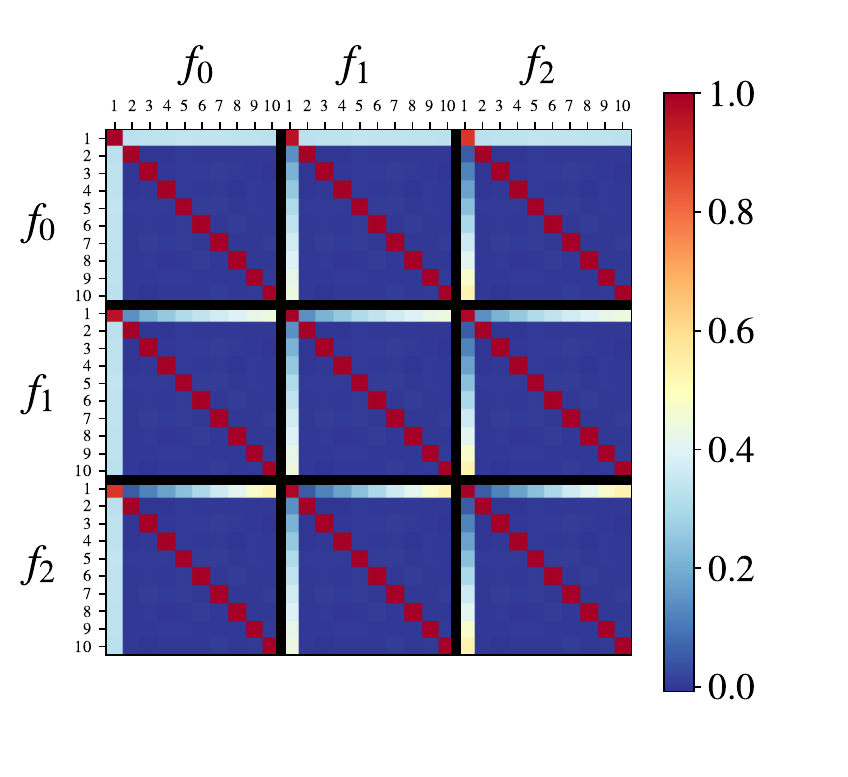}
        \caption{System in Section \ref{sec:PSlow}}
        \label{PScorrs1}
    \end{subfigure}
    \begin{subfigure}{0.45\textwidth}
        \includegraphics[trim={0 10pt 0 0},width=1\linewidth]{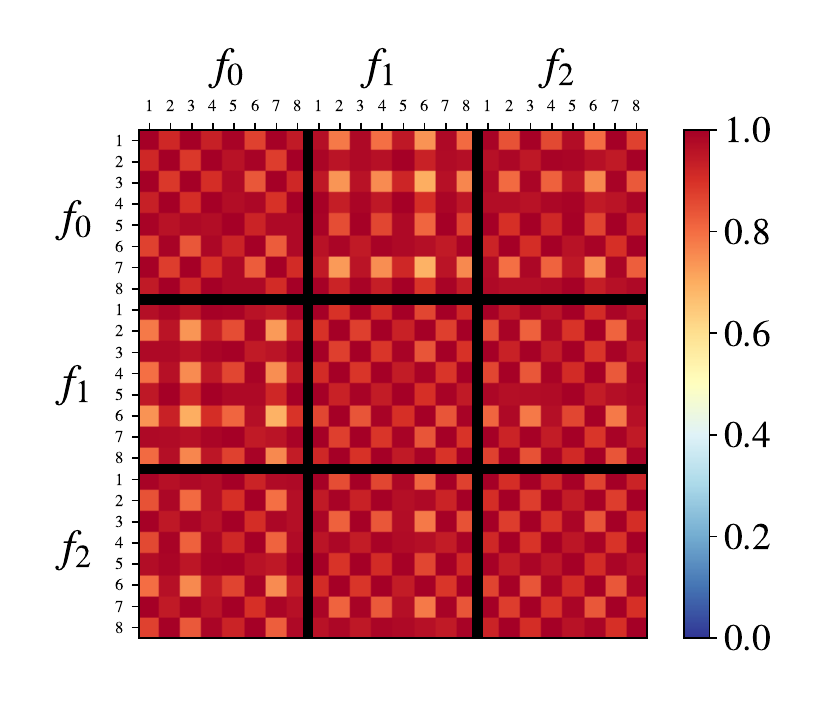}
        \caption{System in Section \ref{sec:percent}}
        \label{PScorrs2}
    \end{subfigure}
    \caption{\Rb{Correlations between model outputs and fidelities for each system in Section \ref{PilotSampleTradeoff}.}}
    \label{fig:PScorrs}
\end{figure}

\subsubsection{Number of Pilot Samples}
\label{sec:PSlow}
A new system is defined to consider a tunable number of function outputs to study convergence of increasingly complex estimators as a function of the number of pilot samples. Let the high- and low-fidelity functions be $\Ra{\vec{f}}_0, \Ra{\vec{f}}_1, \Ra{\vec{f}}_2: \vec{x} \rightarrow \mathbb{R}^{10}$, where $\vec{x} \in [0,1)^{9}$ is uniformly distributed such that
\begin{align}
    \Ra{\vec{f}}_0(\vec{x}) = \begin{bmatrix}
        \sum_{i=1}^9 x_i^3 \\
        x_1^3 \\ 
        \vdots, \\ 
        x_9^3
    \end{bmatrix} \textrm{\quad \quad}
    \Ra{\vec{f}}_1(\vec{x}) = \begin{bmatrix}
        \sum_{i=1}^9 \sqrt{i}x_i^3 \\
        \sqrt{1}x_1^3 \\ 
        \vdots, \\ 
        \sqrt{9}x_9^3
    \end{bmatrix} \textrm{\quad and \quad}
    \Ra{\vec{f}}_2(\vec{x}) = \begin{bmatrix}
        \sum_{i=1}^9 ix_i^3 \\
        1x_1^3 \\ 
        \vdots, \\ 
        9x_9^3
    \end{bmatrix}.
\end{align}
\Rb{The correlations in this system are shown in Figure \ref{PScorrs1}. The correlations between the high- and low-fidelity models for the first output are 0.96 and 0.89, inline with correlations reported in other applications in the literature \cite{CVQE,PDCov,EACV1}. The correlations between the outputs, however, are 0 or $0.33$, which are quite low.}
We study the variance reduction in the mean and main-effect variances of the first output of $\Ra{\vec{f}}_0$. For mean estimation, we consider an increasing number of model outputs formed by using more components of each of the model fidelities. For ME variance estimation, we consider an increasing number of ME variances to estimate across the 9 inputs. The ACV-IS sampling scheme was chosen with the un-optimized allocations of each fidelity being $50$, $500$, and $5000$. Figure~\ref{PSlargeplot} shows the variance reduction achieved for different numbers of pilot samples and statistics. Note that the bottom edge of the plots corresponding to one output for mean estimation in Figure \ref{PSplot} and one ME in Figure \ref{PSSAplot} corresponds to the performance of the standard ACV estimator.

\begin{figure}[h]
    \centering
    \begin{subfigure}{0.47\textwidth}
        \includegraphics[trim={0 0 0 0},width=1\linewidth]{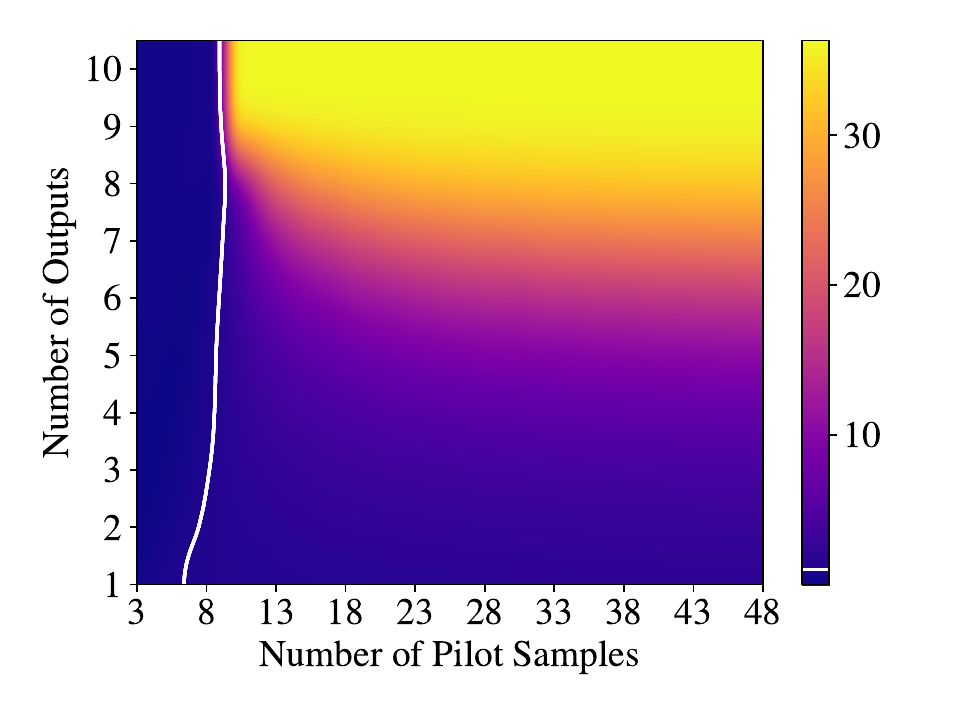}
        \caption{Mean Estimation}
        \label{PSplot}
    \end{subfigure}
    \begin{subfigure}{0.45\textwidth}
        \includegraphics[trim={0 0 20pt 0},width=1\linewidth]{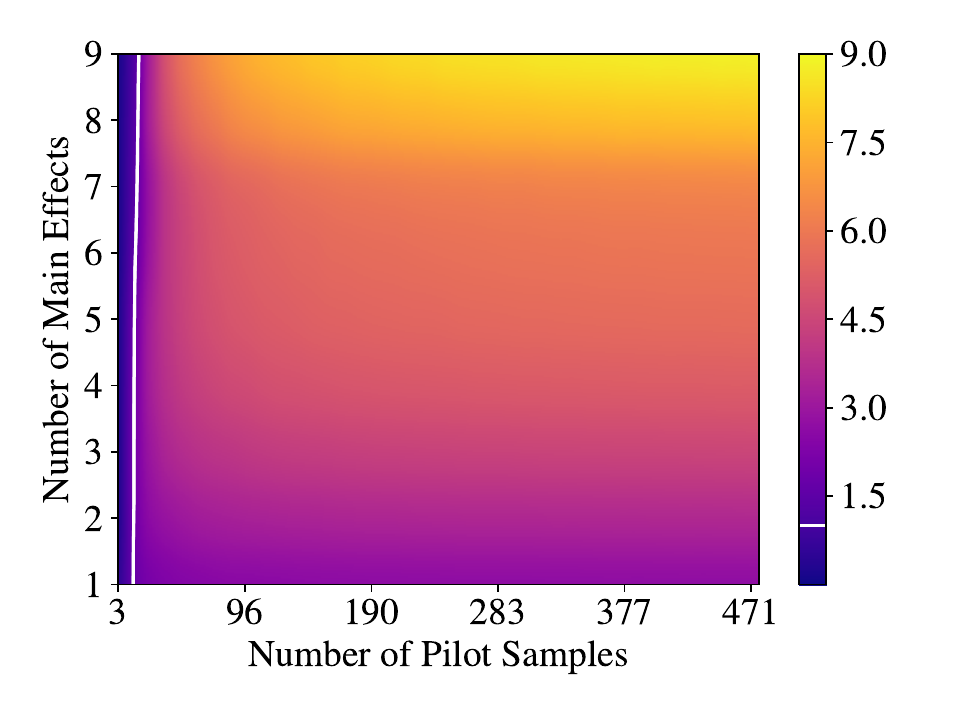}
        \caption{ME Variance Estimation}
        \label{PSSAplot}
    \end{subfigure}
    \vspace{-15pt}
    \caption{5th percentile of variance reduction (worst case) across 1000 trials as a function of pilot samples and number of outputs in Section \ref{sec:PSlow}. The white contour line represents the same performance as MC estimation. The best variance reduction achieved includes all outputs for MOACV mean estimation, with marginally more pilot samples required than ACV estimation. }
    \label{PSlargeplot}
\end{figure}

To determine the performance of the variance reduction with respect to the number of pilot samples, we sweep across combinations of {\it additional} model outputs (from 0 to 9) and numbers of pilot samples. At each pilot sample quantity, we run 1000 realizations of the pilot sample sets and compute the estimator variances. Figure~\ref{PSplot} shows the 5th percentile of the variance reduction, a statistic that demonstrates close to worst-case behavior.

The white line corresponds to the variance reduction ratio of 1, where MC has equal performance to the CV approach. Notably, we see a sharp transition at 10 pilot samples where the performance improves over MC.  With too few pilot samples, the estimators perform worse than MC. Since the white contour line is near vertical, Figure \ref{PSplot} displays that adding more correlated outputs in mean estimation only requires slightly more pilot samples for significantly improved variance reduction. 

Next we repeat the same experiment for the ME variances of the first output of $f_0$ with respect to each of the 9 inputs. As described in Section \ref{sec:sensana}, there are many more required covariances to be estimated for multiple ME estimators than for mean estimation. The statistic of interest is also of a higher order than mean estimation. Figure \ref{PSSAplot} again shows the 5th percentile of the variance reduction for the first ME variance output over the 1000 trials at different combinations of outputs. In this case, the number of outputs in the estimator reflects the number of MEs that are estimated, a maximum of 9 for the 9 total inputs. Note the different axis scales between the two plots. Similarly to mean estimation, the additional statistics can improve the variance reduction.  However, the number of required pilot samples to outperform ACV (green area, bottom edge of Figure \ref{PSSAplot}), is about 25 samples, which is double the number of samples required for mean estimation. Further, the maximum variance reduction requires around 250 pilot samples before the C-MOACV estimator variance converges. Overall, we see a similar pattern to the mean estimation with more required samples. In both mean and ME estimation, the MOACV estimators achieve larger variance reduction than ACV estimation when the ACV estimator variance has converged. Future work can focus on adaptive schemes to determine the optimal number of pilot samples.

\subsubsection{Percent of Pilot Samples}
\label{sec:percent}

\Rb{Next we investigate a similar comparison, but now consider the pilot sample estimation as a portion of the cost. We also consider a new case with more varying correlations amongst models to provide a situation where many non-zero correlations, as seen in Figure \ref{fig:PScorrs}, must be leveraged. Consider the following model where $x\in[-1,1)$ is uniformly distributed such that
\begin{align}
    \vec{f}_0(x) = \begin{bmatrix}
        x \\ x^3 \\ \sin(x) \\ \sin(x^3) \\ e^x -1 \\ e^{x^3}-1 \\ \log(x+1) \\ \log(x^3 +1)
    \end{bmatrix}  \textrm{\quad \quad}
    \vec{f}_1(x) = \begin{bmatrix}
        |x|x \\ |x^3|x^3 \\ |x|\sin(x) \\ |x^3|\sin(x^3) \\ |x|(e^x -1) \\ |x^3|(e^{x^3}-1) \\ |x|\log(x+1) \\ |x^3|\log(x^3 +1)
    \end{bmatrix}  \textrm{\quad \quad}
    \vec{f}_2(x) = \begin{bmatrix}
        \sqrt{|x|}x \\ \sqrt{|x^3|}x^3 \\ \sqrt{|x|}\sin(x) \\ \sqrt{|x^3|}\sin(x^3) \\ \sqrt{|x|}(e^x -1) \\ \sqrt{|x^3|}(e^{x^3}-1) \\ \sqrt{|x|}\log(x+1) \\ \sqrt{|x^3|}\log(x^3 +1)
    \end{bmatrix}.
\end{align}
Assume that we have a total computational budget of 500 where $\vec{f}_0$, $\vec{f}_1$, and $\vec{f}_2$ have a cost of $1$, $0.1$, and $0.01$ respectively. We study how the variance of the mean estimator changes as a function of the budget that we allocate to the pilot sample study.

We estimate the mean of the first output for the high-fidelity function, $\vec{f}_0(x)$, as we change the percent of the budget that is allocated for pilot samples and as we change the number of function outputs that are available to the MOACV estimator. We calculate the variance of the estimator empirically using 50 independent estimators for MC, ACV, and MOACV estimation. We also use 20 independent sets of pilot samples at each budget percentage. The ACV and MOACV estimators optimize their sample allocations based on the set of pilot samples given at each iteration.

Figure \ref{PercBud} displays the estimator variance of the mean estimator for the first output over the percent of the budget that we allocate for the pilot samples. The x-axis is on a log-scale that ranges from 1\% to 90\% of the budget. Each subplot represents the total number of outputs that the MOACV has access to. The shaded regions covers 90\% of the calculated estimator variances from the 20 independent sets of pilot samples. The MC and ACV estimators do not change between subplots since they both only estimate the first output. The MC estimator also does not vary over the percent of the budget since the MC estimator does not depend on any pilot samples. The variance of the MC estimator uses the entire budget of 500. The variance of the ACV estimator has a wider spread when there are few pilot samples and the estimator variance increases when most of the budget is allocated towards pilot sampling. When $90\%$ of the budget is allocated to pilot samples, the variance of the ACV estimator increases because the remaining budget has too few samples to estimate the $\vec{f}_0(x)$ mean. The variance of the MOACV estimator also increases when there are too few pilot samples and decreases when there are enough samples to properly estimate the pilot covariances. In this specific example, the variance of the MOACV estimator reduces $70\%$ from including 1 output to including 4 outputs. The estimator variance, however, does not reduce between 4 outputs and 8 outputs. This suggests that the additional correlations added by more model outputs does not reduce the variance of the estimator further. This saturation of the estimator is problem-dependent and likely depends on the relationships between the model outputs.
This figure demonstrates that the performance of the MOACV estimator depends on the number of outputs and the number of pilot samples available. 

While Figure \ref{PercBud} is for a budget of 500, when the budget is increased, less of the budget needs to be allocated to pilot samples for MOACV estimation. Similarly, when the budget is decreased, more of the budget needs to be allocated to pilot sampling. Even with 90\% of the budget allocated to pilot sampling, the MOACV estimator outperforms MC estimation. Finally, while the main contribution of this work is to provide a new estimator that can exploit multi output information, future works may use this in an adaptive scheme to find the best number of pilot samples to use. 
}

\begin{figure}
    \centering
    \includegraphics[trim={70pt 0 70pt 0},width=0.9\linewidth]{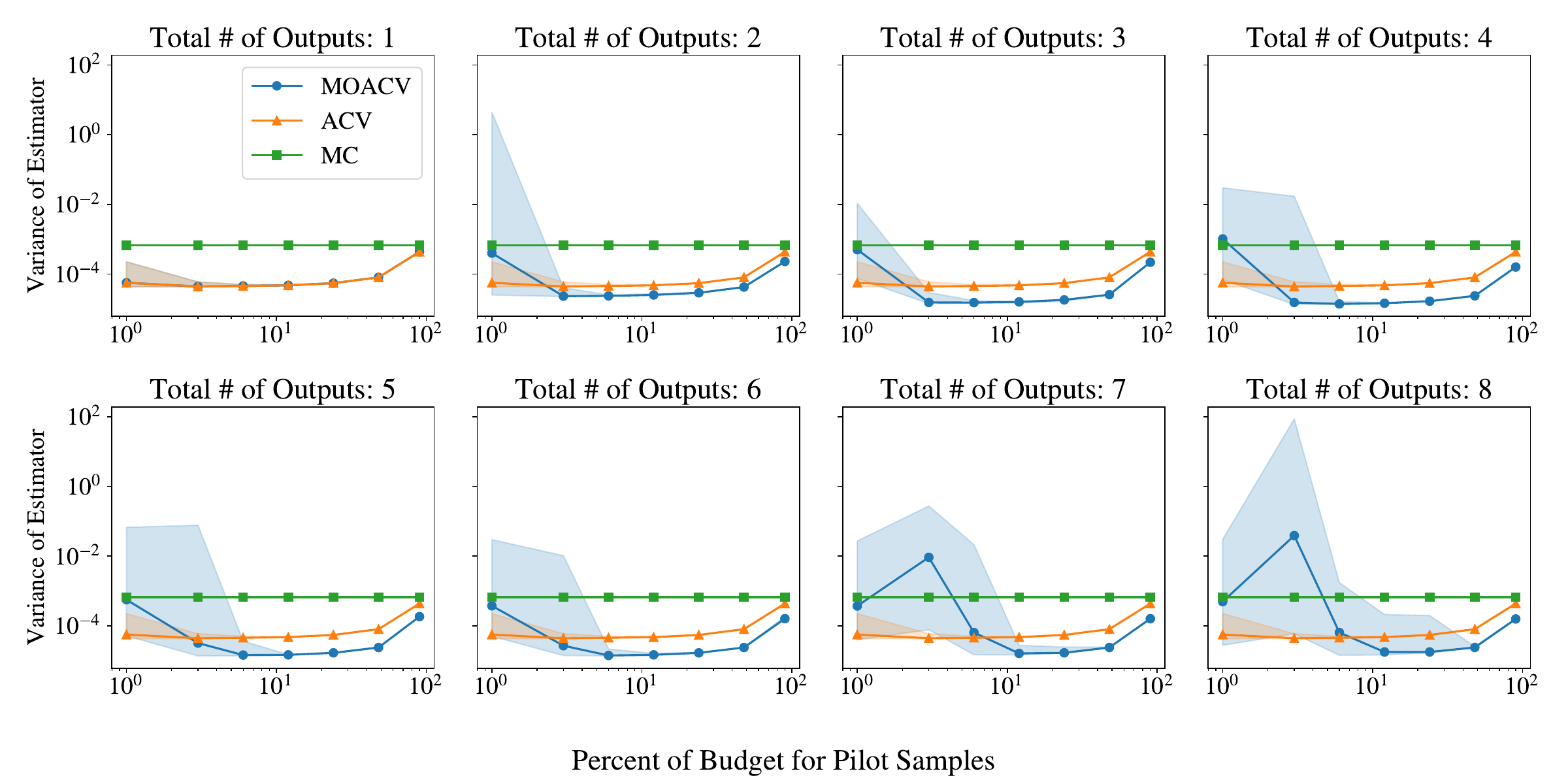}
    \captionof{figure}{\Rb{Estimator variance over the percent of the budget for pilot samples in Section \ref{sec:percent}. Each plot represents the number of outputs available to the MOACV estimator. The shaded region represents 90\% of the uncertainty in the estimator variance using 20 sets of independent pilot samples. The MOACV estimator generally outperforms MC and ACV estimation. With too few pilot samples, the MOACV estimator has large variance if too many outputs are involved.}}
    \label{PercBud}
\end{figure}

\section{Application: Entry, Descent, and Landing Trajectories}
\label{sec:application}

Entry, descent, and landing (EDL) is the final phase of a space vehicle's mission upon entering the atmosphere of a celestial body. An important aspect of successful EDL includes prediction of trajectory and touchdown properties including locations, velocities, and states of a vehicle at given times. However, these predictions are difficult because of uncertainties due to the atmosphere, initial vehicle states, and actuator precision. Analyzing predicted outcomes due to these uncertainties is also computationally challenging because high-fidelity simulations may \Ra{be expensive} to run. In this section, we consider the simulation of a sounding rocket with the aim of reducing the computational cost of estimation through multi-fidelity methods.

NASA launched the Sounding Rocket One (SR-1) in September 2018 containing the Adaptable, Deployable, Entry, and Placement Technology (ADEPT), aimed to demonstrate a deployable aeroshell used for re-entry \cite{SR1ADEPT}. Before launch, this flight was simulated using the Program to Optimize Simulated Trajectories II (POST2) software \cite{POST2} with a standard MC approach to consider system uncertainties \cite{SR1ADEPT}. The POST2 software contains around 75 uncertain inputs including initial conditions (e.g. location, velocity, angle of attack), vehicle parameters (e.g. moment of inertia, deployment impulse), and environmental parameters (e.g. atmospheric uncertainty). In Warner, et al. \cite{Traj}, ACV techniques were used to construct mean estimators for 15 trajectory QoIs, such as the touchdown latitude, longitude, velocity, and other QoIs listed in \cite[Table 1]{Traj}. Using multi-fidelity techniques, \cite{Traj} was able to reduce the variance of estimation for many of the 15 QoIs. The goal of this section is demonstrate further variance reduction using multi-output estimation. The following models of varying fidelity were introduced in \cite{Traj} to aid the multi-fidelity estimation. The POST2 simulation is used as the high-fidelity model and it takes 219 seconds on average for a single evaluation at a fixed condition. A ``reduced-physics" version of POST2 is introduced to reduce the cost of simulation by using a simplified atmospheric model, taking around 47.4 seconds per evaluation. A cheaper trajectory simulation is also created using the high-fidelity POST2 at a much larger integration time step at 2.8 seconds per evaluation, deemed the ``coarse time-step" model. Finally, a support vector machine (SVM) surrogate model (``machine learning model") is trained offline using 250 high-fidelity trajectory simulations and used as a low-fidelity model taking around 0.0007 seconds per evaluation. 

In this section, we compare the performance of multi-output methods to ACV estimation for 9 of the 15 QoIs. The 9 QoIs were chosen for their correlations between other QoIs, as seen in Figure \ref{ADcorr}, where the QoIs in red are removed from this study. Section \ref{sec:MVadept} compares ACV and MOACV methods by estimating the mean and variance of 9 QoIs. Finally, Section \ref{sec:adeptSA} uses MOACV to perform a sensitivity analysis on one QoI across three input variables.

\subsection{Mean and Variance Estimation}
\label{sec:MVadept}

In this section, we build 18 ACV estimators for the mean and variance of each of the 9 QoIs; two S-MOACV estimators, one for the 9 mean estimator and one for the \Rb{9 $\times$ 9 covariance matrix} estimator; and a single C-MOACV estimator for the mean and \Rb{covariance matrix} of the 9 QoIs simultaneously. In particular, the C-MOACV estimator simultaneously estimates 54 statistics (9 means and 45 unique covariances).

To find the pilot covariances, 60,000 pilot samples were used. With these samples, Figure \ref{ADmcorr} shows the correlations between the models across the model outputs. Notably, a few QoIs have low correlations between the low-fidelity and high-fidelity models. Traditionally, poor variance reduction is expected at these QoIs for multi-fidelity estimation. \Ra{Control variate estimation is only useful if the additional variables (low-fidelity models) have high correlations with the primary estimator (high-fidelity model) \cite{CV1, CV}.} The correlations between the outputs of the high fidelity model can be seen in Figure \ref{ADcorr}. The non-zero correlations are exploited in the MOACV techniques and used to provide more accurate estimation. \Ra{The squared biases of the low-fidelity models with respect to the high-fidelity models are shown in Appendix \ref{sec:EDLbias}.}

\begin{figure}[h]
    \centering
    \begin{subfigure}[t]{0.45\textwidth}
        \includegraphics[trim={20pt 0 0 0},width=1\linewidth]{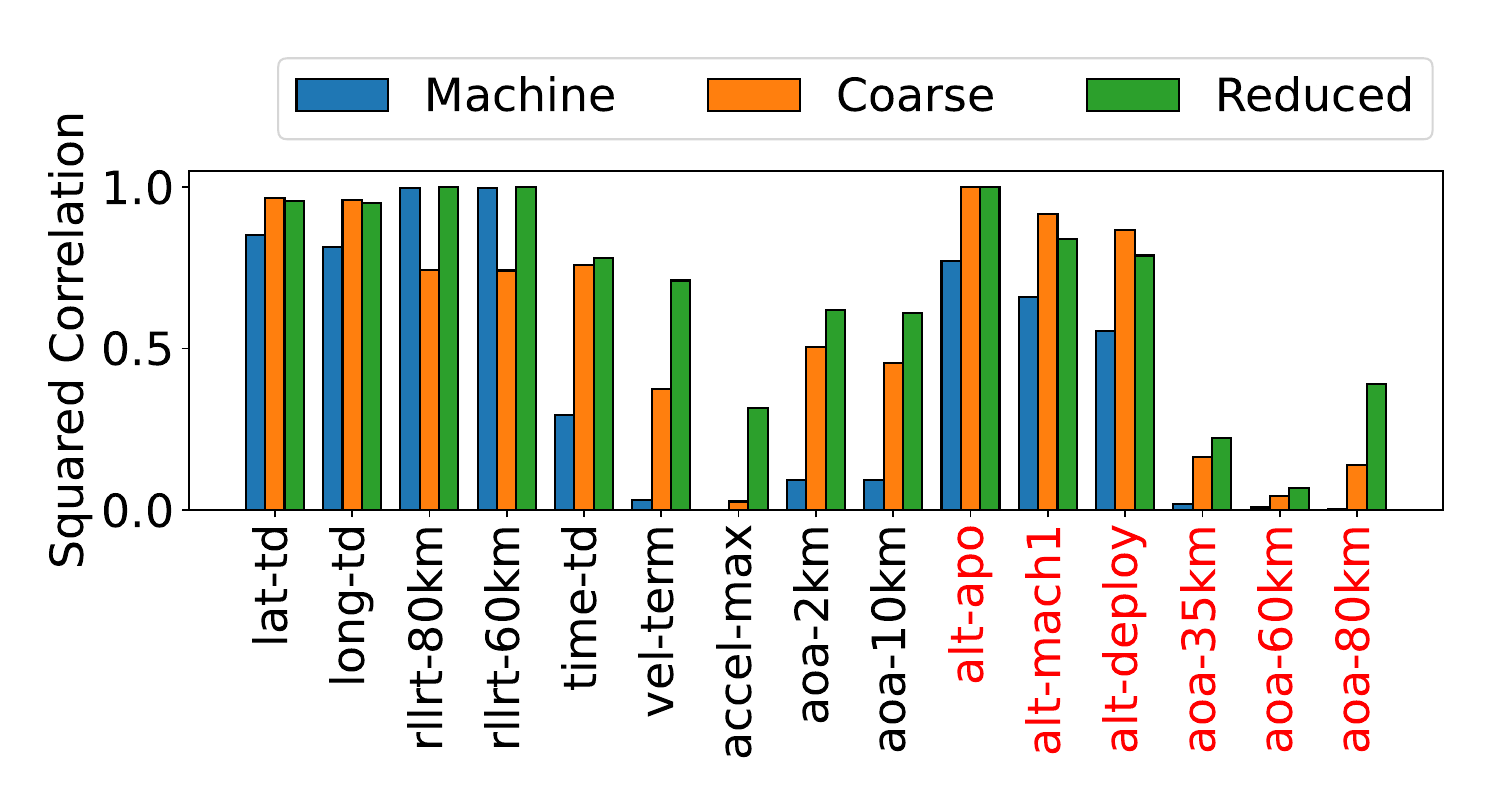}
        \caption{Squared correlations between the high-fidelity model and low-fidelity models for each output in Section \ref{sec:application}. Recreated from \cite{Traj}.}
        \label{ADmcorr}
    \end{subfigure}
    \begin{subfigure}[t]{0.45\textwidth}
        \includegraphics[trim={0 0 150pt 0},width=1\linewidth]{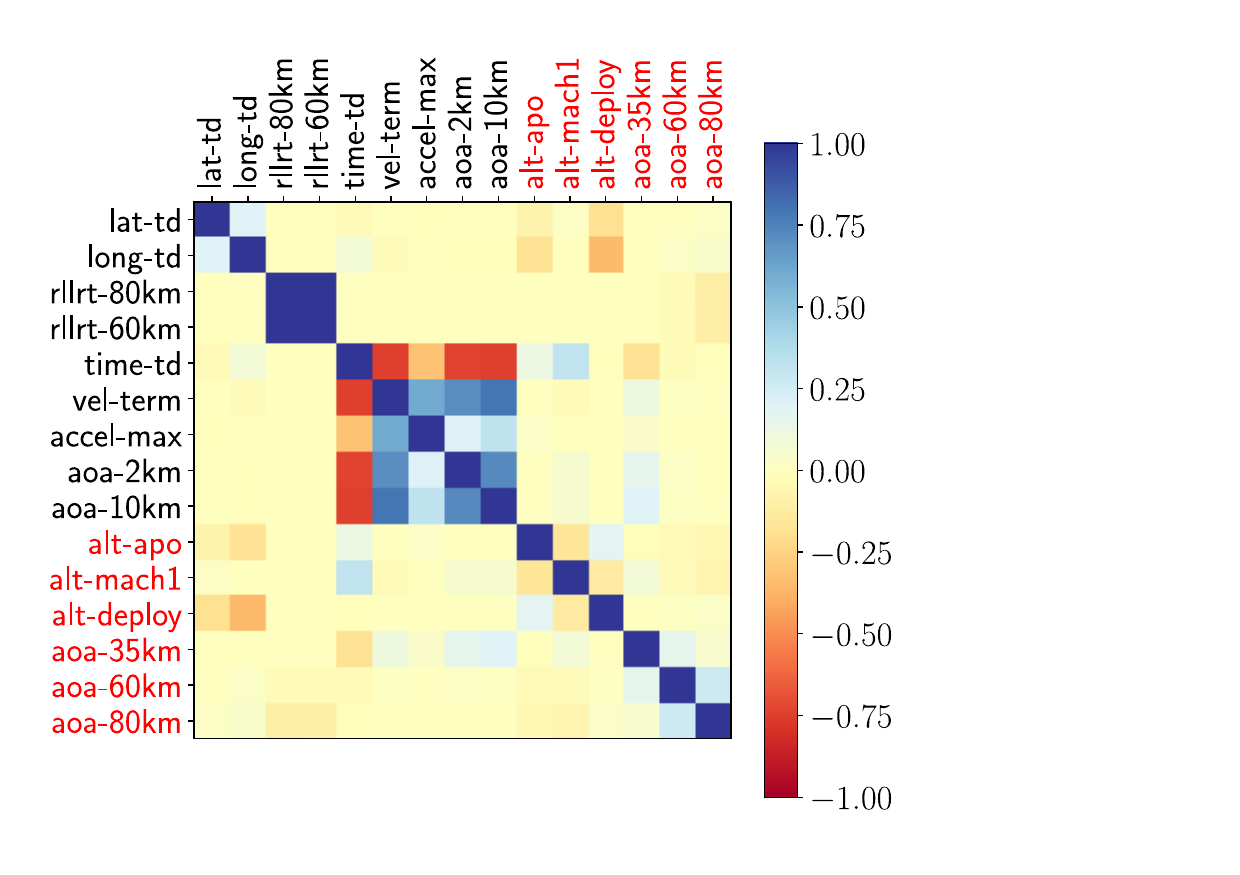}
        \caption{High-fidelity model output correlations.}
        \label{ADcorr}
    \end{subfigure}
    \caption{Correlations between model fidelities and model outputs for the EDL problem~\ref{sec:application}. Red QoIs are not estimated in this study.}
    \label{ADcorrlarge}
\end{figure}

A single ACV-IS allocation scheme was applied to all outputs to enable a fair comparison at equivalent computational costs. This sample allocation for all estimators was computed to minimize the variance of the ACV mean-estimator for the touchdown latitude (\textit{lat-td}). Similarly to Warner et al. \cite{Traj}, the optimization minimizes the variance with a computational budget of $10^4$ seconds. \Rb{Refer to Appendix \ref{sec:EDLPilot} for a similar study with a larger budget that is comparable to the pilot study cost.} The allocated samples are 31, 0, 1124, and 22075 samples for the POST2, reduced physics model, coarse time step model, and the machine learning model respectively. \Ra{The sample allocation optimizer decided to not use the reduced physics model even though it has a high correlation with the high-fidelity model for the touchdown latitude. This is caused by the high cost of the reduced physics model compared to the coarse time-step model, which is an order of magnitude cheaper and is more correlated to the high-fidelity model for this QoI. This finding demonstrates that not all highly correlated models are useful for multi-fidelity estimation, especially if their costs are high.}

We obtain the empirical variance of the estimators using 10,000 realizations of data according to the above sample allocation. The variance reduction achieved for mean and variance estimation can be seen in Figure \ref{AD_orig}. The red dotted line represents no reduction compared to the equivalent-cost MC estimator. The individual ACV estimator reduction can be seen in the blue bars. In Figure \ref{ADm} for mean estimation, the S-MOACV and C-MOACV achieves greater variance reduction than individual ACV estimation at every QoI. In Figure \ref{ADv}, both the S-MOACV and the C-MOACV estimators achieve greater variance reduction than the ACV estimator at every QoI. 

Figure \ref{AD_orig} demonstrates that the MOACV estimators can turn situations where an ACV estimator performed worse than MC, into one where performance becomes better than MC. For example, the ACV estimator for the terminal velocity ``vel-term" initially performs worse than MC estimation. However, the C-MOACV estimator is able to achieve reduction better than MC by leveraging the additional correlations. \Ra{Figure \ref{AD_ACV} displays the variance reduction compared to ACV estimators for each of the outputs.} For mean estimation, the S-MOACV estimator achieves a median 15\% greater variance reduction than ACV estimators. The C-MOACV estimator achieves a median 39\% larger variance reduction than ACV estimators, with a maximum of 113\% larger reduction for ``rllrt-60km". For variance estimation, the C-MOACV estimator provides a median 22\% greater variance reduction than ACV estimators. The C-MOACV estimates for landing latitude and longitude performed marginally better (about 1\% larger reduction) than ACV estimation. This performance is explained by the lack of correlation amongst latitude and longitude with other QoIs, as seen in Figure \ref{ADcorr}. The S-MOACV and C-MOACV estimators are able to outperform traditional ACV methods by extracting the correlations between QoI and statistics to reduce the variance of ACV estimation even further.

\begin{figure}[h]
\centering
\begin{subfigure}{0.45\textwidth}
    \includegraphics[trim={0 30pt 0 0},width=1\linewidth]{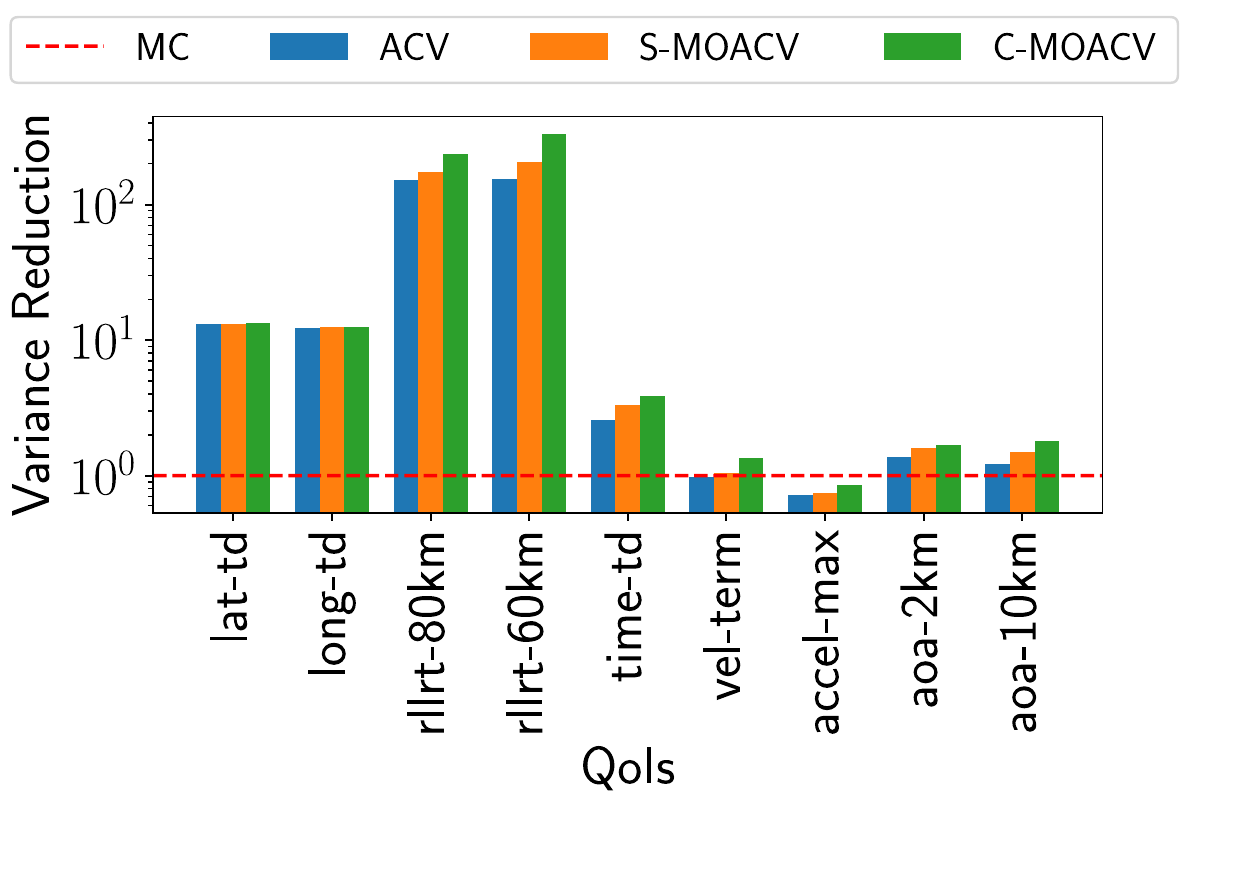}
    \caption{Mean Estimation}
    \label{ADm}
\end{subfigure}
\begin{subfigure}{0.45\textwidth}
    \includegraphics[trim={0 50pt 0 0},width=1\linewidth]{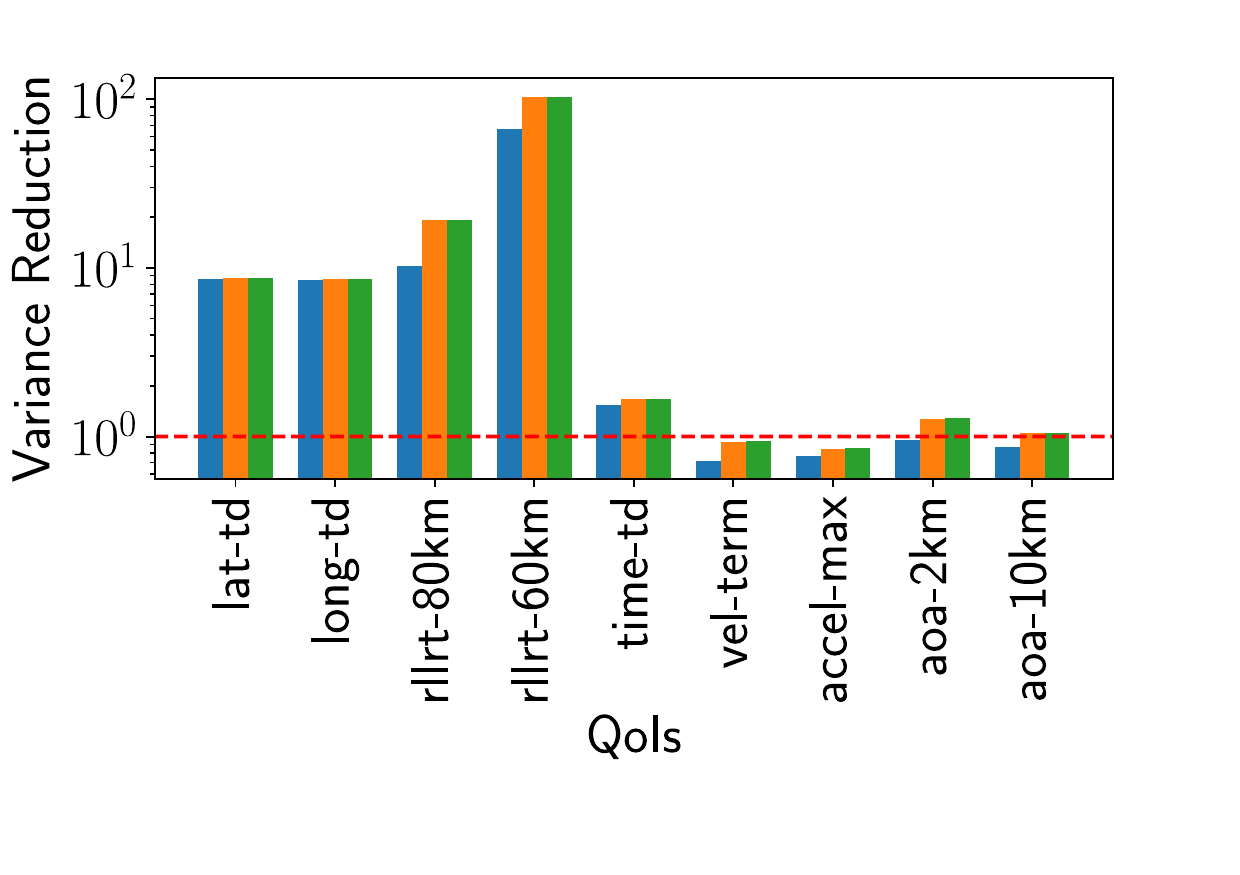}
    \caption{Variance Estimation}
    \label{ADv}
\end{subfigure}
\caption{Variance reduction compared to MC estimation for each trajectory QoI in Section \ref{sec:MVadept}. The S-MOACV and C-MOACV estimators outperform the individual ACV estimators at every QoI. }
\label{AD_orig}
\end{figure}

\begin{figure}[h]
\centering
\begin{subfigure}{0.45\textwidth}
    \includegraphics[trim={10pt 10pt 10pt 0},width=1\linewidth]{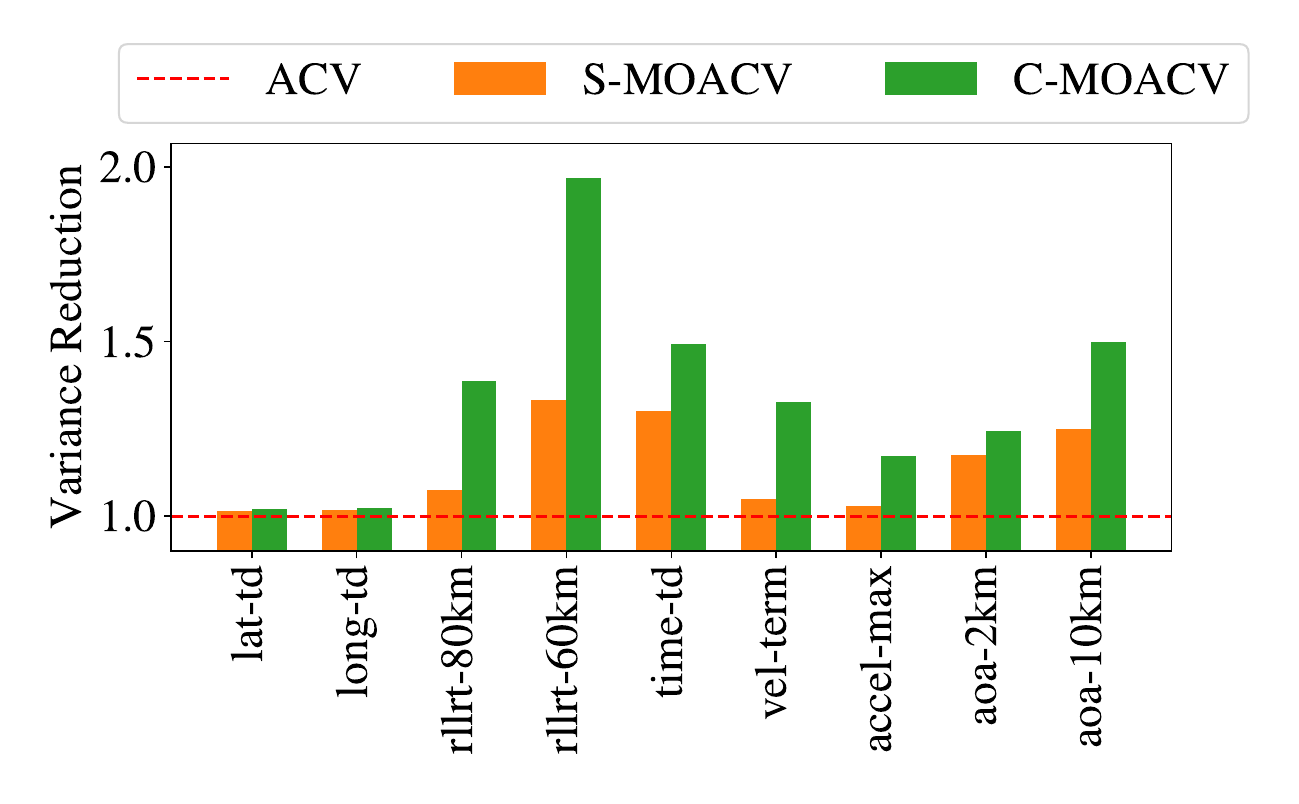}
    \caption{Mean Estimation}
    \label{ADmACV}
\end{subfigure}
\begin{subfigure}{0.45\textwidth}
    \includegraphics[trim={10pt 15pt 0 0},width=1\linewidth]{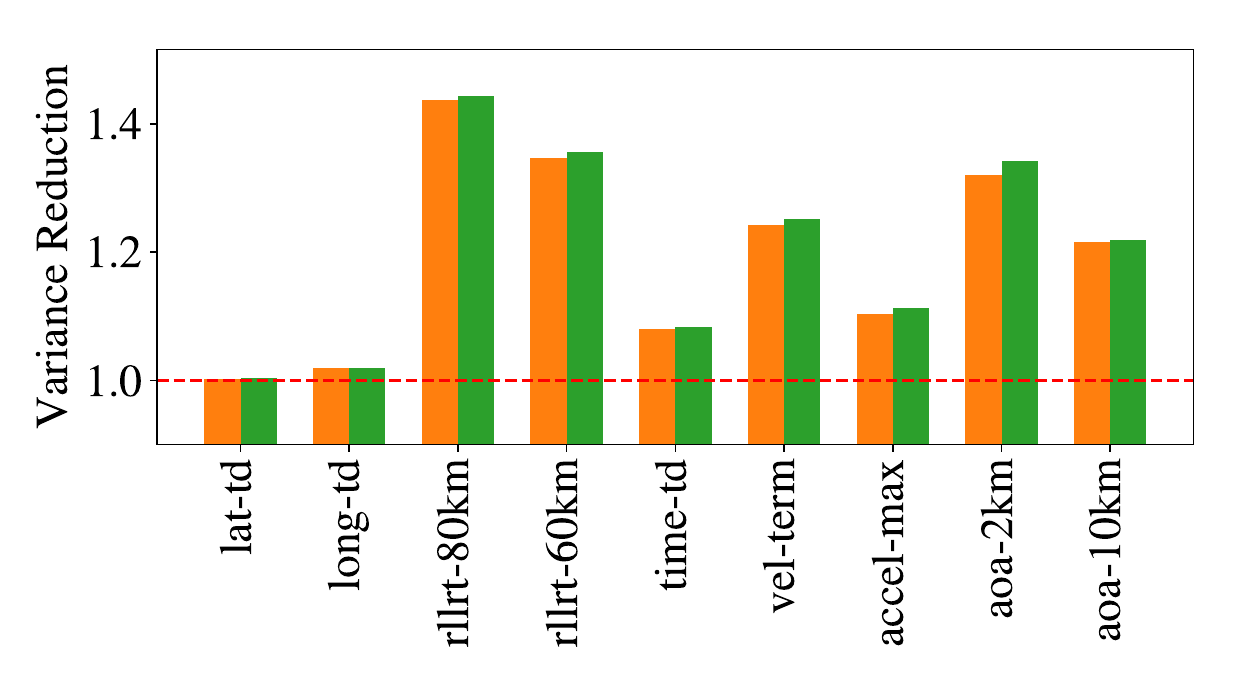}
    \caption{Variance Estimation}
    \label{ADvACV}
\end{subfigure}
\caption{\Ra{Variance reduction compared to ACV estimation for each trajectory QoI in Section \ref{sec:MVadept}. The S-MOACV and C-MOACV estimators outperform the individual ACV estimators at every QoI. The C-MOACV mean estimator for the \textit{rllrt-60km} QoI has nearly 2 times smaller estimator variance than the ACV estimator.} }
\label{AD_ACV}
\end{figure}

\subsection{Sensitivity Analysis}
\label{sec:adeptSA}
In this section, the C-MOACV estimator performs a sensitivity analysis on the roll rate at 80 km by simultaneously estimating the Sobol indices of three input variables, the initial roll rate (\textit{IRR}) and two uncertainties in the vehicle's moment of inertia, \textit{MOI1} and \textit{MOI2}. The input variables and QoI were chosen to demonstrate the ME variance estimation for variables with both high and low Sobol indices. The C-MOACV estimator contains 4 outputs, the 3 ME variance estimators and 1 total variance ($\var{\textrm{``\textit{rllrt-80km}"}}$) estimator. The Sobol indices are then constructed by dividing the ME variance estimate by the total variance estimate from the C-MOACV estimator. The pilot covariances are estimated using 5,000 pilot samples. The C-MOACV and ACV estimators use the ACV-IS sampling scheme, and the sample allocation was found by minimizing \eqref{eq:varOpt} with the C-MOACV variance subject to a budget of 10,000 seconds. The sample allocation is 21, 99, 200, and 7359 samples for the full-physics, reduced-physics, coarse time-step, and machine learning models respectively. 

Figure \ref{SA_Var} shows the variance reduction compared to MC estimation for the individual ACV estimators and the C-MOACV estimator.
The C-MOACV variance reduction is a median 300\% larger than ACV estimation. Since the ME variance estimator is only defined for scalar functions, the C-MOACV estimator can only outperform the ACV estimator if there are large correlations between the MC ME variance and total variance estimators. In Figure \ref{SA_corr}, the correlations between 10,000 MC estimators for the variance and ME variance can be seen. Large correlations are shown between the ME variance estimates and the total variance estimates. The MOACV estimator is able to extract these high correlations to reduce the variance of each of the estimators.

\begin{figure}[h]
\centering
\begin{subfigure}[t]{0.48\textwidth}
    \centering
\includegraphics[width=1\textwidth]{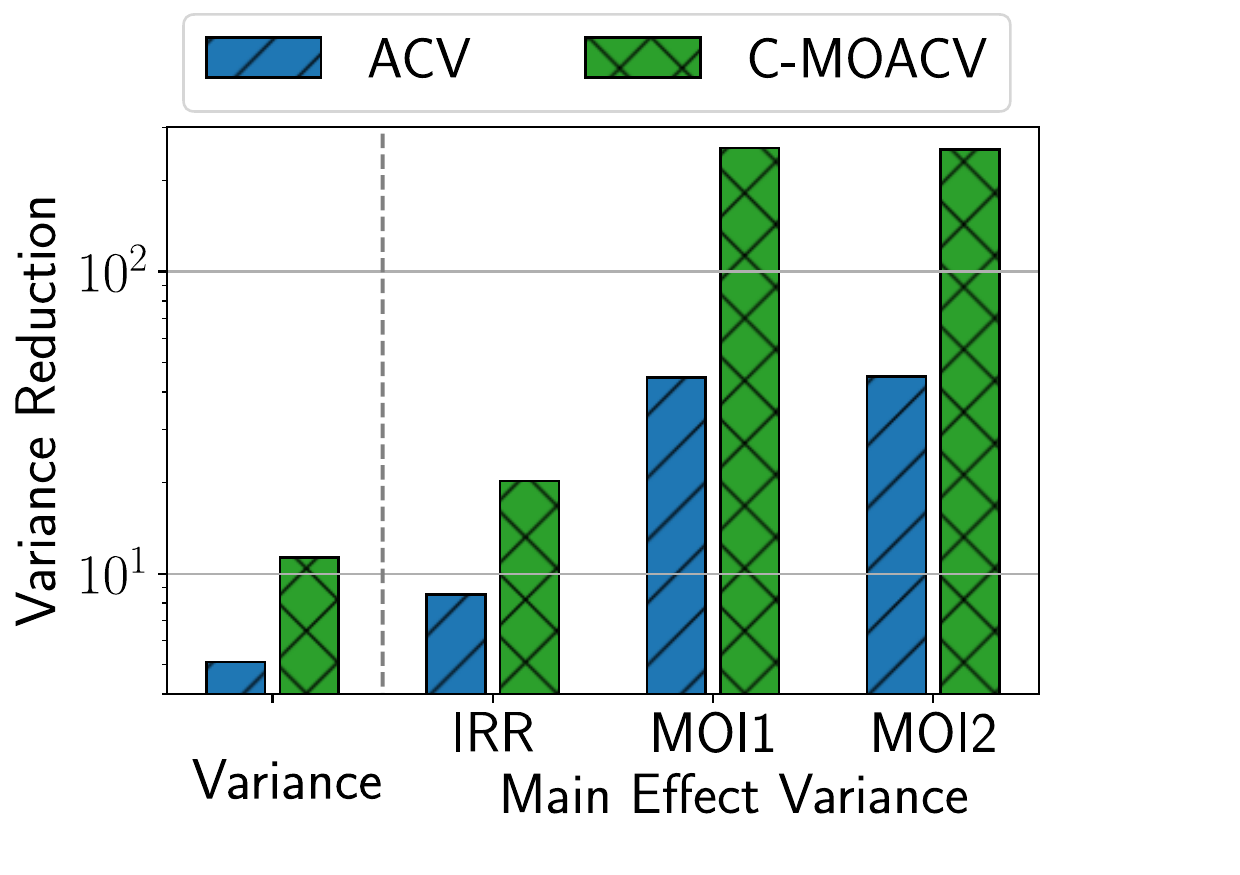}
    \caption{Variance Reduction compared to MC for each output.}
    \label{SA_Var}
\end{subfigure}
\hspace{10pt}
\begin{subfigure}[t]{0.48\textwidth}
    \centering
\includegraphics[trim={0 20pt 75pt 0},width=1\linewidth]{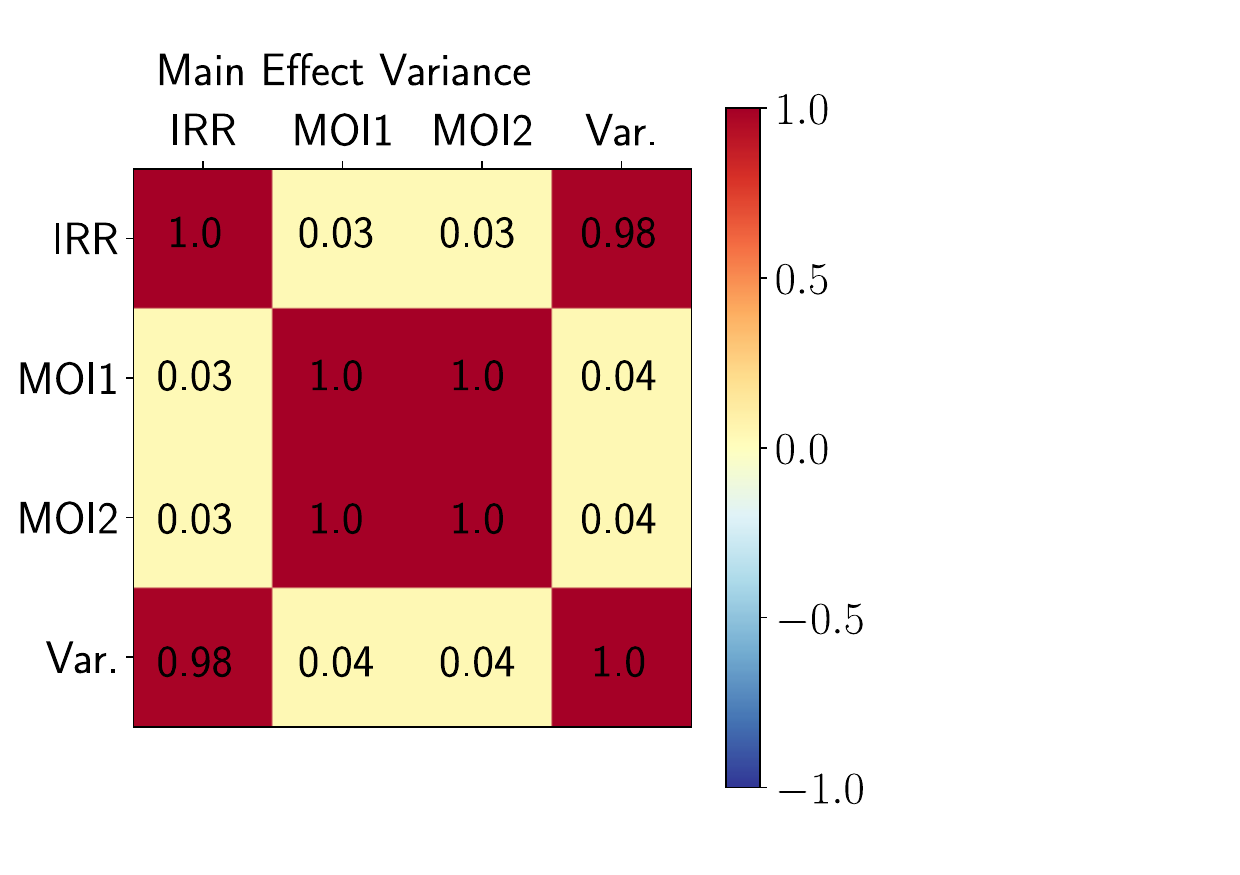}
\caption{Correlations between MC estimators for each ME variance and total variance.}
    \label{SA_corr}
\end{subfigure}
\caption{Sensitivity analysis for three EDL model inputs in Section \ref{sec:adeptSA}. C-MOACV extracts correlations between MC estimators to achieve larger variance reduction compared to individual ACV estimation. }
\label{SA_bars}
\end{figure}

We now form the Sobol index estimates by dividing the ME variance estimate by the total variance estimate for the MC, ACV, and MOACV estimators. To measure the distribution of Sobol index estimates, 10,000 estimators are constructed with random realizations of input samples. The distribution of the 10,000 Sobol index estimates is seen in Figure \ref{SA_box}. The Sobol index of \textit{IRR} is around $0.99$, and the \textit{MOI} Sobol estimates are close to $0$. Since a Sobol index is the percentage of the model's variance for an input, almost 100\% of the QoI's (roll rate at 80 km) variance is attributed to the initial roll rate. This high percentage explains the high correlation between the \textit{IRR} ME and the variance estimates seen in Figure \ref{SA_corr}. Conversely, the approximately 0\% Sobol index of the \textit{MOI}s explains the low correlation between their ME and total variance estimates.

\begin{figure}[h]
\centering
\begin{subfigure}[t]{0.48\textwidth}
    \centering
    \includegraphics[trim={0 50pt 0 0},width=1\linewidth]{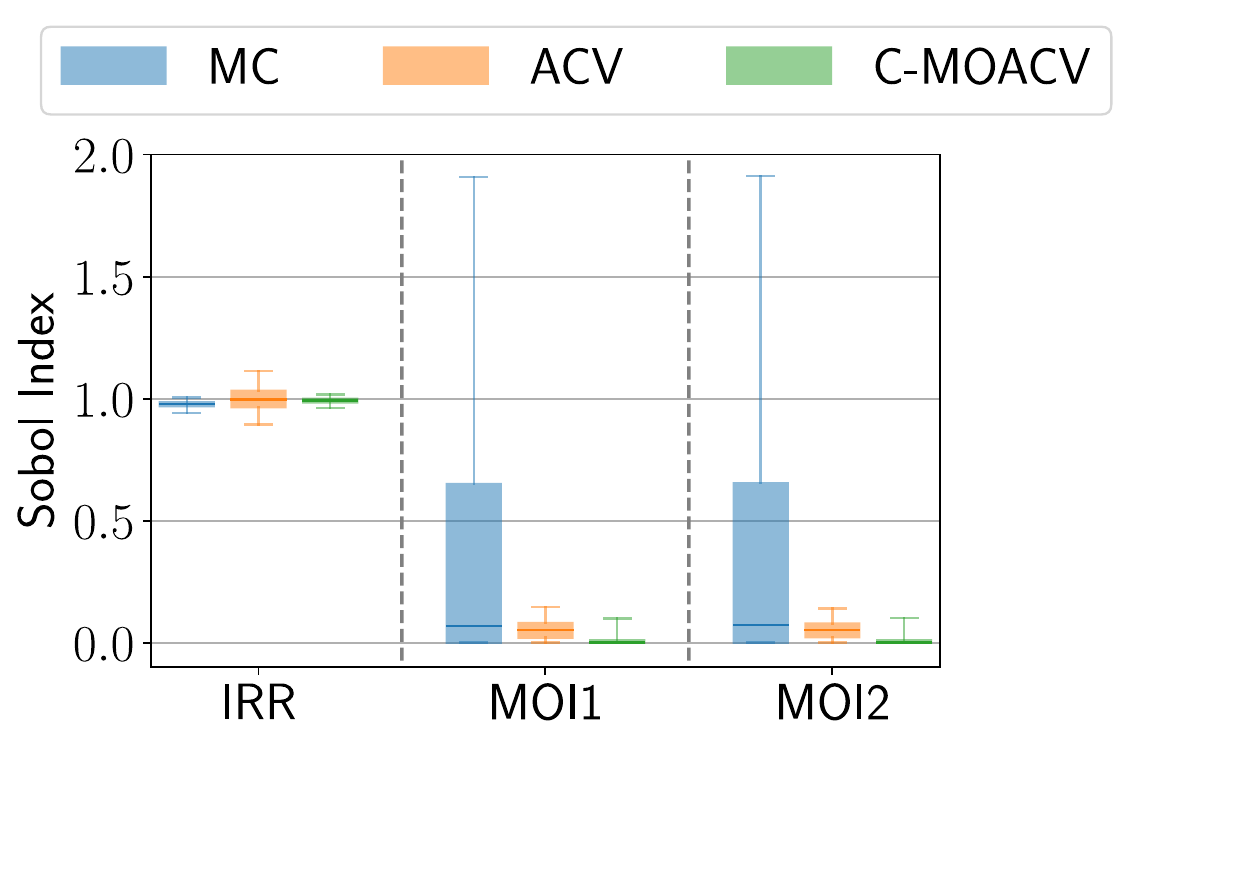}
    \caption{Sobol index estimates for each estimator.}
    \label{SA_box}
\end{subfigure}
\hspace{10pt}
\begin{subfigure}[t]{0.48\textwidth}
    \centering
    \includegraphics[trim={0 50pt 0 0},width=1\textwidth]{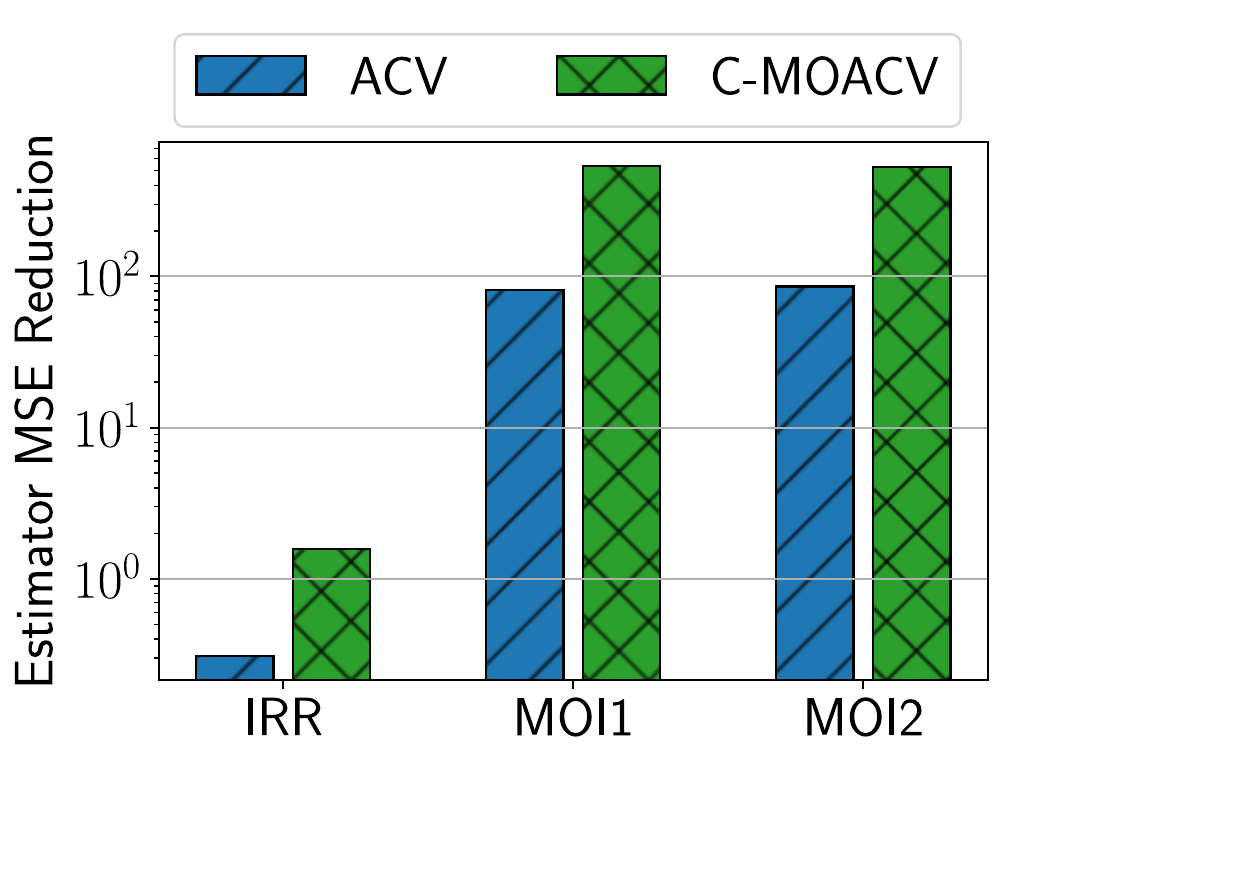}
    \caption{Mean Squared Error reduction compared to MC for each QoI.}
    \label{SA_SI}
\end{subfigure}
\caption{Sobol index estimation for the EDL model in Section \ref{sec:adeptSA}. The ME variance estimates are divided by the total variance estimates to form the Sobol indices. The C-MOACV estimator significantly reduces the MSE of Sobol index estimation compared to ACV estimators. }
\label{SA_comb}
\end{figure}

While Figure \ref{SA_box} displays the qualitative difference between ACV and C-MOACV Sobol index estimation, we now directly compare the error in each of the Sobol index estimates. Since the Sobol index estimates are found by dividing two estimators, the resulting Sobol index estimates are biased. Instead of measuring the variance of estimation, the mean squared error (MSE) is calculated to now consider the bias from the truth, which was calculated using 10,000 high-fidelity samples. The MSE for the MC estimates are divided by the MSE for the multi-fidelity estimates to calculate the MSE reduction. In Figure \ref{SA_SI}, the MSE reduction compared to MC estimation of the Sobol index estimates is seen. The C-MOACV estimator reduces the MSE in all Sobol index estimates compared to ACV estimation. The MSE reduction for the C-MOACV estimates is a median 515\% greater than the MSE reduction for ACV estimates. This section validates that the MOACV estimator can be used for more accurate sensitivity analysis and provides an example that the MOACV estimator can achieve further variance reduction by using the correlation between estimators.

\section{Conclusion}
In this work, we have introduced closed-form expressions for the covariance between MC estimators of multi-output functions for a variety of statistics. We have also used these results in the ACV context to construct the multi-output ACV estimator. The introduced multi-fidelity estimators include the vector-valued mean and variance estimators that utilize the correlations between models, outputs, and estimators to improve variance reduction. For sensitivity analysis, the MOACV estimator is demonstrated to simultaneously estimate the variance and multiple ME variances for more accurate Sobol indices. 

Numerous results demonstrate that the correlations between model fidelities, model outputs, and estimators can be extracted to provide further variance reduction. In the synthetic numerical results, the C-MOACV estimator is able to achieve up to 183 times larger variance reduction compared to a traditional ACV estimator. The MOACV estimator is also applied to an entry, descent, and landing application to more accurately estimate 9 QoIs given a fixed computational budget. Further, a variance-based sensitivity analysis is performed to illustrate the expected improved accuracy of the C-MOACV estimator. The C-MOACV estimator is able to increase the MSE reduction of Sobol index estimates by up to 557\% compared to traditional ACV estimation. In summary, multi-output estimation techniques are able to significantly outperform traditional ACV methods when high correlations exist between model outputs and estimators.

In future work, the ME variance estimator can be extended to vector-valued functions. Since the variance estimator has already been defined for multiple outputs, the extension to the ME variance estimator will be able to take advantage of correlations between other model outputs. Extending the estimator to vector-valued functions would enable the sensitivity analysis to be performed on multiple model outputs and inputs simultaneously. Additionally, the introduced estimator covariances can be applied to other multi-fidelity sampling strategies, such as the MLBLUE estimator for multi-statistic estimation. New strategies can be introduced to find the optimal number of pilot samples that minimize the total model evaluation cost, such as multi-arm bandit learning approaches \cite{BLMF}. Finally, covariance estimation techniques can be used to mitigate the loss of positive-definiteness by estimating on a covariance manifold \cite{PDCov}.

\clearpage
\bibliographystyle{siamplain}
\bibliography{references}

\setcounter{equation}{0}
\renewcommand{\theequation}{\thesection.\arabic{equation}}

\clearpage

\appendix

\Ra{
\section{Optimal ACV Weights - Trace Minimization}
\label{sec:ACVTrace}
Rubinstein and Marcus \cite{MVCV} provide a proof that the CV weights \eqref{eq:OptWei} are optimal when minimizing the determinant of the estimator variance \eqref{eq:MFempvar}. We provide a parallel proof here to demonstrate that \eqref{eq:OptWei} is also optimal when minimizing the trace of the estimator variance.

We show that the optimal weights $\mat{\alpha}^*$ that minimizes the trace of the estimator variance, $\textrm{Tr}(\mathbb{V}ar[{\Tilde{\vec{Q}}}])$, is equivalent to $\mat{\alpha}^*=-\mathbb{C}ov[\bold Q(\set{Z}_0), \un{\vec{\Delta}}(\set{Z}^*)]\mathbb{V}ar[\un{\vec{\Delta}}(\set{Z}^*)]^{-1}$. 

To begin, let $\mat{\alpha}^* = \mat{\alpha} + \mat{D}$ denote the optimal weights as a sum of two matrices where $\mat{\alpha} = -\mathbb{C}ov[\bold Q(\set{Z}_0), \un{\vec{\Delta}}(\set{Z}^*)]\mathbb{V}ar[\un{\vec{\Delta}}(\set{Z}^*)]^{-1}$. Thus, if $\mat{D}=\mat{0}$, then $\mat{\alpha}$ is optimal, $\mat{\alpha}^*=\mat{\alpha}$. We now show that $\mat{D}=\mat{0}$. 

To minimize the trace, we show that $\mat{D}$ must be equal to $\mat{0}$. From Equation \eqref{eq:MFempvar},
\begin{align}
\scalemath{0.9}{
    \mathbb{V}ar[\Tilde{\bold Q}\Ra{(\set{Z})}]} &= \scalemath{0.9}{
     \mathbb{V}ar[\bold Q\Ra{(\set{Z}_0)}] + \mat{\alpha}^* \mathbb{V}ar[\un{\vec{\Delta}}\Ra{(\set{Z}^*)}]\mat{\alpha}^{*T} + \mathbb{C}ov[\bold Q\Ra{(\set{Z}_0)}, \un{\vec{\Delta}}\Ra{(\set{Z}^*)}]\mat{\alpha}^{*T} + \mat{\alpha}^*\mathbb{C}ov[\un{\vec{\Delta}}\Ra{(\set{Z}^*)},\bold Q\Ra{(\set{Z}_0)}].} \nonumber\\
     &= \mathbb{V}ar[\bold Q\Ra{(\set{Z}_0)}] + \mathbb{C}ov[\bold Q(\set{Z}_0), \un{\vec{\Delta}}(\set{Z}^*)]\mathbb{V}ar[\un{\vec{\Delta}}(\set{Z}^*)]^{-1}\mathbb{C}ov[\bold Q(\set{Z}_0), \un{\vec{\Delta}}(\set{Z}^*)]^T \nonumber\\
     &\quad -\mathbb{C}ov[\bold Q(\set{Z}_0), \un{\vec{\Delta}}(\set{Z}^*)]\mat{D}^T - \mat{D}\mathbb{C}ov[\bold Q(\set{Z}_0), \un{\vec{\Delta}}(\set{Z}^*)]^T + \mat{D}\mathbb{V}ar[\un{\vec{\Delta}}\Ra{(\set{Z}^*)}]\mat{D}^T \nonumber\\
     &\quad - 2\mathbb{C}ov[\bold Q\Ra{(\set{Z}_0)}, \un{\vec{\Delta}}\Ra{(\set{Z}^*)}]\mathbb{V}ar[\un{\vec{\Delta}}(\set{Z}^*)]^{-1}\mathbb{C}ov[\bold Q(\set{Z}_0), \un{\vec{\Delta}}(\set{Z}^*)]^T  \nonumber\\
     &\quad + \mathbb{C}ov[\bold Q\Ra{(\set{Z}_0)}, \un{\vec{\Delta}}\Ra{(\set{Z}^*)}]\mat{D}^T  + \mat{D}\mathbb{C}ov[\un{\vec{\Delta}}\Ra{(\set{Z}^*)},\bold Q\Ra{(\set{Z}_0)}]. \nonumber \\
     &= \left[\mathbb{V}ar[\bold Q\Ra{(\set{Z}_0)}] - \mathbb{C}ov[\bold Q(\set{Z}_0), \un{\vec{\Delta}}(\set{Z}^*)]\mathbb{V}ar[\un{\vec{\Delta}}(\set{Z}^*)]^{-1}\mathbb{C}ov[\bold Q(\set{Z}_0), \un{\vec{\Delta}}(\set{Z}^*)]^T \right] \nonumber\\
     &\quad + \mat{D}\mathbb{V}ar[\un{\vec{\Delta}}\Ra{(\set{Z}^*)}]\mat{D}^T \nonumber\\
     &= \mat{C} + \mat{D}\mathbb{V}ar[\un{\vec{\Delta}}\Ra{(\set{Z}^*)}]\mat{D}^T 
\end{align}
where $\mat{C}=\mathbb{V}ar[\bold Q\Ra{(\set{Z}_0)}] - \mathbb{C}ov[\bold Q(\set{Z}_0), \un{\vec{\Delta}}(\set{Z}^*)]\mathbb{V}ar[\un{\vec{\Delta}}(\set{Z}^*)]^{-1}\mathbb{C}ov[\bold Q(\set{Z}_0), \un{\vec{\Delta}}(\set{Z}^*)]^T$. Thus, the estimator variance $\mathbb{V}ar[\Tilde{\bold Q}\Ra{(\set{Z})}]$ can be expressed as the sum of two positive semi-definite matrices. Using the properties of trace of two matrices we have
\begin{align}
    \textrm{Tr}(\mathbb{V}ar[\Tilde{\bold Q}\Ra{(\set{Z})}]) &= \textrm{Tr}(\mat{C} + \mat{D}\mathbb{V}ar[\un{\vec{\Delta}}\Ra{(\set{Z}^*)}]\mat{D}^T) \nonumber\\
    &=\textrm{Tr}(\mat{C}) + \textrm{Tr}(\mat{D}\mathbb{V}ar[\un{\vec{\Delta}}\Ra{(\set{Z}^*)}]\mat{D}^T)
\end{align}
where $\textrm{Tr}(\mathbb{V}ar[\Tilde{\bold Q}\Ra{(\set{Z})}])$ is minimized if $\mat{D}=\mat{0}$ because the trace of a positive semi-definite matrix is always non-negative. Therefore, since  $\mat{D}=0$, the optimal weights must be $\mat{\alpha}^*=-\mathbb{C}ov[\bold Q(\set{Z}_0), \un{\vec{\Delta}}(\set{Z}^*)]\mathbb{V}ar[\un{\vec{\Delta}}(\set{Z}^*)]^{-1}$, as was to be shown.

}

\Rb{
\section{Derivations of Previous Results}
\label{sec:prevworks}
We show that our expressions can be simplified to previously derived results. First, we wish to show that the covariance between two mean estimators for MFMC scalar mean estimation is
\begin{align}
\cov{\vec{Q}_i(\set{Z}_i^*),\vec{Q}_j(\set{Z}_j^*)} &= \frac{1}{\max\{ |\set{Z}_i^*|,|\set{Z}_j^*| \}}\cov{f_i,f_j}.
\end{align}
as was declared in \cite[Eq. 3.4]{MFMC1}.

Since MFMC declares the sample sets to be $|\set{Z}_i^*\cap \set{Z}_j^*| = \min\{|\set{Z}_i^*|, |\set{Z}_j^*|\}$ by construction, Proposition \ref{meancov} gives
\begin{align}
    \cov{\vec{Q}_i(\set{Z}_i^*),\vec{Q}_j(\set{Z}_j^*)} &= \frac{|\set{Z}_i^*\cap \set{Z}_j^*|}{|\set{Z}_i^*||\set{Z}_j^*| }\cov{f_i,f_j} = \frac{\min\{|\set{Z}_i^*|, |\set{Z}_j^*|\}}{|\set{Z}_i^*||\set{Z}_j^*| }\cov{f_i,f_j}\\
    &= \frac{1}{\max\{ |\set{Z}_i^*|,|\set{Z}_j^*| \}}\cov{f_i,f_j}.
\end{align}
as was to be shown.

Now, we wish to show that the covariance between two variance estimators for MFMC scalar variance estimation
\begin{align}
    \mathbb{C}ov[\bold Q_i(\set{Z}_i^*), \bold Q_j(\set{Z}_j^*)] &= \scalemath{0.8}{\frac{1}{|\set{Z}_i^*|}\left[\cov{\left(f_i(\vec{z})-\mathbb{E}[f_i(\vec{z})]\right)^{ 2},\left(f_j(\vec{z})-\mathbb{E}[f_j(\vec{z})]\right)^{ 2}} + \frac{2}{|\set{Z}_i^*|-1} \mathbb{C}ov[f_i(\vec{z}), f_j(\vec{z})]^{2}\right]}
\end{align}
as was declared in \cite[Eq. 14]{MFMC}.

Using the MFMC sampling scheme, let $i\geq j$ such that $|\set{Z}_i^*\cap \set{Z}_j^*| = |\set{Z}_j^*|$. Proposition \ref{var_cov} gives
\begin{align}
    \mathbb{C}ov[\bold Q_i(\set{Z}_i^*), \bold Q_j(\set{Z}_j^*)] &= \frac{|\set{Z}_j^*|(|\set{Z}_j^*|-1)}{|\set{Z}_i^*|(|\set{Z}_i^*|-1)|\set{Z}_j^*|(|\set{Z}_j^*|-1)} \mat{V}_{ij} + \frac{|\set{Z}_j^*|}{|\set{Z}_i^*||\set{Z}_j^*|}\mat{W}_{ij} \\
    &= \frac{1}{|\set{Z}_i^*|(|\set{Z}_i^*|-1)} \mat{V}_{ij} + \frac{1}{|\set{Z}_i^*|}\mat{W}_{ij}.
\end{align}
For the scalar variance case,
\begin{align}
    {\mat{V}}_{ij} &= \scalemath{0.9}{\mathbb{C}ov[f_i(\vec{z}), f_j(\vec{z})]^{\otimes 2} + \left( \bold 1_1^T \otimes \mathbb{C}ov[f_i(\vec{z}), f_j(\vec{z})] \otimes \bold 1_1 \right) \circ \left(\bold 1_1 \otimes \mathbb{C}ov[f_i(\vec{z}), f_j(\vec{z})] \otimes \bold 1_1^T \right)} \\
    &= \mathbb{C}ov[f_i(\vec{z}), f_j(\vec{z})]^{2} +  \mathbb{C}ov[f_i(\vec{z}), f_j(\vec{z})]^{2} =  2\mathbb{C}ov[f_i(\vec{z}), f_j(\vec{z})]^{2} \\
    {\mat{W}}_{ij} &= \cov{\left(f_i(\vec{z})-\mathbb{E}[f_i(\vec{z})]\right)^{\otimes 2},\left(f_j(\vec{z})-\mathbb{E}[f_j(\vec{z})]\right)^{\otimes 2}} \\
    &= \cov{\left(f_i(\vec{z})-\mathbb{E}[f_i(\vec{z})]\right)^{ 2},\left(f_j(\vec{z})-\mathbb{E}[f_j(\vec{z})]\right)^{ 2}}
\end{align}
Thus,
\begin{align}
    \mathbb{C}ov[\bold Q_i(\set{Z}_i^*), \bold Q_j(\set{Z}_j^*)] &= \scalemath{0.8}{\frac{2}{|\set{Z}_i^*|(|\set{Z}_i^*|-1)} \mathbb{C}ov[f_i(\vec{z}), f_j(\vec{z})]^{2} + \frac{1}{|\set{Z}_i^*|}\cov{\left(f_i(\vec{z})-\mathbb{E}[f_i(\vec{z})]\right)^{ 2},\left(f_j(\vec{z})-\mathbb{E}[f_j(\vec{z})]\right)^{ 2}}} \\
    &= \scalemath{0.8}{\frac{1}{|\set{Z}_i^*|}\left[\cov{\left(f_i(\vec{z})-\mathbb{E}[f_i(\vec{z})]\right)^{ 2},\left(f_j(\vec{z})-\mathbb{E}[f_j(\vec{z})]\right)^{ 2}} + \frac{2}{|\set{Z}_i^*|-1} \mathbb{C}ov[f_i(\vec{z}), f_j(\vec{z})]^{2}\right]}
\end{align}
as was to be shown.
}

\Ra{
\section{Pilot sampling strategies}
\label{sec:PSstrat}
In this section, we compare pilot study techniques that introduce bias into the MOACV estimator by reusing the pilot samples within the estimator. The system is the same as in Section \ref{sec:percent} where we estimate the mean of the first output of $\vec{f}_0(x)$ using MC, ACV, MOACV, and two alternate estimators using all 8 outputs. 

First, the MOACV-Reuse estimator begins with a set of pilot samples to estimate the required covariances. These covariances are used to find the optimal sample allocation. New sets of samples are evaluated based on the sample allocation and the remaining budget. The pilot samples are then appended to the newly sampled sets and used for estimation. Using the pilot samples in the estimator creates bias, which is why we need to measure the mean squared error (MSE) of the estimator instead of just the variance.

Next we consider an adaptive scheme MOACV-Adapt that continues to re-estimate until a naive convergence criteria is met. Specifically, the MOACV-Adapt takes an independent set of pilot samples to estimate the pilot covariances. These covariances are used to find the optimal sample allocation. If the optimal sample allocation requests more high-fidelity samples than the number of pilot samples, we add more pilot samples. We repeat this until the optimal sample allocation stops requesting more high fidelity samples. We then evaluate low-fidelity samples with the remaining budget according to the optimal sample allocation. The pilot samples are then used in the estimation process. 

One important difference between the MOACV-Reuse and the MOACV-Adapt is that  MOACV-Adapt estimates with the optimal sample allocation that it finds. The MOACV-Reuse uses a modified version of the optimal sample allocation since the pilot samples are simply appended to the estimation sets.

Figure \ref{bias_study} displays the empirical estimator MSEs of 50 estimators across 25 sets of pilot samples at each budget percentage. The MOACV-Reuse strategy only outperforms the MOACV estimator at a large number of pilot samples. This is because more pilot samples are appended to the estimation sets to reduce the estimator variance. While this strategy does introduce bias to the estimator, it does not outweigh the variance in this example. The MOACV-Adapt outperforms the MOACV at every budget percent. When there are too few pilot samples, the MOACV-Adapt suggests adding more. When there are too many pilot samples, the MOACV-Adapt adds the pilot samples to the estimation sets. Again, this reduces the variance of the MOACV estimator, but also adds bias. In this example, the bias does not seem to play a significant role. The reduction in MSE seems small, but it suggests a potentially fruitful path for future work.
}

\begin{figure}
    \centering
    \includegraphics[trim={0 0 0 0},width=0.6\linewidth]{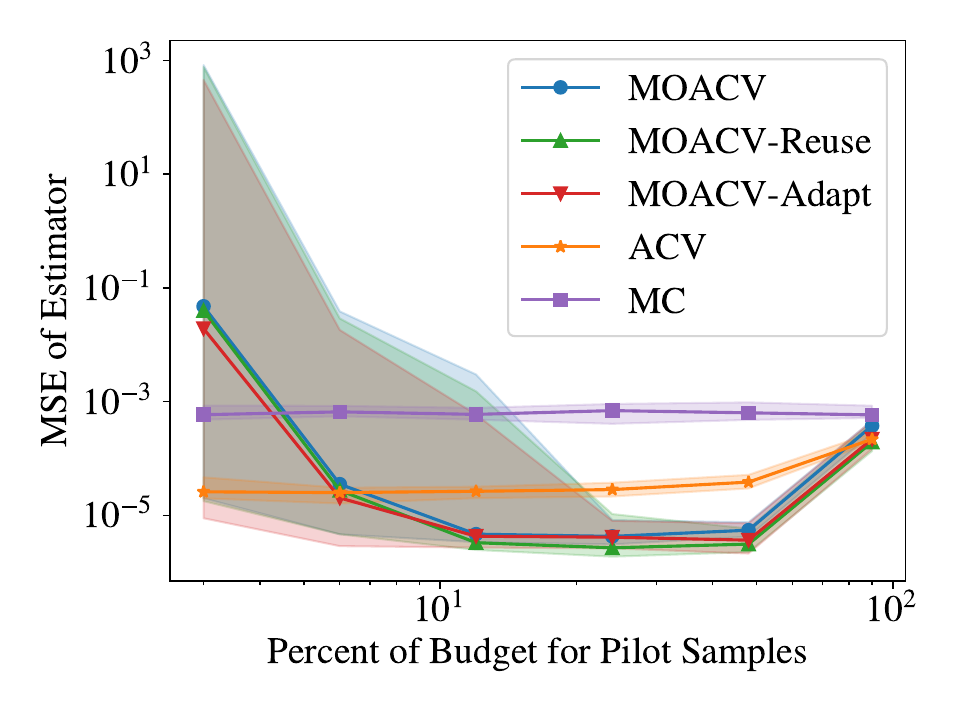}
    \captionof{figure}{\Ra{MSE of the estimators across various budget percentages allocated to the pilot samples in Appendix \ref{sec:PSstrat}. The shaded region represents 90\% of the uncertainty in the estimator variance over 25 sets of pilot samples. The MOACV-Adapt is able to outperform the MOACV estimator at every percentage by reducing the estimator variance. The introduced bias seems negligible in this example system.}}
    \label{bias_study}
\end{figure}
\FloatBarrier

\Ra{
\section{EDL Example Bias}
\label{sec:EDLbias}
As a reference we include the biases of the low-fidelity models in the EDL model. These are shown in Figure \ref{fig:EDLbias}.

\begin{figure}[h]
    \centering
    \includegraphics[trim={0 0 0 0},width=0.6\linewidth]{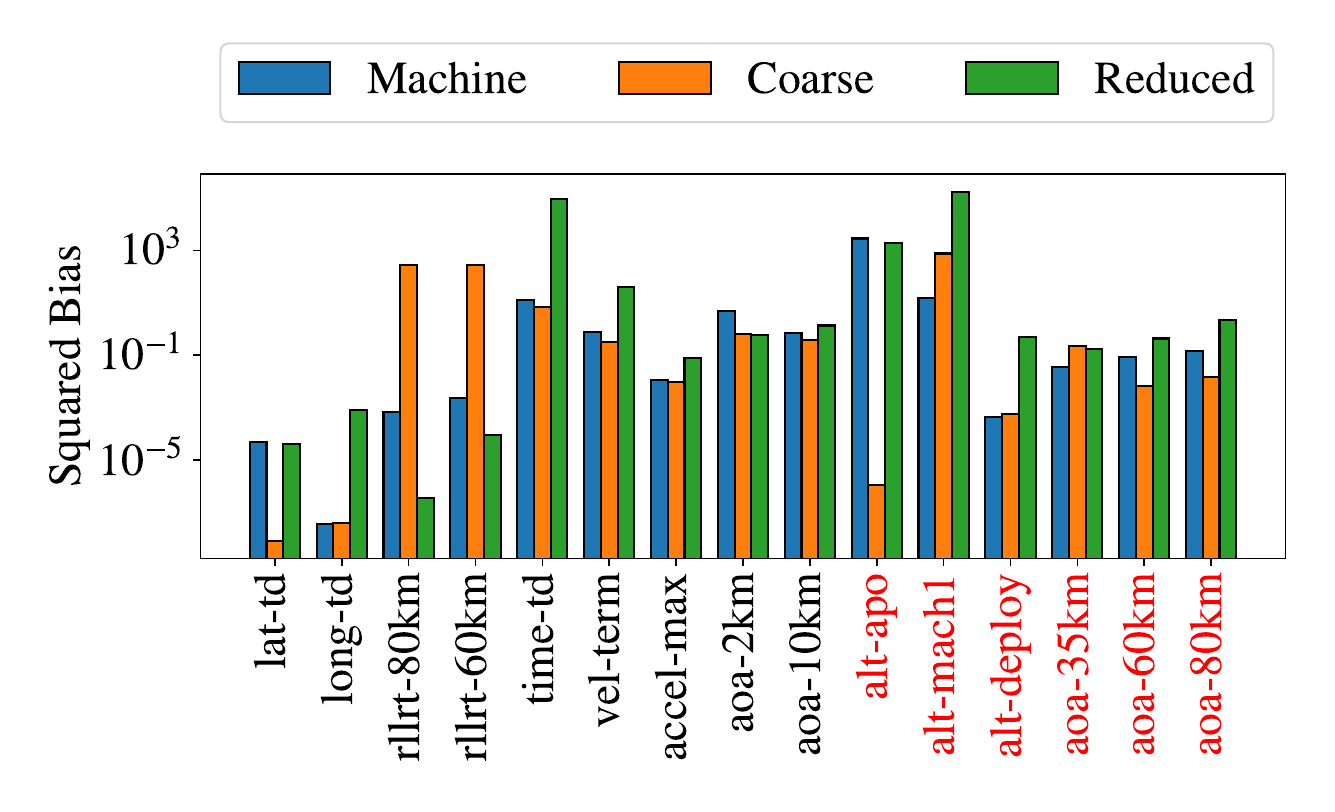}
    \captionof{figure}{\Ra{The squared biases of the low-fidelity models with respect to the high-fidelity model for each output in Section \ref{sec:application} are shown.}}
    \label{fig:EDLbias}
\end{figure}
}
\FloatBarrier

\Rb{
\section{EDL Pilot Study}
\label{sec:EDLPilot}
In this example, we consider a total budget of an equivalent 100,000 high-fidelity samples. Similarly to Section \ref{sec:MVadept}, we allocate an equivalent 60,000 high-fidelity samples to the pilot study budget. The remaining budget of 40,000 equivalent high-fidelity samples is used to estimate the means and variances of the QoIs with the same sample allocation ratios in Section \ref{sec:MVadept}. The variance reduction compared to MC estimation (using 100,000 high-fidelity samples) is shown in Figure \ref{fig:AD_PS}. This figure demonstrates that even considering the high pilot study cost, MOACV is able to outperform MC estimation for the QoIs that have high correlations with the low-fidelity models. Specifically for mean estimation, the log determinant of the mean-estimator variance is -93.86 for C-MOACV, -93.08 for S-MOACV , -90.65 for ACV, and -90.25 for MC estimation, where a more negative number corresponds to a smaller determinant. Clearly, the MOACV estimator is able to outperform ACV and MC estimation, even with a sample allocation that was optimized for ACV.

\begin{figure}[h]
\centering
\begin{subfigure}{0.45\textwidth}
    \includegraphics[trim={10pt 10pt 10pt 0},width=1\linewidth]{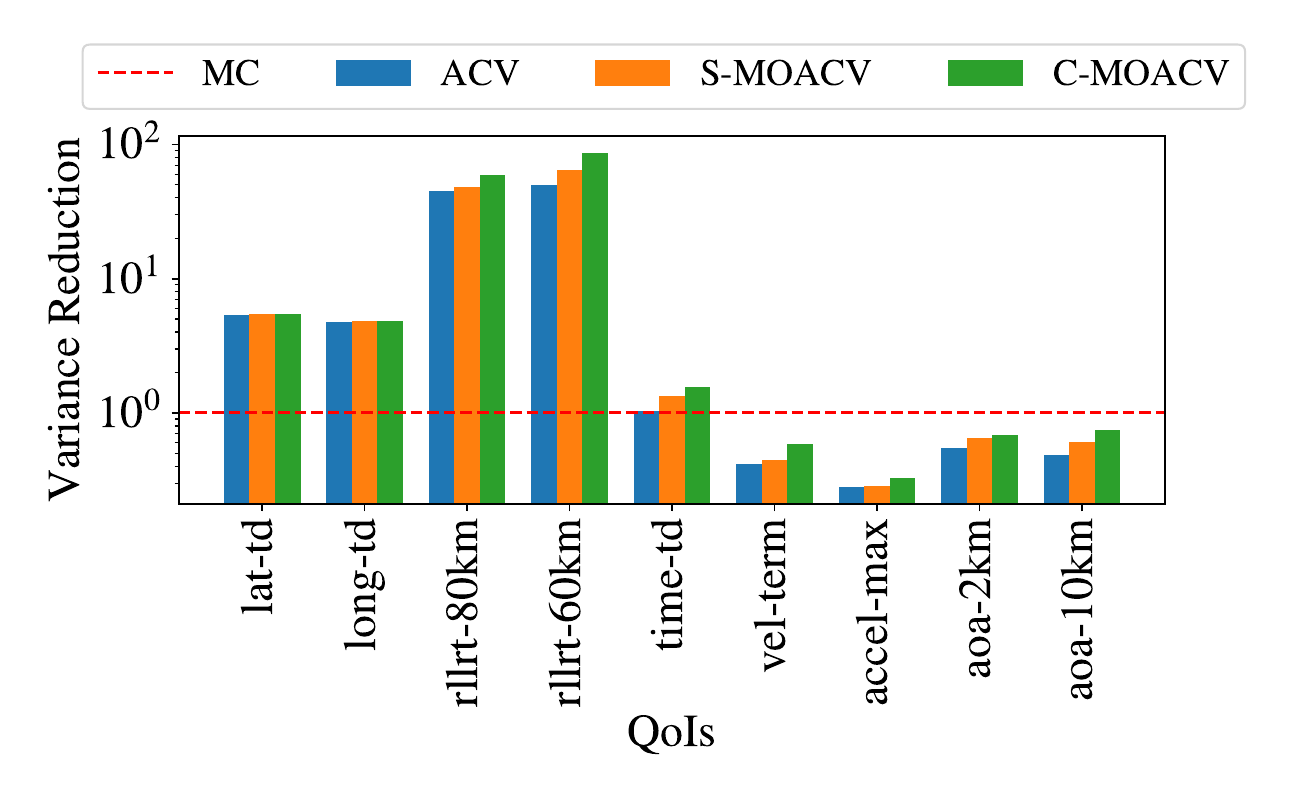}
    \caption{Mean Estimation}
    \label{fig:ADm_PS}
\end{subfigure}
\begin{subfigure}{0.45\textwidth}
    \includegraphics[trim={10pt 0 0 0},width=1\linewidth]{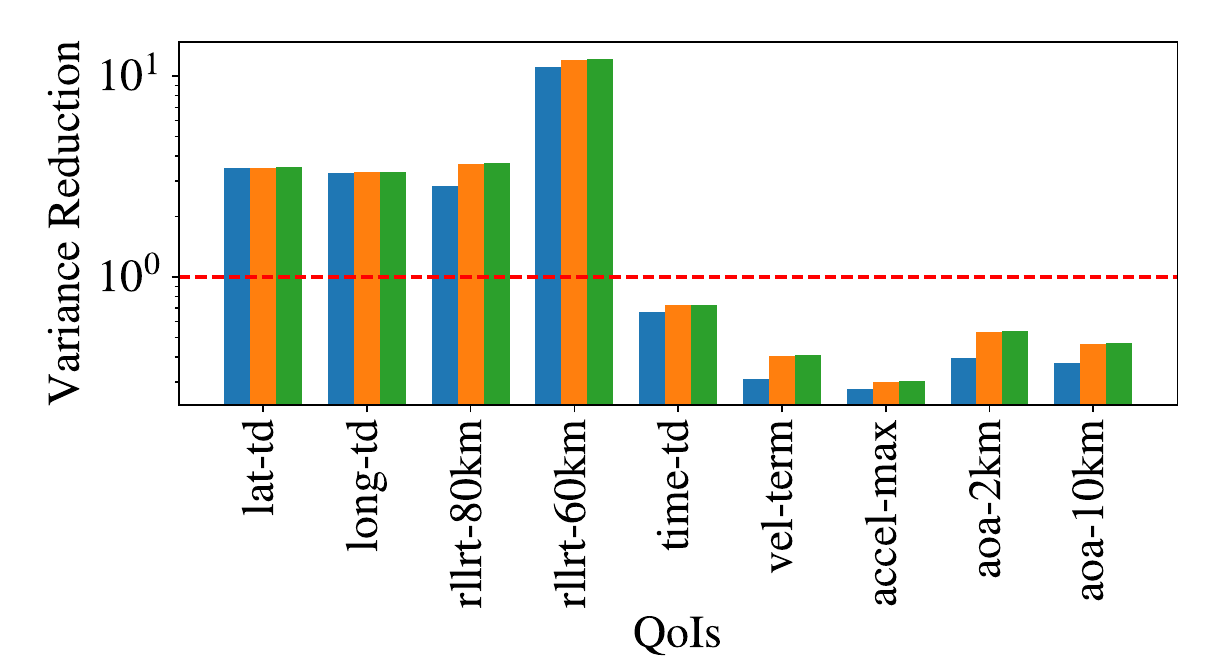}
    \caption{Variance Estimation}
    \label{fig:ADv_PS}
\end{subfigure}
\caption{\Rb{Variance reduction compared to MC estimation for each trajectory QoI in Appendix \ref{sec:EDLPilot}. MC estimation includes pilot study costs. The S-MOACV and C-MOACV estimators outperform the individual ACV estimators.} }
\label{fig:AD_PS}
\end{figure}
}

\FloatBarrier
\section{Three Useful Covariance Results}
\label{SubApp}
First, let $\vec{w},\vec{x}\in \reals^{D\times 1}$ be dependent random variables, and let $\vec{y}, \vec{z} \in \reals^{D\times 1}$ be independent random variables with respect to all other random variables. By linearity of expectation,
$\mathbb{C}ov[\vec{w}\otimes \vec{y}, \vec{x}\otimes \vec{z}] = \mathbb{C}ov[\vec{w}\otimes\mathbb{E}[\vec{y}], \vec{x}\otimes\mathbb{E}[\vec{z}]].$

Next, we wish to derive $\cov{\vec{w},\vec{x}}^{\otimes 2}$
\begin{align}
    \cov{\vec{w},\vec{x}}^{\otimes 2} &= \cov{\vec{w},\vec{x}}\otimes \cov{\vec{w},\vec{x}} \nonumber\\
    &= \left(\ee{\vec{w} \vec{x}^T} - \ee{\vec{w}}\ee{\vec{x}}\right) \otimes \left(\ee{\vec{w} \vec{x}^T} - \ee{\vec{w}}\ee{\vec{x}}\right) \nonumber \\
    &= \ee{\vec{w} \vec{x}^T}^{\otimes 2}- \ee{\vec{w} \vec{x}^T} \otimes \ee{\vec{w}}\ee{\vec{x}} -  \ee{\vec{w}}\ee{\vec{x}} \otimes \ee{\vec{w} \vec{x}^T} + \left(\ee{\vec{w}}\ee{\vec{x}}\right)^{\otimes 2}.
\end{align}
Finally, another useful covariance is
\reqnomode
\begin{align}
    &\left( \bold 1_D^T \otimes \mathbb{C}ov[\vec{w}, \vec{x}] \otimes \bold 1_D \right) \circ \left(\bold 1_D \otimes \mathbb{C}ov[\vec{w},\vec{x}] \otimes \bold 1_D^T \right) = \nonumber\\
    &\quad= \left( \bold 1_D^T \otimes (\ee{\vec{w} \vec{x}^T}-\ee{\vec{w}}\ee{\vec{x}^T}) \otimes \bold 1_D \right) \circ \left(\bold 1_D \otimes (\ee{\vec{w} \vec{x}^T}-\ee{\vec{w}}\ee{\vec{x}^T}) \otimes \bold 1_D^T \right) \nonumber\\
    &\quad = (\vec{1}_D^T \otimes \ee{\vec{w} \vec{x}^T} \otimes \vec{1}_D) \circ (\vec{1}_D \otimes \ee{\vec{w} \vec{x}^T} \otimes \vec{1}_D^T) \nonumber \tag{W}\\
    &\quad\quad - (\vec{1}_D^T \otimes \ee{\vec{w} \vec{x}^T} \otimes \vec{1}_D) \circ (\vec{1}_D \otimes \ee{\vec{w}} \ee{\vec{x}^T} \otimes \vec{1}_D^T) \nonumber \tag{X}\\
    &\quad\quad\quad - (\vec{1}_D^T \otimes \ee{\vec{w}}\ee{\vec{x}^T} \otimes \vec{1}_D) \circ (\vec{1}_D \otimes \ee{\vec{w} \vec{x}^T} \otimes \vec{1}_D^T) \nonumber \tag{Y}\\
    &\quad\quad\quad\quad + (\vec{1}_D^T \otimes \ee{\vec{w}}\ee{\vec{x}^T} \otimes \vec{1}_D) \circ (\vec{1}_D \otimes \ee{\vec{w}} \ee{\vec{x}^T} \otimes \vec{1}_D^T). \tag{Z} \nonumber 
\end{align}
\leqnomode
First, (X) is considered. Note that $\vec{A}\otimes \vec{B}^T = \vec{B}^T \otimes \vec{A}$ when $\vec{A}, \vec{B} \in \reals^{D\times 1}$. Now,
\begin{align}
    (\textrm{X}) &= - (\vec{1}_D^T \otimes \ee{\vec{w} \vec{x}^T} \otimes \vec{1}_D) \circ (\vec{1}_D \otimes \ee{\vec{x}^T} \otimes \ee{\vec{w}} \otimes \vec{1}_D^T) \nonumber \\
    &= - (\vec{1}_D^T \otimes \ee{\vec{w} \vec{x}^T} \otimes \vec{1}_D) \circ (\ee{\vec{x}^T}  \otimes \vec{1}_D \otimes \vec{1}_D^T \otimes \ee{\vec{w}} ) \nonumber \\
    &= -\ee{\vec{x}^T} \otimes \ee{\vec{w} \vec{x}^T} \otimes \ee{\vec{w}}
\end{align}
using Kronecker and Hadamard mixed product properties. Similarly, 
\begin{align}
    (\textrm{Y}) &= - (\ee{\vec{w}} \otimes \vec{1}_D^T \otimes \vec{1}_D \otimes \ee{\vec{x}^T}) \circ (\vec{1}_D \otimes \ee{\vec{w} \vec{x}^T} \otimes \vec{1}_D^T) \nonumber \\
    &= -\ee{\vec{w}} \otimes \ee{\vec{w} \vec{x}^T} \otimes \ee{\vec{x}^T}.
\end{align}
Finally, consider (Z)
\begin{align}
    (\textrm{Z}) &= \left[(\vec{1}_D \otimes \ee{\vec{x}^T}) \otimes (\ee{\vec{w}} \otimes \vec{1}_D^T)\right] \circ \left[(\ee{\vec{w}} \otimes \vec{1}_D^T ) \otimes ( \vec{1}_D  \otimes \ee{\vec{x}^T})\right] \nonumber \\
    &= \left[(\vec{1}_D \otimes \ee{\vec{x}^T}) \circ (\ee{\vec{w}} \otimes \vec{1}_D^T)\right] \otimes \left[(\ee{\vec{w}} \otimes \vec{1}_D^T ) \circ ( \vec{1}_D  \otimes \ee{\vec{x}^T})\right] \nonumber \\
    &= \left[(\vec{1}_D \circ \ee{\vec{w}}) \otimes (\ee{\vec{x}^T} \circ \vec{1}_D^T)\right] \otimes \left[(\ee{\vec{w}} \circ  \vec{1}_D) \otimes ( \vec{1}_D^T  \circ \ee{\vec{x}^T})\right] \nonumber \\
    &= \ee{\vec{w}}\otimes \ee{\vec{x}^T} \otimes \ee{\vec{w}} \otimes \ee{\vec{x}^T} \nonumber \\
    &= \left( \ee{\vec{w}}\ee{\vec{x}^T} \right)^{\otimes 2}.
\end{align}
Therefore, by combining (W), (X), (Y), and (Z)
\begin{align}
&\left( \bold 1_D^T \otimes \mathbb{C}ov[\vec{w}, \vec{x}] \otimes \bold 1_D \right) \circ \left(\bold 1_D \otimes \mathbb{C}ov[\vec{w},\vec{x}] \otimes \bold 1_D^T \right) \nonumber =\\
    &\quad= (\mathbb{E}[\vec{w}]\mathbb{E}[j^T])^{\otimes 2} + \left( \bold 1_D^T \otimes \mathbb{E}[\vec{w} \vec{x}^T] \otimes \bold 1_D \right) \circ \left(\bold 1_D \otimes \mathbb{E}[\vec{w} \vec{x}^T] \otimes \bold 1_D^T \right)  \nonumber\\
    &\quad\quad -\mathbb{E}[\vec{w}] \otimes \mathbb{E}[\vec{w} \vec{x}^T] \otimes \mathbb{E}[\vec{x}^T] - \mathbb{E}[\vec{x}^T]\otimes \mathbb{E}[\vec{w} \vec{x}^T] \otimes \mathbb{E}[\vec{w}].
\end{align}

\section{Proof of Proposition \ref{var_cov}}
\label{appendix:var_cov}
Here, the covariance between two variance estimators \eqref{eq:varianceMC} is derived. For notation, let $f_i^{(s)} \equiv \vec{f}_i(\vec{n}^{(s)})$ and $f_j^{(u)} \equiv \vec{f}_j(\vec{m}^{(u)})$, and
\begin{align}
\scalemath{0.9}{
    \mathbb{C}ov[\vec{Q}_{i}(\set{N}), \vec{Q}_{j}(\set{M})]} &= \scalemath{0.9}{\mathbb{C}ov\left[\frac{1}{2N(N-1)}\sum^N_s\sum^N_t \left( f_i^{(s)}-f_i^{(t)} \right)^{\otimes 2}, \frac{1}{2M(M-1)}\sum^M_u\sum^M_v \left( f_j^{(u)}-f_j^{(v)} \right)^{\otimes 2}\right]} \nonumber\\
     &= \scalemath{0.9}{\frac{1}{4N(N-1)M(M-1)}\sum_s^N\sum_t^N\sum_u^M\sum_v^M \mathbb{C}ov\left[ \left( f_i^{(s)}-f_i^{(t)} \right)^{\otimes 2}, \left( f_j^{(u)}-f_j^{(v)} \right)^{\otimes 2}\right]}.
    \label{4sum}
\end{align}
The covariance in the quadruple sum is zero when the input samples are not shared between fidelities ($s\neq t \neq u \neq v$). Also, when $s=t$, $\left( f_i^{(s)}-f_i^{(s)} \right)^{\otimes 2}=0$ which means the covariance is similarly zero. The same is true when $u=v$. We can break the rest of the nonzero terms of the quadruple sum into 6 cases that each correspond to different combinations of shared inputs. Let $P=|\set{P}| = |\set{N}\cap \set{M}|$, $N=|\set{N}|$, and $M=|\set{M}|$. Thus
{\footnotesize
\reqnomode
\begin{align}
    \mathbb{C}ov&[\vec{Q}_{i}(\set{N}), \vec{Q}_{j}(\set{M})] = \frac{1}{4N(N-1)M(M-1)} \Bigg[ \nonumber\\
    &
    \sum_{s\in\set{P}}\sum_{t\in \set{N}\setminus \{s\}}\sum_{v\in \set{M}\setminus \{s,t\}} \mathbb{C}ov\left[ \left( f_i^{(s)}-f_i^{(t)} \right)^{\otimes 2}, \left( f_j^{(s)}-f_j^{(v)} \right)^{\otimes 2}\right] \quad\quad\textrm{when $s=u$ and $t \neq v$}\nonumber\tag{A}\\
    &\quad
    + \sum_{t\in\set{P}}\sum_{s\in \set{N}\setminus \{t\}}\sum_{u\in \set{M}\setminus \{s,t\}} \mathbb{C}ov\left[ \left( f_i^{(s)}-f_i^{(t)} \right)^{\otimes 2}, \left( f_j^{(u)}-f_j^{(t)} \right)^{\otimes 2}\right]  \quad\quad\textrm{when $s\neq u$ and $t=v$}\nonumber\tag{B}\\
    &\quad+ \sum_{s\in\set{P}}\sum_{t\in \set{N}\setminus \{s\}}\sum_{u\in \set{M}\setminus \{s,t\}} \mathbb{C}ov\left[ \left( f_i^{(s)}-f_i^{(t)} \right)^{\otimes 2}, \left( f_j^{(u)}-f_j^{(s)} \right)^{\otimes 2}\right] \quad\quad\textrm{when $s=v$ and $t\neq u$}\nonumber\tag{C}\\
    &\quad + \sum_{t\in\set{P}}\sum_{s\in \set{N}\setminus \{t\}}\sum_{v\in \set{M}\setminus \{s,t\}} \mathbb{C}ov\left[ \left( f_i^{(s)}-f_i^{(t)} \right)^{\otimes 2}, \left( f_j^{(t)}-f_j^{(v)} \right)^{\otimes 2}\right] \quad\quad\textrm{when $s\neq v$ and $t=u$}\nonumber\tag{D}\\ 
    &\quad + \sum_{s\in\set{P}}\sum_{t\in \set{P}\setminus \{s\}} \mathbb{C}ov\left[ \left( f_i^{(s)}-f_i^{(t)} \right)^{\otimes 2}, \left( f_j^{(s)}-f_j^{(t)} \right)^{\otimes 2}\right]\quad\quad\textrm{when $s=u$ and $t=v$}\nonumber\tag{E}\\
    &\quad
    + \sum_{s\in\set{P}}\sum_{t\in \set{P}\setminus \{s\}} \mathbb{C}ov\left[ \left( f_i^{(s)}-f_i^{(t)} \right)^{\otimes 2}, \left( f_j^{(t)}-f_j^{(s)} \right)^{\otimes 2}\right] \Bigg]\quad\quad\textrm{when $s=v$ and $t=u$}. \tag{F}
    \label{VarBreakdown}
\end{align}
}
\leqnomode
Note that we abuse notation in the sum notation, $s\in \set{P}$, such that $s$ is an integer representing the input sample in $\set{P}$. First, we simplify (A) by rewriting the covariance in terms of the underlying functions' ($f_i$ and $f_j$) statistics. We then note that the terms for (A) are the same as (B), (C), and (D) such that (A)=(B)=(C)=(D). Finally, we will simplify (E) and note that (E)=(F). First, consider the terms in (A) by expanding the Kronecker squares
\begin{align}
    \mathbb{C}ov\left[ \left( f_i^{(s)}-f_i^{(t)} \right)^{\otimes 2}, \left( f_j^{(s)}-f_j^{(v)} \right)^{\otimes 2}\right] = \mathbb{C}ov&\left[ f_i^{(s)\otimes 2}+f_i^{(t)\otimes 2} - (f_i^{(s)}\otimes f_i^{(t)}) - (f_i^{(t)}\otimes f_i^{(s)}), \right. \nonumber\\
    &\left. f_j^{(s)\otimes 2}+f_j^{(v)\otimes 2} - (f_j^{(s)}\otimes f_j^{(v)}) - (f_j^{(v)}\otimes f_j^{(s)}) \right].
    \label{eq:b2}
\end{align}
Since $f_i^{(t)}$ are $f_j^{(v)}$ are independent of each other while $f_i^{(s)}$ are $f_j^{(s)}$ are dependent, we use Appendix \ref{SubApp} to introduce expectations. Further, the input samples are i.i.d. such that the expectations of $f_i^{(s)}$ can be rewritten as expectations of the underlying function $f_i$. Finally, we factor the equation into a simplified form
\begin{align}
    \eqref{eq:b2} &= 
    \mathbb{C}ov\left[ f_i^{(s) \otimes 2} + \mathbb{E}[f_i^{(t)}]^{\otimes 2}- (f_i^{(s)}\otimes \mathbb{E}[f_i^{(t)}]) - ( \mathbb{E}[f_i^{(t)}]\otimes f_i^{(s)}), \right. \nonumber\\
    &\quad\quad\quad\quad \left. f_j^{(s)\otimes 2} + \mathbb{E}[f_j^{(v)}]^{\otimes 2} - (f_j^{(s)}\otimes  \mathbb{E}[f_j^{(v)}]) - ( \mathbb{E}[f_j^{(v)}]\otimes j^{(s)}) \right] \nonumber\\
    &= \scalemath{0.9}{\mathbb{C}ov\left[ f_i^{\otimes 2} + \mathbb{E}[f_i]^{\otimes 2}- (f_i\otimes \mathbb{E}[f_i]) - ( \mathbb{E}[f_i]\otimes f_i), f_j^{\otimes 2} + \mathbb{E}[f_j]^{\otimes 2} - (f_j\otimes  \mathbb{E}[f_j]) - ( \mathbb{E}[f_j]\otimes j) \right]} \nonumber\\
    &= \mathbb{C}ov\left[ (f_i - \mathbb{E}[f_i])^{\otimes 2}, (f_j - \mathbb{E}[f_j])^{\otimes 2} \right]. \nonumber \\
    &
    \label{eq:xy}
\end{align}
The same argument holds for (B), (C), and (D) simply by recognizing that $(x-y)^2 = (y-x)^2$ $\forall x,y \in \reals^D $ and noting that the triple sums are equivalent. Thus, $(\textrm{A})=(\textrm{B})=(\textrm{C})=(\textrm{D})$, and 
\begin{align}
    &(\textrm{A}) + (\textrm{B}) + (\textrm{C}) + (\textrm{D}) = 4\sum_{s\in\set{P}}\sum_{t\in \set{N}\setminus \{s\}}\sum_{v\in \set{M}\setminus \{s,t\}} \mathbb{C}ov\left[ (f_i - \mathbb{E}[f_i])^{\otimes 2}, (f_j - \mathbb{E}[f_j])^{\otimes 2} \right]. 
    \label{ABCD}
\end{align}
Because the covariance terms are not dependent on the specific samples, we only need to tabulate how many times the covariance arises in the sum. To this end, note that the set $\{s,t,v ; s\in \set{P}, t\in \set{N}\setminus \{s\}, v\in \set{M}\setminus \{ s,t \} \}$ can be partitioned as a disjoint union of four sets, $\{ s,t,v ; s\in \set{P}, t\in \set{P}\setminus \{s\}, v\in \set{P}\setminus \{ s,t \} \}$, $\{ s,t,v ; s\in \set{P}, t\in \set{P}\setminus \{s\}, v\in \set{M}\setminus \set{P} \}$, $\{ s,t,v ; s\in \set{P}, t\in \set{N}\setminus \set{P}, v\in \set{P}\setminus \{ s,t \} \}$, and $\{ s,t,v ; s\in \set{P}, t\in \set{N}\setminus \set{P}, v\in \set{M}\setminus \set{P} \}$. The cardinality of these sets are $P(P-1)(P-2)$, $P(P-1)(M-P)$, $P(N-P)(P-1)$, and $P(N-P)(M-P)$ respectively. By adding the cardinalities together,
\begin{align}
    &(\textrm{A}) + (\textrm{B}) + (\textrm{C}) + (\textrm{D}) = 4P\left[ (N-1)(M-1) - (P-1) \right] \mathbb{C}ov\left[ (f_i - \mathbb{E}[f_i])^{\otimes 2}, (f_j - \mathbb{E}[f_j])^{\otimes 2} \right]. 
\end{align}
Now, we consider the covariance in term (E) by following the same procedure
\reqnomode
\begin{align}
    \mathbb{C}ov&\left[ \left( f_i^{(s)}-f_i^{(t)} \right)^{\otimes 2}, \left( f_j^{(s)}-f_j^{(t)} \right)^{\otimes 2}\right] \nonumber\\
    &= \scalemath{0.9}{\mathbb{C}ov\left[ f_i^{(s)\otimes 2}+f_i^{(t)\otimes 2} - f_i^{(s)}\otimes f_i^{(t)} - f_i^{(t)}\otimes f_i^{(s)}, f_j^{(s)\otimes 2}+f_j^{(t)\otimes 2} - f_j^{(s)}\otimes f_j^{(t)} -f_j^{(t)} \otimes f_j^{(s)} \right]} \nonumber\\
\end{align}
\begin{align}
    &= 
    \mathbb{C}ov\left[ f_i^{(s)\otimes 2}, f_j^{(s)\otimes 2}+f_j^{(t)\otimes 2} - f_j^{(s)}\otimes f_j^{(t)} -f_j^{(t)} \otimes f_j^{(s)} \right] \nonumber \tag{G}\\
    &\quad + 
    \mathbb{C}ov\left[ f_i^{(t)\otimes 2}, f_j^{(s)\otimes 2}+f_j^{(t)\otimes 2} - f_j^{(s)}\otimes f_j^{(t)} -f_j^{(t)} \otimes f_j^{(s)} \right] \nonumber \tag{H}\\
    &\quad\quad -
    \mathbb{C}ov\left[f_i^{(s)}\otimes f_i^{(t)}, f_j^{(s)\otimes 2}+f_j^{(t)\otimes 2} - f_j^{(s)}\otimes f_j^{(t)} -f_j^{(t)} \otimes f_j^{(s)} \right] \nonumber \tag{I}\\ 
    &\quad\quad\quad - 
    \mathbb{C}ov\left[f_i^{(t)}\otimes f_i^{(s)}, f_j^{(s)\otimes 2}+f_j^{(t)\otimes 2} - f_j^{(s)}\otimes f_j^{(t)} -f_j^{(t)} \otimes f_j^{(s)} \right]. \nonumber \tag{J}
\end{align}
\leqnomode
Again, note (G) and (H) are equivalent due to symmetry. Using Appendix \ref{SubApp}, we combine (G) and (H) using a similar process as \eqref{eq:xy}
\begin{align}
    (\textrm{G}) + (\textrm{H}) &= 2 \mathbb{C}ov\left[ f_i^{\otimes 2}, f_j^{\otimes 2} - f_j\otimes \mathbb{E}[f_j] -\mathbb{E}[f_j] \otimes f_j \right] = 2 \mathbb{C}ov\left[ f_i^{\otimes 2}, \left(f_j -\mathbb{E}[f_j] \right)^{\otimes 2} \right].
\end{align}
Now, consider (I) and (J) and note that the second covariance inputs are identical. We now combine (I) and (J)
\reqnomode
\begin{align}
    (\textrm{I}) &+ (\textrm{J}) = -
    \mathbb{C}ov\left[f_i^{(s)}\otimes f_i^{(t)} + f_i^{(t)}\otimes f_i^{(s)}, f_j^{(s)\otimes 2}+f_j^{(t)\otimes 2} - f_j^{(s)}\otimes f_j^{(t)} -f_j^{(t)} \otimes f_j^{(s)} \right] \nonumber\\ 
    &= -   \mathbb{C}ov\left[f_i^{(s)}\otimes f_i^{(t)} + f_i^{(t)}\otimes f_i^{(s)}, f_j^{(s)\otimes 2}\right] -
    \mathbb{C}ov\left[f_i^{(s)}\otimes f_i^{(t)} + f_i^{(t)}\otimes f_i^{(s)}, f_j^{(t)\otimes 2} \right] \nonumber \tag{K}\\
    &\quad +
    \mathbb{C}ov\left[f_i^{(s)}\otimes f_i^{(t)} + f_i^{(t)}\otimes f_i^{(s)}, f_j^{(s)}\otimes f_j^{(t)} \right] \nonumber\tag{L}\\
    &\quad\quad +
    \mathbb{C}ov\left[f_i^{(s)}\otimes f_i^{(t)} + f_i^{(t)}\otimes f_i^{(s)}, f_j^{(t)}\otimes f_j^{(s)} \right] \nonumber\tag{M}
\end{align}
\leqnomode
by separating the second covariance inputs. Since $s\neq t$, we combine the terms in (K) by using Appendix \ref{SubApp}, $(\textrm{K}) = - 2 \mathbb{C}ov\left[f_i\otimes \mathbb{E}[f_i] + \mathbb{E}[f_i]\otimes f_i, f_j^{\otimes 2}\right].$ Now, we consider (L) by breaking the covariance into expectations
\begin{align}
    (\textrm{L}) &= \scalemath{0.9}{\ee{\left( f_i^{(s)}\otimes f_i^{(t)} + f_i^{(t)}\otimes f_i^{(s)} \right)\left( f_j^{(s)}\otimes f_j^{(t)} \right)^T} - \ee{\left( f_i^{(s)}\otimes f_i^{(t)} + f_i^{(t)}\otimes f_i^{(s)} \right)}\ee{\left( f_j^{(s)}\otimes f_j^{(t)} \right)^T}} \nonumber \\
    &= \ee{\left( f_i^{(s)}\otimes f_i^{(t)}\right)\left( f_j^{(s)}\otimes f_j^{(t)} \right)^T} + \ee{ \left(f_i^{(t)}\otimes f_i^{(s)} \right)\left( f_j^{(s)}\otimes f_j^{(t)} \right)^T} \nonumber \\
    &\quad - \ee{\left( f_i^{(s)}\otimes f_i^{(t)}\right)}\ee{\left( f_j^{(s)}\otimes f_j^{(t)} \right)^T} - \ee{\left(f_i^{(t)}\otimes f_i^{(s)} \right)}\ee{\left( f_j^{(s)}\otimes f_j^{(t)} \right)^T} \nonumber \\
    &= \ee{\left( f_i^{(s)}f_j^{(s)T}\otimes f_i^{(t)}f_j^{(t)T}\right)}+ \ee{ \left(f_i^{(t)}f_j^{(s)T}\otimes f_i^{(s)}f_j^{(t)T} \right)} \nonumber \\
    &\quad - \left(\ee{ f_i} \otimes \ee{f_i}\right) \left( \ee{f_j^T}\otimes \ee{f_j^T} \right) - \left(\ee{f_i}\otimes \ee{f_i} \right) \left( \ee{f_j^T}\otimes \ee{f_j^T} \right) \nonumber \\
    &= \ee{f_if_j^T}\otimes\ee{ f_if_j^T}+ \ee{ \left(f_i^{(t)}f_j^{(s)T}\otimes f_i^{(s)}f_j^{(t)T} \right)} - 2\ee{ f_i}^{\otimes 2} \ee{f_j^T}^{\otimes 2} \nonumber \\
    &= \ee{f_if_j^T}^{\otimes 2} - 2\left(\ee{ f_i}\ee{ f_j}\right)^{\otimes 2} + \ee{ \left(f_i^{(t)}f_j^{(s)T}\otimes f_i^{(s)}f_j^{(t)T} \right)} . \nonumber \\
    &
\end{align}
The argument for (M) is identical with the following result
\begin{align}
    (\textrm{M}) = \ee{f_if_j^T}^{\otimes 2} - 2\left(\ee{ f_i}\ee{ f_j}\right)^{\otimes 2} + \ee{ \left(f_i^{(s)}f_j^{(t)T}\otimes f_i^{(t)}f_j^{(s)T} \right)} .
\end{align}
Now, we find $\ee{ \left(f_i^{(t)}f_j^{(s)T}\otimes f_i^{(s)}f_j^{(t)T} \right)}$ in order to remove the dependence on $s$ and $t$
\begin{align}
    \mathbb{E}[f_i^{(t)} f_j^{(s)T} \otimes f_i^{(s)} f_j^{(t)T}] &= \mathbb{E}[(f_i^{(t)} \bold 1_D^T \circ \bold 1_D f_j^{(s)T}) \otimes (f_i^{(s)}\bold 1_D^T \circ \bold 1_D f_j^{(t)T})] \nonumber\\
    &= \mathbb{E}[(f_i^{(t)} \bold 1_D^T \otimes \bold 1_D f_j^{(t)T}) \circ (\bold 1_D f_j^{(s)T} \otimes f_i^{(s)}\bold 1_D^T) )]\nonumber\\
    &= \mathbb{E}[f_i \bold 1_D^T \otimes \bold 1_D f_j^T] \circ \mathbb{E}[\bold 1_D f_j^T \otimes f_i\bold 1_D^T] \nonumber\\
    &= \left( \bold 1_D^T \otimes \mathbb{E}[f_i f_j^T] \otimes \bold 1_D \right) \circ \left(\bold 1_D \otimes \mathbb{E}[f_if_j^T] \otimes \bold 1_D^T \right)
\end{align}
using the Kronecker and Hadamard mixed-product properties. The argument for \\
$\ee{ \left(f_i^{(s)}f_j^{(t)T}\otimes f_i^{(t)}f_j^{(s)T} \right)}$ is identical such that (L)=(M).

Now, we rewrite (K) by adding and subtracting an additional term to facilitate simplification in later steps
\begin{align}
    (\textrm{K}) &= 2 \mathbb{C}ov\left[-f_i\otimes \mathbb{E}[f_i] - \mathbb{E}[f_i]\otimes f_i, f_j^{\otimes 2}\right] \nonumber\\
    &= 2 \mathbb{C}ov\left[-f_i\otimes \mathbb{E}[f_i] - \mathbb{E}[f_i]\otimes f_i, f_j^{\otimes 2}-f_j\otimes \mathbb{E}[f_j] - \mathbb{E}[f_j]\otimes f_j\right] \nonumber \\
    &\quad - 2 \mathbb{C}ov\left[-f_i\otimes \mathbb{E}[f_i] - \mathbb{E}[f_i]\otimes f_i, -f_j\otimes \mathbb{E}[f_j] - \mathbb{E}[f_j]\otimes f_j\right].
\end{align}
Adding (K) to (G) and (H) yields
\reqnomode
\begin{align}
    (\textrm{G})+(\textrm{H})+(\textrm{K}) &= 
    2 \mathbb{C}ov\left[ \left(f_i -\mathbb{E}[f_i] \right)^{\otimes 2}, \left(f_j -\mathbb{E}[f_j] \right)^{\otimes 2} \right]  \nonumber \tag{N}\\
    &\quad - 2 \mathbb{C}ov\left[-f_i\otimes \mathbb{E}[f_i] - \mathbb{E}[f_i]\otimes f_i, -f_j\otimes \mathbb{E}[f_j] - \mathbb{E}[f_j]\otimes f_j\right]. \nonumber \tag{O}
\end{align}
\leqnomode
Now, we break (O) into expectations
\begin{align}
    (\textrm{O}) &= - 2 \mathbb{C}ov\left[f_i\otimes \mathbb{E}[f_i] + \mathbb{E}[f_i]\otimes f_i, f_j\otimes \mathbb{E}[f_j] + \mathbb{E}[f_j]\otimes f_j\right] \nonumber \\
    &= -2 \left[ \cov{f_i\otimes \mathbb{E}[f_i], f_j\otimes \mathbb{E}[f_j]} + \cov{\mathbb{E}[f_i]\otimes f_i, f_j\otimes \mathbb{E}[f_j]} \right. \nonumber \\
    &\quad\quad \left.+ \cov{f_i\otimes \mathbb{E}[f_i], \mathbb{E}[f_j]\otimes f_j} + \cov{\mathbb{E}[f_i]\otimes f_i, \mathbb{E}[f_j]\otimes f_j} \right] \nonumber\\
    &= -2\left[ \ee{(f_i\otimes \mathbb{E}[f_i])( f_j\otimes \mathbb{E}[f_j])^T} - \ee{f_i\otimes \mathbb{E}[f_i]}\ee{f_j\otimes \mathbb{E}[f_j]}^T \right.\nonumber\\
    &\quad\quad + \ee{(\mathbb{E}[f_i]\otimes f_i)( f_j\otimes \mathbb{E}[f_j])^T} - \ee{\mathbb{E}[f_i]\otimes f_i}\ee{f_j\otimes \mathbb{E}[f_j]}^T \nonumber \\
    &\quad\quad\quad + \ee{(f_i\otimes \mathbb{E}[f_i])(\mathbb{E}[f_j]\otimes f_j)^T} - \ee{f_i\otimes \mathbb{E}[f_i]}\ee{\mathbb{E}[f_j]\otimes f_j}^T \nonumber \\
    &\quad\quad\quad\quad \left.+ \ee{(\mathbb{E}[f_i]\otimes f_i)(\mathbb{E}[f_j]\otimes f_j)^T} - \ee{\mathbb{E}[f_i]\otimes f_i}\ee{\mathbb{E}[f_j]\otimes f_j}^T \right] \nonumber \\
    &= -2\left[ \ee{f_i f_j^T\otimes \mathbb{E}[f_i]\mathbb{E}[f_j]^T}  + \ee{\mathbb{E}[f_i]f_j^T\otimes f_i\mathbb{E}[f_j^T]}  \right.\nonumber \\
    &\quad\quad \left. + \ee{f_i\mathbb{E}[f_j^T]\otimes \mathbb{E}[f_i]f_j^T} + \ee{\mathbb{E}[f_i]\mathbb{E}[f_j^T]\otimes f_i f_j^T} \right] + 8\left(\ee{f_i}\ee{f_j^T}\right)^{\otimes 2} \nonumber\\
    &= -2\ee{f_i f_j^T}\otimes \mathbb{E}[f_i]\mathbb{E}[f_j]^T  -2 \mathbb{E}[f_i] \otimes \ee{f_i f_j^T} \otimes \mathbb{E}[f_j^T]  \nonumber \\
    &\quad\quad -2 \ee{f_j^T} \otimes \mathbb{E}[f_i f_j^T]\otimes \ee{f_i} -2 \mathbb{E}[f_i]\mathbb{E}[f_j^T]\otimes \ee{f_i f_j^T}  + 8\left(\ee{f_i}\ee{f_j^T}\right)^{\otimes 2}.
\end{align}
Now, we combine (O), (L), and (M), and use the results from Appendix \ref{SubApp}
\begin{align}
    (\textrm{O}) + (\textrm{L}) + (\textrm{M}) &= -2\ee{f_i f_j^T}\otimes \mathbb{E}[f_i]\mathbb{E}[f_j]^T  -2 \mathbb{E}[f_i] \otimes \ee{f_i f_j^T} \otimes \mathbb{E}[f_j^T]  \nonumber \\
    &\quad \scalemath{0.9}{-2 \ee{f_j^T} \otimes \mathbb{E}[f_i f_j^T]\otimes \ee{f_i} -2 \mathbb{E}[f_i]\mathbb{E}[f_j^T]\otimes \ee{f_i f_j^T}  + 8\left(\ee{f_i}\ee{f_j^T}\right)^{\otimes 2}} \nonumber \\
    &\quad  \scalemath{0.9}{+ 2\ee{f_if_j^T}^{\otimes 2} - 4\left(\ee{ f_i}\ee{ f_j}\right)^{\otimes 2} + 2\left( \bold 1_D^T \otimes \mathbb{E}[f_i f_j^T] \otimes \bold 1_D \right) \circ \left(\bold 1_D \otimes \mathbb{E}[f_if_j^T] \otimes \bold 1_D^T \right)} \nonumber
\end{align}
\begin{align}
    &= 2\cov{f_i,f_j}^{\otimes 2}   -2 \mathbb{E}[f_i] \otimes \ee{f_i f_j^T} \otimes \mathbb{E}[f_j^T]  -2 \ee{f_j^T} \otimes \mathbb{E}[f_i f_j^T]\otimes \ee{f_i} \nonumber \\
    &\quad + 2\left(\ee{f_i}\ee{f_j^T}\right)^{\otimes 2}  + 2\left( \bold 1_D^T \otimes \mathbb{E}[f_i f_j^T] \otimes \bold 1_D \right) \circ \left(\bold 1_D \otimes \mathbb{E}[f_if_j^T] \otimes \bold 1_D^T \right) \nonumber \\
    &=2\cov{f_i,f_j}^{\otimes 2} + 2 \left( \bold 1_D^T \otimes \mathbb{C}ov[f_i, f_j] \otimes \bold 1_D \right) \circ \left(\bold 1_D \otimes \mathbb{C}ov[f_i,f_j] \otimes \bold 1_D^T \right). 
\end{align}
Therefore, by combining (N), (O), (L), and (M), we obtain a final expression for (E)
\begin{align}
     (E) &= 
      \sum_{s\in\set{P}}\sum_{t\in \set{P}\setminus \{s\}} \left( 2 \mathbb{C}ov\left[ \left(f_i -\mathbb{E}[f_i] \right)^{\otimes 2}, \left(f_j -\mathbb{E}[f_j] \right)^{\otimes 2} \right] + 2\cov{f_i,f_j}^{\otimes 2} \right.\nonumber\\
     &\quad\quad\quad\quad\quad \left. + 2 \left( \bold 1_D^T \otimes \mathbb{C}ov[f_i, f_j] \otimes \bold 1_D \right) \circ \left(\bold 1_D \otimes \mathbb{C}ov[f_i,f_j] \otimes \bold 1_D^T \right) \right). \nonumber\\
     &
\end{align}
The argument for the covariance term in (F) is identical, and their sums are equivalent such that $(E) = (F)$. Now, we only need to tabulate how many times the covariance arises in the sum since the covariance is independent of specific samples. Thus, the set $\{ s, t; s\in \set{P}, t \in \set{P}\setminus\{s\}\}$ has the cardinality of $P(P-1)$. Therefore,
\begin{align}
    \mathbb{C}ov&[\vec{Q}_{i}(\set{N}), \vec{Q}_{j}(\set{M})] = \frac{1}{4N(N-1)M(M-1)} [ \nonumber\\
    &
    4P\left[ (N-1)(M-1) - (P-1) \right] \mathbb{C}ov\left[ (i - \mathbb{E}[f_i])^{\otimes 2}, (j - \mathbb{E}[f_j])^{\otimes 2} \right] \nonumber\\
    &\quad + 2P(P-1) \left[
    2 \mathbb{C}ov\left[ (i-\mathbb{E}[f_i])^{\otimes 2}, \left(j -\mathbb{E}[f_j] \right)^{\otimes 2} \right] + 2 \mathbb{C}ov[i,j]^{\otimes 2} \right. \nonumber\\
    &\quad\quad \left.\left. + 2\left( \bold 1^T \otimes \mathbb{C}ov[i, j] \otimes \bold 1 \right) \circ \left(\bold 1 \otimes \mathbb{C}ov[i,j] \otimes \bold 1^T \right)\right]\right].
\end{align}
Combining coefficients yields our stated result
\begin{align}
    &\mathbb{C}ov[\vec{Q}_{i}(\set{N}), \vec{Q}_{j}(\set{M})] = \nonumber\\
    &\scalemath{0.9}{
     \frac{1}{4N(N-1)M(M-1)} \left[4P(N-1)(M-1) - 4P(P-1)\right] \mathbb{C}ov\left[\left(f_i-\mathbb{E}[f_i]\right)^{\otimes 2},(f_j-\mathbb{E}[f_j])^{\otimes 2}\right]} \nonumber\\
     &\quad \scalemath{0.9}{+ \frac{2P(P-1)}{4N(N-1)M(M-1)} \left[ 2 \mathbb{C}ov\left[ (f_i-\mathbb{E}[f_i])^{\otimes 2}, \left(f_j -\mathbb{E}[f_j] \right)^{\otimes 2} \right] + 2 \mathbb{C}ov[f_i,f_j]^{\otimes 2}\right.} \nonumber\\
    &\quad\quad \scalemath{0.9}{\left.+ 2\left( \bold 1^T \otimes \mathbb{C}ov[f_i, f_j] \otimes \bold 1 \right) \circ \left(\bold 1 \otimes \mathbb{C}ov[f_i,f_j] \otimes \bold 1^T \right) \right]} \nonumber\\
    &= 
    \scalemath{0.9}{\frac{P(P-1)}{N(N-1)M(M-1)}\left[ \mathbb{C}ov[f_i,f_j]^{\otimes 2}  + \left( \bold 1^T \otimes \mathbb{C}ov[f_i, f_j] \otimes \bold 1 \right) \circ \left(\bold 1 \otimes \mathbb{C}ov[f_i,f_j] \otimes \bold 1^T \right)\right]} \nonumber\\
    &\quad \scalemath{0.9}{+ \frac{P}{NM}\mathbb{C}ov\left[\left(f_i-\mathbb{E}[f_i]\right)^{\otimes 2},(f_j-\mathbb{E}[f_j])^{\otimes 2}\right]} \nonumber\\
    &= \frac{P(P-1)}{N(N-1)M(M-1)} \bold V_{ij} + \frac{P}{NM} \mat{W}_{ij}.
    \label{eq:VarMCVareq}
\end{align}

\section{Proof of Proposition \ref{Var_delt} and \ref{Var_high}}
Now, Equation \eqref{eq:VarMCVareq} is used to find the discrepancy covariances with different input samples
\begin{align}
        \scalemath{0.8}{\mathbb{C}ov[\bold \Delta_i\Ra{(\un{\set{Z}}_i)}, \bold \Delta_j\Ra{(\un{\set{Z}}_j)}]} &= 
        \scalemath{0.8}{\mathbb{C}ov[\bold Q_i(\set{Z}_i^*) ,\bold Q_j(\set{Z}_j^*)] -\mathbb{C}ov[\bold Q_i(\set{Z}_i^*) ,\bold Q_j(\set{Z}_j)] -\mathbb{C}ov[\bold Q_i(\set{Z}_i) ,\bold Q_j(\set{Z}_j^*)] + \mathbb{C}ov[\bold Q_i(\set{Z}_i) ,\bold Q_j(\set{Z}_j)]}
\end{align}
\begin{align}
    &= \left[ \frac{\Ncomb{i}{*}{j}{*}(\Ncomb{i}{*}{j}{*}-1)}{\Nsing{i}{*}(\Nsing{i}{*}-1)\Nsing{j}{*}(\Nsing{j}{*}-1)} - \frac{\Ncomb{i}{*}{j}{}(\Ncomb{i}{*}{j}{}-1)}{\Nsing{i}{*}(\Nsing{i}{*}-1)\Nsing{j}{}(\Nsing{j}{}-1)} \right.  \nonumber\\
    &\quad \left. - \frac{\Ncomb{i}{}{j}{*}(\Ncomb{i}{}{j}{*}-1)}{\Nsing{i}{}(\Nsing{i}{}-1)\Nsing{j}{*}(\Nsing{j}{*}-1)}  + \frac{\Ncomb{i}{}{j}{}(\Ncomb{i}{}{j}{}-1)}{\Nsing{i}{}(\Nsing{i}{}-1)\Nsing{j}{}(\Nsing{j}{}-1)}\right] \mat{V}_{ij}  \nonumber\\
    &\quad\quad + \left[  \frac{\Ncomb{i}{*}{j}{*}}{\Nsing{i}{*}\Nsing{j}{*}} - \frac{\Ncomb{i}{*}{j}{}}{\Nsing{i}{*}\Nsing{j}{}} - \frac{\Ncomb{i}{}{j}{*}}{\Nsing{i}{}\Nsing{j}{*}} + \frac{\Ncomb{i}{}{j}{}}{\Nsing{i}{}\Nsing{j}{}} \right]\mat{W}_{ij}.
\end{align}

Similarly to above, Equation \eqref{eq:VarMCVareq} can be used with different input samples to find the covariance between the high-fidelity estimator and the discrepancy.

\section{Proof of Proposition \ref{MeanCov}, \ref{MeanCovVar}, and \ref{MeanVarHighLow}}
We now find the covariance between the mean and variance estimators
\begin{align}
    \mathbb{C}ov[\vec{Q}_{\mu,i}(\set{N}), \vec{Q}_{V,j}(\set{M})] &= \mathbb{C}ov\left[\frac{1}{N}\sum_s^N f_i^{(s)}, \frac{1}{2M(M-1)}\sum_u^M \sum_v^M ( f_j^{(u)} - f_j^{(v)} )^{\otimes 2}\right] \nonumber\\
    &= \frac{1}{2M(M-1)N} \sum_s^N \sum_u^M \sum_v^M \mathbb{C}ov[f_i^{(s)}, ( f_j^{(u)} - f_j^{(v)} )^{\otimes 2}].
    \label{eq:MVcov}
\end{align}
We consider the 2 cases of shared input samples that relate to nonzero covariance terms
\reqnomode
\begin{align}
    &\eqref{eq:MVcov} = \frac{1}{2M(M-1)N} \sum_{s\in \set{P}} \sum_{v\in \set{M}\setminus\{s \}} \mathbb{C}ov[f_i^{(s)}, ( f_j^{(s)} - f_j^{(v)} )^{\otimes 2}] \quad \textrm{when $s=u$ and $s\neq v$} \nonumber \tag{A}\\
    &\quad + \frac{1}{2M(M-1)N} \sum_{s\in \set{P}} \sum_{u\in \set{M}\setminus\{s \}} \mathbb{C}ov[f_i^{(s)}, ( f_j^{(u)} - f_j^{(s)} )^{\otimes 2}] \quad \textrm{when $s\neq u$ and $s= v$}. \tag{B} \nonumber
\end{align}
\leqnomode
Note that (A) = (B) by the same argument from below Equation \eqref{eq:xy}. Thus, consider (A)
\begin{align}
    \mathbb{C}ov[f_i^{(s)}, ( f_j^{(s)} - f_j^{(v)} )^{\otimes 2}] &= \mathbb{C}ov[f_i^{(s)}, f_j^{(s)\otimes 2} - f_j^{(s)} \otimes f_j^{(v)} - f_j^{(v)} \otimes f_j^{(s)} + f_j^{(v)\otimes 2}] \nonumber\\
    &= \mathbb{C}ov[f_i, f_j^{\otimes 2} - f_j \otimes \mathbb{E}[f_j] - \mathbb{E}[f_j] \otimes f_j + \mathbb{E}[f_j]^{\otimes 2}] \nonumber\\
    &= \mathbb{C}ov[f_i, (f_j - \mathbb{E}[f_j])^{\otimes 2}]
\end{align}
using the results of Appendix \ref{SubApp} and the fact that the input samples are i.i.d. where the statistics can be rewritten in terms of the underlying functions' statistics. The cardinality of the set $\{ s,v; s\in \set{P}, v \in \set{M}\setminus\{ s \} \}$ is $P(M-1)$. Thus,
\begin{align}
    \mathbb{C}ov[\vec{Q}_{\mu,i}(\set{N}), \vec{Q}_{V,j}(\set{M})] &=  \frac{1}{2M(M-1)N}2P(M-1)\mathbb{C}ov[i, (j - \mathbb{E}[f_j])^{\otimes 2}] \\
    &= \frac{P}{MN}\mathbb{C}ov[i, (j - \mathbb{E}[f_j])^{\otimes 2}] = \frac{P}{MN} \mat{B}_{ij}.
\end{align}
Propositions \ref{MeanCovVar} and \ref{MeanVarHighLow} follow by using the above equation with different sets of input samples. 

\section{Proof of Proposition \ref{SobolEstCov2}, \ref{MultiSobolVar}, and \ref{MultiSobolHighLow}}
\label{appendix:SobolEstCov}
We now find the covariance between two main effect variance estimators. Let $f_i^{(a)} \equiv f_i(\vec{z}^{(a)})$ and $f_{i,x}^{(a)} \equiv f_i(\vec{y}^{(a)}_{x})$ that follow the same sample partitioning described in Section \ref{sec:MCests}. Now, 
\begin{align}
    \mathbb{C}ov[\bold Q_{i,x}(\set{N}),
    \bold Q_{j,y}(\set{M})] &= \scalemath{0.9}{\mathbb{C}ov\left[\frac{1}{N^2}\sum_a^N\sum_b^N \left[f_i^{(a)} f_{i,x}^{(a)} - f_i^{(a)} f_i^{(b)} \right], \frac{1}{M^2}\sum_c^M\sum_d^M \left[f_j^{(c)} f_{j,y}^{(c)} - f_j^{(c)} f_j^{(d)} \right]\right]} \nonumber\\
    &= \frac{1}{N^2 M^2}\sum_a^N\sum_b^N\sum_c^M\sum_d^M\mathbb{C}ov\left[ f_i^{(a)} f_{i,x}^{(a)} - f_i^{(a)} f_i^{(b)} ,  f_j^{(c)} f_{j,y}^{(c)} - f_j^{(c)} f_j^{(d)}\right].
\end{align}
There are 11 cases of nonzero covariance terms which are found through combinations of sharing input samples. We start with the case when $a=c \neq b\neq d$. Using Appendix \ref{SubApp} and the fact that the samples are i.i.d. such that the statistics can be written in terms of the underlying functions' statistics
\begin{align}
    \mathbb{C}ov\left[ f_i^{(a)} f_{i,x}^{(a)} - f_i^{(a)} f_i^{(b)} ,  f_j^{(a)} f_{j,y}^{(a)} - f_j^{(a)} f_j^{(d)}\right] &= \mathbb{C}ov[f_i f_{i,x} - f_i \mathbb{E}[f_i] ,  f_j f_{j,y} - f_j \mathbb{E}[f_j]].
\end{align}
This process is repeated for all combinations of sets of shared input samples. The results can be seen in Table \ref{tab:SAcov},

\begin{table}[h]
    \begin{center}
    \caption{Covariances resulting from combinations of shared samples}
    \label{tab:SAcov}
    \begin{tabular}{|c | c | c |}
    \hline
    Shared Samples & Covariance & Occurrence Frequency \\
    \hline
         $a=c \neq b\neq d$ & $\mathbb{C}ov[f_i f_{i,x} - f_i \mathbb{E}[f_i] ,  f_j f_{j,y} - f_j \mathbb{E}[f_j]]$ & $P(N-1)(M-1) - P(P-1)$\\
         $a=d \neq b\neq c$ & $\mathbb{C}ov[f_i f_{i,x} - f_i \mathbb{E}[f_i] , - f_j \mathbb{E}[f_j]]$ & $P(N-1)(M-1) - P(P-1)$\\
         $a\neq d \neq b = c$ & $\mathbb{C}ov[- f_i \mathbb{E}[f_i] ,  f_j f_{j,y} - f_j \mathbb{E}[f_j]]$ & $P(N-1)(M-1) - P(P-1)$\\
         $a\neq c\neq b= d$ & $\mathbb{C}ov[ - f_i \mathbb{E}[f_i] ,  - f_j \mathbb{E}[f_j]]$ & $P(N-1)(M-1) - P(P-1)$\\
         $a=b=c\neq d$ & $\mathbb{C}ov[f_i f_{i,x} - f_i^2 ,  f_j f_{j,y} - f_j \mathbb{E}[f_j]]$ & $P(M-1)$\\
         $a=c=d\neq b$  & $\mathbb{C}ov[f_i f_{i,x} - f_i \mathbb{E}[f_i] ,  f_j f_{j,y} - f_j^2]$ & $P(N-1)$\\
         $a\neq b=c=d$  & $\mathbb{C}ov[ - f_i \mathbb{E}[f_i] ,  f_j f_{j,y} - f_j^2]$ & $P(N-1)$\\
         $a=b=d\neq c$  & $\mathbb{C}ov[f_i f_{i,x} - f_i^2,  - f_j \mathbb{E}[f_j]]$ & $P(M-1)$\\
         $a=c\neq b=d$ & Eq. \eqref{eq:ac} & $P(P-1)$\\
         $a=d\neq b=c$ & Eq. \eqref{eq:ad} & $P(P-1)$\\
         $a=b=c=d$  & $\mathbb{C}ov[f_i f_{i,x} - f_i^2 ,  f_j f_{j,y} - f_j^2]$ & $P$\\
         \hline
    \end{tabular}
\end{center}
\end{table}

The derivations for $a=c\neq b=d$ and $a=d\neq b=c$ are not as straightforward as the other results in the table. Thus, consider $a=c\neq b=d$
\begin{align}
    \mathbb{C}ov\left[ f_i^{(a)} f_{i,x}^{(a)} - f_i^{(a)} f_i^{(b)} \right.& \left., f_j^{(a)} f_{j,y}^{(a)} - f_j^{(a)} f_j^{(b)}\right] =  \nonumber\\
    &\quad \scalemath{0.9}{\mathbb{E}[f_i^{(a)} f_{i,x}^{(a)} f_j^{(a)} f_{j,y}^{(a)} - f_i^{(a)} f_{i,x}^{(a)} f_j^{(a)} f_j^{(b)} - f_i^{(a)} f_i^{(b)} f_j^{(a)} f_{j,y}^{(a)} + f_i^{(a)} f_i^{(b)} f_j^{(a)} f_j^{(b)}]}  \nonumber\\
    &\quad \scalemath{0.9}{-\mathbb{E}[f_i^{(a)} f_{i,x}^{(a)} - f_i^{(a)} f_i^{(b)}]\mathbb{E}[f_j^{(a)} f_{j,y}^{(a)} - f_j^{(a)} f_j^{(b)}]} \nonumber\\
    &= \scalemath{0.9}{\mathbb{E}[f_i f_{i,x} f_j f_{j,y}] - \mathbb{E}[f_i f_{i,x} f_j]\mathbb{E}[f_j] - \mathbb{E}[f_i f_j f_{j,y}]\mathbb{E}[f_i] - \mathbb{E}[f_i f_{i,x}]\mathbb{E}[f_j f_{j,y}]}  \nonumber\\
    &\quad \scalemath{0.9}{+ \mathbb{E}[f_i f_{i,x}]\mathbb{E}[f_j]^2 + \mathbb{E}[f_i]^2\mathbb{E}[f_j f_{j,y}] + \mathbb{E}[f_i f_j]^2 - \mathbb{E}[f_i]^2\mathbb{E}[f_j]^2} \nonumber \\
    &= \mathbb{C}ov[f_i f_{i,x},f_j f_{j,y}] + \mathbb{C}ov[-f_i\mathbb{E}[f_i], f_j f_{j,y}] + \mathbb{C}ov[f_i f_{i,x},-f_j \mathbb{E}[f_j]]  \nonumber\\
    &\quad + \mathbb{E}[f_i f_j]^2 - \mathbb{E}[f_i]^2\mathbb{E}[f_j]^2 \nonumber
\end{align}
\begin{align}
    &= \mathbb{C}ov[f_i f_{i,x} - f_i \mathbb{E}[f_i], f_j f_{j,y} - f_j \mathbb{E}[f_j]]  - \mathbb{C}ov[f_i \mathbb{E}[f_i],f_j \mathbb{E}[f_j]] \nonumber\\
    &\quad + \mathbb{E}[f_i f_j]^2 - \mathbb{E}[f_i]^2\mathbb{E}[f_j]^2  \nonumber\\
    &= \mathbb{C}ov[f_i f_{i,x} - f_i \mathbb{E}[f_i], f_j f_{j,y} - f_j \mathbb{E}[f_j]] + \mathbb{E}[f_i f_j]^2 - \mathbb{E}[f_i f_j]\mathbb{E}[f_i]\mathbb{E}[f_j] \nonumber\\
    &= \mathbb{C}ov[f_i f_{i,x} - f_i \mathbb{E}[f_i], f_j f_{j,y} - f_j \mathbb{E}[f_j]] + \mathbb{C}ov[f_i,f_j]^2 + \mathbb{E}[f_i f_j]\mathbb{E}[f_i]\mathbb{E}[f_j] - \mathbb{E}[f_i]^2\mathbb{E}[f_j]^2 \nonumber\\
    &= \mathbb{C}ov[f_i f_{i,x} - f_i \mathbb{E}[f_i], f_j f_{j,y} - f_j \mathbb{E}[f_j]] + \mathbb{C}ov[f_i,f_j]^2 + \mathbb{C}ov[f_i \mathbb{E}[f_i],f_j \mathbb{E}[f_j]]. \nonumber \\
    &
    \label{eq:ac}
\end{align}
The final case to consider is when $a=d\neq b = c$
\begin{align}
    \mathbb{C}ov\left[ f_i^{(a)} f_{i,x}^{(a)} - \right.& \left. f_i^{(a)} f_i^{(b)} ,  f_j^{(c)} f_{j,y}^{(c)} - f_j^{(c)} f_j^{(d)}\right] =  \nonumber\\
    &\quad \mathbb{C}ov[f_i^{(a)} f_{i,x}^{(a)}, -f_j^{(c)} f_j^{(d)}] + \mathbb{C}ov[- f_i^{(a)} f_i^{(b)},f_j^{(c)} f_{j,y}^{(c)}] + \mathbb{C}ov[f_i^{(a)} f_i^{(b)},f_j^{(c)} f_j^{(d)}] \nonumber\\
    &= \mathbb{C}ov[f_i f_{i,x},-f_j\mathbb{E}[f_j]] + \mathbb{C}ov[-f_i\mathbb{E}[f_i],f_j f_{j,y}] + \mathbb{E}[f_i f_j]^2 - \mathbb{E}[f_i]^2\mathbb{E}[f_j]^2 \nonumber\\
    &= \mathbb{C}ov[f_i f_{i,x},-f_j\mathbb{E}[f_j]] + \mathbb{C}ov[-f_i\mathbb{E}[f_i],f_j f_{j,y}] + \mathbb{C}ov[f_i,f_j]^2\nonumber \\
    &\quad + 2\mathbb{E}[f_i f_j]\mathbb{E}[f_i]\mathbb{E}[f_j] - 2\mathbb{E}[f_i]^2\mathbb{E}[f_j]^2 \nonumber\\
    &= \scalemath{0.9}{\mathbb{C}ov[f_i f_{i,x},-f_j \mathbb{E}[f_j]] + \mathbb{C}ov[-f_i\mathbb{E}[f_i],f_j f_{j,y}] + \mathbb{C}ov[f_i,f_j]^2 + 2\mathbb{C}ov[f_i\mathbb{E}[f_i],f_j\mathbb{E}[f_j]]}.
    \label{eq:ad}
\end{align}
To combine all of these cases together,
\begin{align}
    \mathbb{C}ov &[\bold Q_{i,x}(\set{N}) ,
    \bold Q_{j,y}(\set{M})]] = \frac{1}{N^2 M^2}\sum_a^N\sum_b^N\sum_c^M\sum_d^M\mathbb{C}ov\left[ f_i^{(a)} f_{i,x}^{(a)} - f_i^{(a)} f_i^{(b)} ,  f_j^{(c)} f_{j,y}^{(c)} - f_j^{(c)} f_j^{(d)}\right] \\
    &= \frac{1}{N^2 M^2} [ (P(N-1)(M-1) - P(P-1))( \mathbb{C}ov[f_i f_{i,x} - f_i \mathbb{E}[f_i] ,  f_j f_{j,y} - f_j \mathbb{E}[f_j]]  \nonumber\\
    &\quad \scalemath{0.9}{+  \mathbb{C}ov[f_i f_{i,x} - f_i \mathbb{E}[f_i] , - f_j \mathbb{E}[f_j]] + \mathbb{C}ov[- f_i \mathbb{E}[f_i] ,  f_j f_{j,y} - f_j \mathbb{E}[f_j]] + \mathbb{C}ov[ - f_i \mathbb{E}[f_i] ,  - f_j \mathbb{E}[f_j]] )}  \nonumber\\
    &\quad + P(N-1)(\mathbb{C}ov[f_i f_{i,x} - f_i \mathbb{E}[f_i] ,  f_j f_{j,y} - f_j^2] + \mathbb{C}ov[ - f_i \mathbb{E}[f_i] ,  f_j f_{j,y} - f_j^2])  \nonumber\\
    &\quad + P(M-1)(\mathbb{C}ov[f_i f_{i,x} - f_i^2 ,  f_j f_{j,y} - f_j \mathbb{E}[f_j]] + \mathbb{C}ov[f_i f_{i,x} - f_i^2,  - f_j \mathbb{E}[f_j]])  \nonumber\\
    &\quad + P(P-1)( \mathbb{C}ov[f_i f_{i,x} - f_i\mathbb{E}[f_i], f_j f_{j,y} - f_j \mathbb{E}[f_j]] + \mathbb{C}ov[f_i,f_j]^2 + \mathbb{C}ov[f_i\mathbb{E}[f_i],f_j\mathbb{E}[f_j]]  \nonumber\\
    &\quad +  \mathbb{C}ov[f_i f_{i,x},-f_j\mathbb{E}[f_j]] + \mathbb{C}ov[-f_i\mathbb{E}[f_i],f_j f_{j,y}] + \mathbb{C}ov[f_i,f_j]^2 + 2\mathbb{C}ov[f_i\mathbb{E}[f_i],f_j\mathbb{E}[f_j]])  \nonumber\\
    &\quad + P \mathbb{C}ov[f_i f_{i,x} - f_i^2 ,  f_j f_{j,y} - f_j^2]] \nonumber \\
    &= \frac{1}{N^2 M^2} [ (P(N-1)(M-1) - P(P-1)) \mathbb{C}ov[f_i f_{i,x} - 2f_i \mathbb{E}[f_i] ,  f_j f_{j,y} - 2f_j \mathbb{E}[f_j]]  \nonumber\\
    &\quad \scalemath{0.9}{+ P(N-1)\mathbb{C}ov[f_i f_{i,x} - 2f_i \mathbb{E}[f_i] ,  f_j f_{j,y} - f_j^2]    + P(M-1)\mathbb{C}ov[f_i f_{i,x} - f_i^2 ,  f_j f_{j,y} - 2f_j \mathbb{E}[f_j]]}  \nonumber\\
    &\quad\quad\quad + P(P-1)( \mathbb{C}ov[f_i f_{i,x} - 2f_i\mathbb{E}[f_i], f_j f_{j,y} - 2f_j\mathbb{E}[f_j]] + 2\mathbb{C}ov[f_i,f_j]^2 )  \nonumber\\
    &\quad\quad\quad\quad + P \mathbb{C}ov[f_i f_{i,x} - f_i^2 ,  f_j f_{j,y} - f_j^2]] \nonumber \\
    &= \frac{P}{N^2 M^2} [ (N-1)(M-1)  \mathbb{C}ov[f_i f_{i,x} - 2f_i \mathbb{E}[f_i] ,  f_j f_{j,y} - 2f_j \mathbb{E}[f_j]]  \nonumber\\
    &\quad \scalemath{0.9}{+ (N-1)\mathbb{C}ov[f_i f_{i,x} - 2f_i \mathbb{E}[f_i] ,  f_j f_{j,y} - f_j^2]    + (M-1)\mathbb{C}ov[f_i f_{i,x} - f_i^2 ,  f_j f_{j,y} - 2f_j \mathbb{E}[f_j]]}  \nonumber\\
    &\quad\quad\quad + 2(P-1)\mathbb{C}ov[f_i,f_j]^2  + \mathbb{C}ov[f_i f_{i,x} - f_i^2 ,  f_j f_{j,y} - f_j^2]].\nonumber\\
    &
    \label{eq:me_result}
\end{align}
Equation \eqref{eq:me_result} is the covariance across two input variables, $x$ and $y$. We now stack the covariances to include all input indices, $x,y = 1,\ldots,I$, such that the covariances are matrices. Thus,
\begin{align}
    \mathbb{C}ov [\vec{Q}_{i}(\set{N}) ,
    \vec{Q}_{j}(\set{M})] &= \frac{P}{N^2 M^2} [ (N-1)(M-1)  \mat{R}_{ij}  + (N-1) \{\mat{S}^T \}_{ij}  \nonumber\\
    &\quad + (M-1)\mat{S}_{ij} + 2(P-1)\mat{U}_{ij}  + \mat{O}_{ij}].
\end{align}
Propositions \ref{MultiSobolVar} and \ref{MultiSobolHighLow} follow by using the above equation with different sets of input samples.

\section{Proof of Proposition \ref{VarSobolCov}, \ref{VarMEVarLow}, and \ref{VarMEVarHighLow}}
The covariance between the MC variance estimator and the main effect variance estimator is found
\begin{align}
    \mathbb{C}ov[\vec{Q}_{i,u}(\set{N}), \vec{Q}_{V,j}(\set{M})] &= \scalemath{0.9}{\mathbb{C}ov\left[\frac{1}{N^2}\sum_a^N \sum_b^N f_i^{(a)} f_{i,u}^{(a)} - f_i^{(a)} f_i^{(b)}, \frac{1}{2M(M-1)} \sum_c^M \sum_d^M (f_j^{(c)} - f_j^{(d)})^2 \right]} \\
    &\scalemath{0.9}{= \frac{1}{2M(M-1)N^2} \sum_a^N \sum_b^N \sum_c^M \sum_d^M \mathbb{C}ov[f_i^{(a)} f_{i,u}^{(a)} - f_i^{(a)} f_i^{(b)}, (f_j^{(c)} - f_j^{(d)})^2]}.
\end{align}
We consider 8 cases that result in nonzero covariance terms which can be seen in Table \ref{tab:SAVarcov},

\begin{table}[h]
    \begin{center}
    \caption{Covariances resulting from combinations of shared samples}
    \label{tab:SAVarcov}
    \begin{tabular}{|c | c | c |}
    \hline
    Shared Samples & Covariance & Occurrence Frequency \\
    \hline
         $a=c\neq b \neq d$ & $\mathbb{C}ov[f_i f_{i,u} - f_i \mathbb{E}[f_i], (f_j - \mathbb{E}[f_j])^2]$ & $P(N-1)(M-1) - P(P-1)$\\
         $a=d\neq b \neq c$ & $\mathbb{C}ov[f_i f_{i,u} - f_i \mathbb{E}[f_i], (f_j - \mathbb{E}[f_j])^2]$ & $P(N-1)(M-1) - P(P-1)$ \\
         $a \neq b = c \neq d$ & $\mathbb{C}ov[-f_i \mathbb{E}[f_i], (f_j - \mathbb{E}[f_j])^2]$ & $P(N-1)(M-1) - P(P-1)$\\
         $a \neq c \neq b = d$ & $\mathbb{C}ov[-f_i \mathbb{E}[f_i], (f_j - \mathbb{E}[f_j])^2]$ & $P(N-1)(M-1) - P(P-1)$ \\
         $a = b = c \neq d$ & $\mathbb{C}ov[f_i f_{i,u} - f_i^2, (f_j - \mathbb{E}[f_j])^2]$ & $P(M-1)$\\
         $a = b = d \neq c$ & $\mathbb{C}ov[f_i f_{i,u} - f_i^2, (f_j - \mathbb{E}[f_j])^2]$ & $P(M-1)$ \\
         $a = c \neq b = d$ & Eq. \eqref{eq:sa_ac} & $P(P-1)$\\
         $a = d \neq b = c$ & Eq. \eqref{eq:sa_ac} & $P(P-1)$ \\
    \hline
    \end{tabular}
\end{center}
\end{table}

Now, the following cases are not as straightforward as the results in the table. Consider $a=c \neq b = d$
\reqnomode
\begin{align}
    \mathbb{C}ov[f_i^{(a)} f_{i,u}^{(a)} &- f_i^{(a)} f_i^{(b)}, (f_j^{(a)} - f_j^{(b)})^2] = \mathbb{C}ov[f_i^{(a)} f_{i,u}^{(a)} - f_i^{(a)} f_i^{(b)}, f_j^{(a)2} + f_j^{(b)2} - 2f_j^{(a)}f_j^{(b)}] \nonumber\\
    &= \cov{f_i f_{i,u}, f_j^2} + \cov{f_i f_{i,u}, -2f_j \ee{f_j}} + 2\cov{-f_i \ee{f_i}, f_j^2} \nonumber \tag{A}\\
    &\quad  + \cov{-f_i^{(a)} f_i^{(b)}, -2f_j^{(a)}f_j^{(b)}}. \tag{B} \nonumber
\end{align}
\leqnomode
Now, consider (B)
\begin{align}
    \cov{-f_i^{(a)} f_i^{(b)}, -2f_j^{(a)}f_j^{(b)}} &= 2\ee{f_i^{(a)} f_i^{(b)} f_j^{(a)} f_j^{(b)}} - 2\ee{f_i^{(a)} f_i^{(b)}}\ee{f_j^{(a)} f_j^{(b)}} \nonumber \\
    &= 2\ee{f_i f_j}^2 - 2\ee{f_i}^2\ee{f_j}^2 \nonumber \\
    &= 2\cov{f_i,f_j}^2 + 4\cov{f_i\ee{f_i},f_j\ee{f_j}}.
\end{align}
By combining (A) and (B),
\begin{align}
    (A) + (B) &=  \cov{f_i f_{i,u}, f_j^2 - 2f_j\ee{f_j}} + \cov{-2f_i\ee{f_i}, f_j^2 - 2f_j\ee{f_j}} + 2\cov{f_i,f_j}^2 \nonumber \\
    &= \mathbb{C}ov[f_i f_{i,u}-2 f_i\mathbb{E}[f_i], (f_j-\mathbb{E}[f_j])^2] + 2\mathbb{C}ov[f_i,f_j]^2.
    \label{eq:sa_ac}
\end{align}
We note here that the argument is the same for $a=d \neq b = c$ with an identical result. These results are then combined
\begin{align}
    \mathbb{C}ov&[\vec{Q}_{i,u}(\set{N}), \vec{Q}_{V,j}(\set{M})] =  \scalemath{0.9}{\frac{1}{2M(M-1)N^2} \sum_a^N \sum_b^N \sum_c^M \sum_d^M \mathbb{C}ov[f_i^{(a)} f_{i,u}^{(a)} - f_i^{(a)} f_i^{(b)}, (f_j^{(c)} - f_j^{(d)})^2]} \nonumber\\
    &= \frac{1}{2M(M-1)N^2} \left[(2P(N-1)(M-1) - 2P(P-1))\mathbb{C}ov[f_i f_{i,u} - f_i \mathbb{E}[f_i], (f_j - \mathbb{E}[f_j])^2] \right. \nonumber \\
    &\quad - (2P(N-1)(M-1) - 2P(P-1))\mathbb{C}ov[f_i \mathbb{E}[f_i], (f_j - \mathbb{E}[f_j])^2]  \nonumber\\
    &\quad\quad + 2P(M-1) \mathbb{C}ov[f_i f_{i,u} - f_i^2, (f_j - \mathbb{E}[f_j])^2]  \nonumber\\
    &\quad\quad\quad + 2P(P-1)[ \mathbb{C}ov[f_i f_{i,u} - 2f_i \mathbb{E}[f_i], (f_j - \mathbb{E}[f_j])^2] + 2\mathbb{C}ov[f_i,f_j]^2]] \nonumber\\
    &= \frac{P(N-1)}{MN^2}\mathbb{C}ov[f_i f_{i,u} - 2 f_i \mathbb{E}[f_i], (f_j - \mathbb{E}[f_j])^2]  \nonumber\\
    &\quad + \frac{P}{MN^2}\mathbb{C}ov[f_i f_{i,u} - f_i^2, (f_j - \mathbb{E}[f_j])^2]   + \frac{2P(P-1)}{M(M-1)N^2}\mathbb{C}ov[f_i,f_j]^2. \nonumber \\
    &
\end{align}
Similarly, by taking its transpose,
\begin{align}
    \mathbb{C}ov[\vec{Q}_{V,j}(\set{M}), \vec{Q}_{i,u}(\set{N})] &= \frac{P(N-1)}{MN^2}\mathbb{C}ov[(f_j - \mathbb{E}[f_j])^2, f_i f_{i,u} - 2 f_i \mathbb{E}[f_i]]  \nonumber\\
    &\quad + \frac{P}{MN^2}\mathbb{C}ov[(f_j - \mathbb{E}[f_j])^2, f_i f_{i,u} - f_i^2]   + \frac{2P(P-1)}{M(M-1)N^2}\mathbb{C}ov[f_j,f_i]^2.
    \label{eq:mev_result}
\end{align}
Equation \eqref{eq:mev_result} defines the covariance for one input of interest. We now stack the covariances to include all input indices, $u = 1,\ldots,I$,
\begin{align}
    \mathbb{C}ov[\vec{Q}_{i,\vec{x}}(\set{N}), \vec{Q}_{V,j}(\set{M})] &=\frac{P(N-1)}{MN^2} {\vec{E}}_{ij} + \frac{P}{MN^2} {\vec{C}}_{ij} + \frac{2P(P-1)}{M(M-1)N^2} {\vec{U}}_{ij,0} \\
    \mathbb{C}ov[\vec{Q}_{V,j}(\set{M}), \vec{Q}_{i,\vec{x}}(\set{N})] &= \frac{P(N-1)}{MN^2} \{\vec{E}^T\}_{ji}  + \frac{P}{MN^2}\{\vec{C}^T\}_{ji}   + \frac{2P(P-1)}{M(M-1)N^2}\{\vec{U}_{j0,0}\}^T
\end{align}
where ${\vec{U}}_{ij,0} \in \reals^I$ is the first column of $\vec{U}_{ij}$. Now, Propositions \ref{VarMEVarLow} and \ref{VarMEVarHighLow} follow by using the above equations with different input sample sets.

\end{document}